\newcommand{\pder}[2]{\ensuremath{\frac{ \partial #1}{\partial #2}}}
\newtheorem{theorem}{Theorem}[section]
\newtheorem{definition}[theorem]{Definition}
\newtheorem{proposition}[theorem]{Proposition}
\newtheorem{corollary}[theorem]{Corollary}
\newtheorem{lemma}[theorem]{Lemma}
\newtheorem{example}[theorem]{Example}
\newtheorem{ass}[theorem]{Assumption}
\DeclareMathOperator{\Diff}{Diff}
\DeclareMathOperator{\Dens}{Dens}
\DeclareMathOperator{\OdeSolve}{OdeSolve}
\DeclareMathOperator{\Tr}{Tr}
\title{Qualitatively accurate spectral schemes for advection and transport}
\author{Henry O. Jacobs \& Ram Vasudevan}
\date{\today}
\begin{document}

\maketitle

\begin{abstract}
	The transport and continuum equations exhibit a number of conservation laws.
	For example, scalar multiplication is conserved by the transport equation, while
	positivity of probabilities is conserved by the continuum equation.
	Certain discretization techniques, such as particle based methods, conserve these properties, but converge slower than spectral discretization methods on smooth data.
	Standard spectral discretization methods, on the other hand, do not conserve the invariants of the transport equation and the continuum equation.
	This article constructs a novel spectral discretization technique that conserves these important invariants while simultaneously preserving spectral convergence rates. 
	The performance of this proposed method is illustrated on several numerical experiments. 
\end{abstract}

\section{Introduction}
\label{sec:intro}

Let $M$ be a compact $n$-manifold with local coordinates $x^{1},\dots,x^{n}$
and let $X$ be a smooth vector-field on $M$, whose local components are given by $(X^{1}(x), \dots, X^{n}(x) )$. 
This paper is concerned with the following pair of partial differential equations (PDEs):
\begin{align}
	\partial_{t} f + X^{i} \partial_{i} f = 0 \label{eq:function pde} \\
	\partial_{t} \rho + \partial_{i} (\rho X^{i}) = 0 \label{eq:density pde}
\end{align}
for a time dependent function $f$ and a time dependent density $\rho$ on $M$.
In the above PDEs we are following the Einstein summation convention, and summing over the index ``$i$.''
Equation \eqref{eq:function pde}, which is sometimes called the ``transport equation,'' describes how a scalar quantity is transported by the flow of $X$~\cite{Truesdell1991}.
Equation \eqref{eq:density pde}, which is sometimes called the ``continuum equation'' or ``Liouville's equation'' describes how a density (e.g. a probability distribution) is transported by the flow of $X$.
Such PDEs arises in a variety of contexts, ranging from  mechanics~\cite{Batchelor1999,Truesdell1991} to control theory~\cite{HenrionKorda2014}, and can be seen as zero-noise limits of the forward and backward Kolmogorov equations~\cite{Oksendal2003}.

The solution to \eqref{eq:function pde} takes the form $f(x;t) = f( (\Phi_{X}^{t})^{-1}(x) ; 0) \equiv (\Phi_{X}^{t})_{*} f(\cdot;0)$
where $\Phi_{X}^{t}:M \to M$ is the flow map of $X$ at time $t$ ~\cite[Chapter 18]{Lee2006}.
From this observe that \eqref{eq:function pde} exhibits a variety of conservation laws.
For example, if $f$ and $g$ are solutions to \eqref{eq:function pde},
then so is their product, $f \cdot g$, and their sum, $f+g$.
Similarly, the solution to \eqref{eq:density pde} takes the form $\rho(x;t) = | \det( D(\Phi_{X}^{t})^{-1}(x) ) | \rho( (\Phi_{X}^{t})^{-1}(x) ;0) := (\Phi_{X}^{t})_{*} \rho(\cdot;0)$.
One can deduce that the $L^{1}$-norm of $\rho(x;t)$ is conserved in time~\cite[Theorem 16.42]{Lee2006}.
Finally, \eqref{eq:density pde} is the adjoint evolution equation to \eqref{eq:function pde} in the sense that the integral $\langle f , \rho \rangle := \int f \rho$ is constant in time\footnote{To see this compute
$\frac{d}{dt} \langle f , \rho \rangle = \int ( \partial_{t} f ) \rho + f ( \partial_{t} \rho) $.  One finds that the final integral vanishes upon substitution of \eqref{eq:function pde} and \eqref{eq:density pde} and applying integration by parts.}.
This motivates the following definition of qualitative accuracy:

\begin{definition} \label{def:quality}
	    A numerical method for \eqref{eq:function pde} and \eqref{eq:density pde} is \emph{qualitatively accurate} if it conserves discrete analogs of scalar multiplication/addition, the $L^{1}$-norm and the total mass for densities and the sup-norm for functions.
\end{definition}

Both \eqref{eq:function pde} and \eqref{eq:density pde} can be numerically solved by a variety of schemes.
For a continuous initial condition, $f_0$, for example, the method of characteristics ~\cite{Evans2010,MTA} describes a solution to \eqref{eq:function pde} as a time-dependent function $f( x_{t} ;t) = f_{0}( x_{0} )$ where $x_{t}$ is the solution to $\dot{x} = X(x)$.
This suggests using a particle method to solve for $f$ at a discrete set of points~\cite{Leveque1992}.
 In fact, a particle method would inherit many discrete analogs of the conservation laws of \eqref{eq:function pde}, and would as a result be \emph{qualitatively accurate}.
For example, given the input $h_{0} = f_{0} \cdot g_{0}$, the output of a particle method is identical to the (componentwise) product of the outputs obtained from inputing $f_{0}$ and $g_{0}$ separately.
However, particle methods converge much slower than their spectral counterparts when the function $f$ is highly differentiable~\cite{Boyd2001,Gottlieb2001}.

In the case where $M$ is the unit circle, $S^{1}$, a spectral method can be obtained by converting \eqref{eq:density pde} to the Fourier domain where it takes the form
of an Ordinary Differential Equation (ODE):
$
	\frac{d \hat{\rho}_{k}}{dt} + 2\pi i k  \widehat{X}_{k-\ell} \hat{f}_{\ell} 
$
where $\hat{\rho}_{k}$ and $\widehat{X}_{k}$ denote the Fourier transforms of $\rho$ and $X$ ~\cite{Taylor1974}.
In particular, this transformation converts \eqref{eq:density pde} into an ODE on the space of Fourier coefficients.
A standard spectral Galerkin discretization is obtained by series truncation. 

Such a numerical method is good for $C^{k}$-data, in the sense that the convergence rate, over a fixed finite time $T>0$, is faster than $\mathcal{O}(N^{-k})$  where $N$ is the order of truncation~\cite{Boyd2001,Gottlieb2001,Gottlieb1977numerical}.
In particular, spectral schemes converge faster than particle methods when the initial conditions have some degree of regularity.
Unfortunately the spectral algorithm given above is not \emph{qualitatively accurate}, as is demonstrated by several examples in Section \ref{sec:numerics}.

The goal of this paper is
\emph{to find a numerical algorithm for \eqref{eq:function pde} and \eqref{eq:density pde} which is simultaneously stable, spectrally convergent, and qualitatively accurate.}

\subsection{Previous work}
Within mechanics, spectral methods for the continuum and transport equation are a common-place where they are viewed as special cases of first order hyperbolic PDEs~\cite{Boyd2001,Gottlieb2001}.
Various Galerkin discretizations of the Koopman operator\footnote{The Koopman operator is a linear operator ``$K$'' which yields the solution $f = K \cdot f_{0}$ to \eqref{eq:function pde}. We refer the reader to~\cite{BudisicMohrMezic2012} for a survey of recent applications.} have been successfully used for generic dynamic systems~\cite{BudisicMohrMezic2012,Mezic2005}, most notably fluid systems~\cite{Rowley2009} where such discretizations serve as a generalization of dynamic mode decomposition~\cite{Schmid2010}.
Dually, Ulam-type discretizations of the Frobenius-Perron operator~\cite{LasotaMackey1994,Ulam1947} have been used to find invariant manifolds of systems with uniform Gaussian noise~\cite{FroylandJungeKoltai2013,FroylandPadberg2009}.
In continuous time, Petrov-Galerkin discretization of the infinitesimal generator of the Frobenius Perron operator converge in the presence of noise ~\cite{BittracherKoltaiJunge2015} and preserve positivity in a Haar basis~\cite{koltai2011thesis}.

In this article, we consider a unitary representation of the diffeomorphisms of $M$ known to representation theorists~\cite{Ismagilov1975,VershilGelfandGraev1975} and quantum probability theorists~\cite{Meyer1998,Parthasarathy2012}.
To be more specific, we consider the action of diffeomorphisms on the Hilbert space of half-densities~\cite{BatesWeinstein1997,GuilleminSternberg1970}.
Half densities can be abstractly summarized as an object whose ``square'' is a density
or, alternatively, can be understood as a mathematician's nomenclature for a physicist's ``wave functions.''
One of the benefits of working with half-densities, over probability densities, is that the space of half-densities is a Hilbert space, while the space of probability densities is a convex cone~\cite{GuilleminSternberg1970}.
This tactic of inventing the square-root of an abstract object in order to simplify a problem has been used throughout mathematics.
The most familar example would be the invention of the complex numbers to find the roots of polynomials~\cite{Stewart2015}.
A more modern example within applied mathematics can be found in~\cite{Balci2011} where the (conic) space of positive semi-definite tensor fields which occur in non-Newtonian fluids is transformed into the (vector) space of symmetric tensors~\cite{Balci2011}.
Similarly, an alternative notion of half-densities is invoked in~\cite{Crane2013} to transform the mean-curvature flow PDE into a better behaved one.

\subsection{Main contributions}

In this paper we develop numerical schemes for \eqref{eq:function pde} and \eqref{eq:density pde}.
First, we derive an auxiliary PDE, \eqref{eq:half density pde}, on the space of half-densities in Section \ref{sec:half densities}.
We relate solutions of \eqref{eq:half density pde} to solutions of \eqref{eq:function pde} and \eqref{eq:density pde} in Theorem \ref{thm:quantize}.
Second, we pose an auxiliary spectral scheme for \eqref{eq:half density pde} in Section \ref{sec:discretization}.
Our auxiliary scheme induces numerical schemes for \eqref{eq:function pde} and \eqref{eq:density pde} via Theorem \ref{thm:quantize}.
Third, we derive a spectral convergence rate for our auxiliary scheme in Section \ref{sec:analysis}.
The spectral convergence rate for our auxiliary scheme induce spectral convergence rates for numerical schemes for \eqref{eq:function pde} and \eqref{eq:density pde}.
Finally, we prove our schemes are qualitatively accurate, as in Definition \ref{def:quality}, in Section \ref{sec:qualitative}.
We end the paper by demonstrating these findings in numerical experiments in Section \ref{sec:numerics}.
We observe our algorithm for \eqref{eq:density pde} to be superior to a standard spectral Galerkin discretization, both in terms of numerical accuracy and qualitative accuracy.

\subsection{Notation}
Throughout the paper $M$ denotes a smooth compact $n$-manifold without boundary. 
The space of continuous complex valued functions is denoted $C(M)$ and has a topology induced by the sup-norm, $\| \cdot \|_{\infty}$ (see~\cite{Taylor1974,Rudin1991,MTA,Conway1990}).
Given a Riemannian metric, $g$, the resulting Sobolev spaces on $M$ are denoted $W^{k,p}(M ; g)$ (see~\cite{Hebey1999}).
The tangent bundle to $M$ is denoted by $TM$, and the $n$th iterated Whitney sum is denoted by $\bigoplus^{n} TM$ (see~\cite{Lee2006,MTA}).
A (complex) density is viewed as a continuous map $\rho: \bigoplus^{n}TM \to \mathbb{C}$ which satisfies certain geometric properties which permit a notion of integration.
We denote the space of densities by $\Dens(M)$ and the integral of $\rho \in \Dens(M)$ is denoted by $\int \rho$~\cite[Chapter 16]{Lee2006}.
By completion of $\Dens(M)$ with respect to the norm $\| \rho \|_{1} := \int | \rho|$ we obtain a Banach space, $L^{1}(M)$.
We should note that $L^1(M)$ is homeomorphic to the space of distributions up to choosing a partition of unity of $M$.
Given a function $f \in C(M)$, we denote the multiplication of $\rho \in L^{1}(M)$ by $f \rho$, and we denote the dual-pairing by $\langle f , \rho \rangle := \int f \rho$.
We let $W^{s,1}$ denote the closed subspace of $L^{1}(M)$ whose elements exhibit $s>0$ weak derivative~\cite{Hormander2003}.

Given a separable Hilbert space $\mathcal{H}$ we denote the Banach-algebra of bounded operators by $B( \mathcal{H})$ and topological group of unitary operators by $U( \mathcal{H})$.
The adjoint of an operator $L : \mathcal{H} \to \mathcal{H}$ is denoted by $L^{\dagger}$.
The trace of a trace class operator, $L$, is denoted by $\Tr(L)$.
The commutator bracket for operators $A,B$ on $\mathcal{H}$ is denoted by $[A,B] := A \cdot B - B \cdot A$ (see~\cite{Conway1990}).

\section{Insights from Operator Theory}
\label{sec:operator theory}
Before we describe our algorithms, we take a moment to reflect on the virtue of pursuing qualitative accuracy.
If one knows that some entity is conserved under evolution, then one can reduce the search for solution by scanning a smaller space of possibilities.
For some, this might be justification enough to proceed as we are.
However, more can be said in this case.
The properties that are preserved have a special relationship to \eqref{eq:function pde} and \eqref{eq:density pde}.
It is a result known to algebraic geometers, at least implicitly, that algebra-preservation characterizes \eqref{eq:function pde} completely without any extra ``baggage."
In other words, \emph{the only linear evolution PDE that preserves the algebra of $C(M)$ is \eqref{eq:function pde}.}
As a corollary, the only linear evolution PDE on densities which preserves duality with functions is of the form \eqref{eq:density pde}.
Because of this fact, many nice properties held by \eqref{eq:function pde} and \eqref{eq:density pde} (such as bounds) are also be held by a qualitatively accurate integration scheme.
Therefore, qualitatively accurate schemes leverage the defining aspects of the \eqref{eq:function pde} and \eqref{eq:density pde} to produce numerical approximations with the same qualitative characteristics.

In the remainder of this section, we illustrate how \eqref{eq:function pde} is the unique PDE which preserves $C(M)$ as an algebra.
In the interest of space, we provide references in place of proofs.
To begin, recall the following definitions:
\begin{definition}[\cite{Conway1990}] \label{def:algebra}
	A \emph{Banach-algebra} is a Banach space, $A$, which is equipped with a multiplication-like operation, ``$(a,b) \in A \times A \mapsto ab \in A$,''  that is bounded, ``$\| ab \| \leq \|a \| \|b \|$,'' and associative ``$(ab)c = a(bc)$ for any $a,b,c \in A$.''
	
	A \emph{$C^{*}$-algebra} is a Banach algebra, $A$, over the field $\mathbb{C}$ with an involution, ``$a \in A \mapsto a^{*} \in A$,'' that satisfies:
	\begin{align}
		(ab)^{*} = b^{*} a^{*} \quad,\quad
		(\lambda a)^{*} = \bar{\lambda} a^{*} \quad,\quad
		\| a \| = \| a^{*}\|,
	\end{align}
	for all $a,b \in A$ and $\lambda \in \mathbb{C}$.
	$A$ is \emph{unital} if $A$ has a multiplicative identity.
	$A$ is \emph{commutative} if $ab = ba$ for all $a,b \in A$.
	
	Finally, a map $T:A \to A$ is called a $*$-automorphism if $T$ is a bounded linear automorphism that preserves products, i.e. $T(ab) = T(a) T(b)$.
	We denote the space of $*$-automorphisms of $A$ by $\operatorname{Aut}(A)$.
\end{definition}

The notion of a $C^{*}$-algebra may appear abstract so we provide two important examples:
\begin{example} \label{ex:function algebra}
	Let $X$ be a topological space.
	The space of complex valued continuous functions with compact support, $C_{0}(X)$, is a commutative $C^{*}$-algebra under the sup-norm and the standard addition/multiplication/conjugation operations of complex valued functions.
	If $X$ is compact, then $C_{0}(X) \equiv C(X)$ is unital because the constant function, ``$f(x) = 1$'' is a multiplicative identity.
\end{example}

\begin{example} \label{ex:nc algebra} 
	For a Hilbert space, $\mathcal{H}$, the space of bounded operators, $B(\mathcal{H})$, is a (non-commutative) $C^{*}$-algebra	under operator multiplication and addition, with the involution given by the adjoint mapping ``$L \mapsto L^{\dagger}$'', and the norm given by the operator-norm.
\end{example}

While the notion of a general $C^{*}$-algebra is, \textit{a priori}, more abstract than the examples above, this feeling of abstraction is an illusion.
One of the cornerstones of operator theory is that \emph{all} $C^{*}$-algebras are contained within these examples:

\begin{theorem}[Theorem 1~\cite{GelfandNaimark1943}] \label{thm:GN1}
	Any $C^{*}$-algebra is isomorphic to a sub-algebra of $B(\mathcal{H})$ for some Hilbert space, $\mathcal{H}$.
\end{theorem}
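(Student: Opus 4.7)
The plan is to prove this via the Gelfand--Naimark--Segal (GNS) construction, which converts positive linear functionals on $A$ into cyclic $*$-representations and then assembles them into a faithful one. Without loss of generality I may assume $A$ is unital, since the standard unitization of a non-unital $C^*$-algebra preserves the $C^*$-identity $\| a^*a \| = \|a\|^2$, and a faithful representation of the unitization restricts to one of $A$.

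The first step is to produce \emph{enough states}, i.e.\ positive linear functionals $\omega : A \to \mathbb{C}$ with $\omega(\mathbf{1})=\|\omega\|=1$, so that for every nonzero $a \in A$ some state $\omega$ satisfies $\omega(a^*a)>0$. For this I would work inside the unital abelian $C^*$-subalgebra $B$ generated by $a^*a$ and $\mathbf{1}$. By the commutative Gelfand theorem (Example~\ref{ex:function algebra} and the spectrum construction), $B \cong C(X)$ for a compact Hausdorff space $X$, and the equality between the algebraic spectral radius and the norm for self-adjoint elements guarantees a character $\chi$ of $B$ with $\chi(a^*a)=\|a^*a\|=\|a\|^2>0$. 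The Hahn--Banach theorem then extends $\chi$ to a norm-one functional on $A$, and a short argument using $\|\omega\|=\omega(\mathbf{1})$ shows the extension is automatically positive.

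Given any state $\omega$, the GNS construction is next. Define $N_\omega := \{a \in A : \omega(a^*a)=0\}$; the Cauchy--Schwarz inequality for positive functionals implies $N_\omega$ is a closed left ideal, and the sesquilinear form $\langle a,b\rangle_\omega := \omega(b^*a)$ descends to a genuine inner product on $A/N_\omega$. Let $\mathcal{H}_\omega$ be its Hilbert-space completion. Left multiplication by any $b\in A$ respects $N_\omega$ and extends to a bounded operator $\pi_\omega(b) \in B(\mathcal{H}_\omega)$; the bound $\|\pi_\omega(b)\|\le\|b\|$ follows from positivity of $\|b\|^2\mathbf{1}-b^*b$ and the continuous functional calculus. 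It is then routine to verify $\pi_\omega$ is a $*$-homomorphism.

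Finally, I would assemble the universal representation
\begin{equation*}
   \pi := \bigoplus_{\omega \in S(A)} \pi_\omega \;:\; A \longrightarrow B\bigl(\mathcal{H}\bigr), \qquad \mathcal{H}:=\bigoplus_{\omega \in S(A)}\mathcal{H}_\omega.
\end{equation*}
For any $a\neq 0$, Stage~1 supplies a state $\omega$ with $\omega(a^*a)>0$, so $\pi_\omega(a)[\mathbf{1}] \neq 0$ in $\mathcal{H}_\omega$, whence $\pi$ is injective. An injective $*$-homomorphism between $C^*$-algebras is automatically isometric, because $\|a\|^2 = \|a^*a\| = r(a^*a)$ and spectra can only shrink under a $*$-homomorphism but injectivity forces them to coincide. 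Thus $\pi(A)$ is a norm-closed $*$-subalgebra of $B(\mathcal{H})$ isomorphic to $A$. The main obstacle is Stage~1: producing enough states requires both the spectral-radius identity for self-adjoint elements (which in turn rests on the C*-identity) and a positivity-preserving Hahn--Banach extension; with these algebraic inputs in hand the rest is formal machinery.
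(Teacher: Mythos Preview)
Your argument via the GNS construction is the standard and correct proof of the Gelfand--Naimark theorem; the existence-of-states step, the left-ideal quotient, and the passage from injectivity to isometry are all handled properly. Note, however, that the paper does not supply its own proof of this statement: it is quoted as Theorem~1 of \cite{GelfandNaimark1943} and explicitly left unproved (``In the interest of space, we provide references in place of proofs''), so there is no in-paper argument to compare against. Your sketch is precisely the proof one would find upon following that reference or any standard functional-analysis text.
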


For commutative $C^{*}$-algebras a stronger result holds if one considers the space of character:
\begin{definition}[Space of Characters~\cite{Bondia2001}]
	A \emph{character} of a $C^{*}$-algebra, $A$, is an element of the dual space, $x \in A^{*}$, such that $x(ab) = x(a) x(b)$ for all $a,b \in A$.  
	We denote the space of characters by $X_{A}$.
	For each $a \in A$ there is a function $\hat{a}: X_{A} \to \mathbb{C}$ given by $\hat{a}(x) = x(a)$.
	The map $a \in A \mapsto \hat{a} \in C(X_{A})$ is called the \emph{Gelfand Transform}.
\end{definition}

\begin{proposition}[Lemma 1.1 ~\cite{Bondia2001}]
	$X_{A} \subset A^{*}$ is a compact Hausdorff space with respect to the relative topology if we impost the weak topology on $A^{*}$.
\end{proposition}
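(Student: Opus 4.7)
The plan is to verify the two conditions separately, leveraging the Banach--Alaoglu theorem for compactness. For the Hausdorff property, I would simply observe that the weak-$*$ topology on $A^{*}$ is already Hausdorff (evaluations at elements of $A$ separate points in $A^{*}$), so the relative topology on any subset $X_{A} \subset A^{*}$ is automatically Hausdorff.

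For compactness, the strategy is to realize $X_{A}$ as a weak-$*$ closed subset of the closed unit ball $B_{1} \subset A^{*}$, which is compact by Banach--Alaoglu. First I would establish the norm bound: a nonzero character $x \in X_{A}$ satisfies $\|x\| \leq 1$. This is standard and follows from $x(a)^{n} = x(a^{n})$ combined with $|x(a^{n})| \leq \|x\| \|a^{n}\| \leq \|x\| \|a\|^{n}$, so $|x(a)| \leq \|x\|^{1/n} \|a\|$ for all $n$, forcing $|x(a)| \leq \|a\|$. (In the unital case one gets $\|x\| = 1$; otherwise $\|x\| \le 1$ suffices.) Thus $X_{A} \subset B_{1}$.

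Next I would show $X_{A}$ is weak-$*$ closed in $A^{*}$. For each fixed pair $a,b \in A$, the map
\begin{equation*}
	\Phi_{a,b} : A^{*} \to \mathbb{C}, \quad x \mapsto x(ab) - x(a) x(b),
\end{equation*}
is continuous in the weak-$*$ topology, since evaluations at $a$, $b$, and $ab$ are weak-$*$ continuous and multiplication in $\mathbb{C}$ is continuous. Therefore $\Phi_{a,b}^{-1}(0)$ is closed, and
\begin{equation*}
	X_{A} = \bigcap_{a,b \in A} \Phi_{a,b}^{-1}(0)
\end{equation*}
is an intersection of closed sets, hence closed. Intersecting with the compact set $B_{1}$ gives that $X_{A}$ is a closed subset of a compact Hausdorff space, thus compact Hausdorff in the relative topology.

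I do not anticipate any serious obstacle here; the only subtle point is the norm estimate for characters, which requires the spectral-radius trick above rather than any abstract $C^{*}$ machinery, and (depending on convention) a minor care regarding whether the zero functional is included in $X_{A}$ in the non-unital setting. Neither affects the compactness argument, since adjoining $0$ to a closed set keeps it closed.
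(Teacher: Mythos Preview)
Your argument is correct and is the standard proof one finds in the cited reference: bound characters into the unit ball via the spectral-radius trick, exhibit $X_{A}$ as a weak-$*$ closed set cut out by the multiplicativity conditions, and invoke Banach--Alaoglu. The paper itself does not supply a proof of this proposition; it merely cites \cite{Bondia2001} (cf.\ the remark ``In the interest of space, we provide references in place of proofs''), so there is nothing to compare against beyond noting that your approach matches the classical one.
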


For example, if $A$ is a space of continuous complex functions on a manifold $M$, then $X_{A}$ is the space of Dirac-delta distributions, which is homeomorphic to $M$ itself.
The following is a Corollary to \ref{thm:GN1}:

\begin{theorem}[Lemma 1~\cite{GelfandNaimark1943}] \label{thm:GN2}
	Any commutative $C^{*}$-algebra, $A$, is canonically homeomorphic to $C(X_{A} )$.
\end{theorem}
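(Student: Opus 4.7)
The plan is to prove this via the Gelfand transform $\Gamma \colon A \to C(X_A)$, $a \mapsto \hat a$, and an application of the Stone--Weierstrass theorem. First I would verify that $\Gamma$ is a well-defined $*$-homomorphism. Continuity of each $\hat a$ on $X_A$ is automatic, since $X_A$ inherits the weak-$*$ topology and pointwise evaluation at $a$ is by definition weak-$*$ continuous. Linearity and multiplicativity of $\Gamma$ follow directly from the defining properties of a character. Preservation of the involution, $\widehat{a^*} = \overline{\hat a}$, requires a short spectral argument: characters send self-adjoint elements into $\mathbb{R}$ (their range lies in the spectrum $\sigma(a) \subset \mathbb{R}$ for self-adjoint $a$), and combining this with the decomposition of a general $a$ into self-adjoint and anti-self-adjoint parts yields the claim.

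The heart of the proof is showing that $\Gamma$ is an isometry. Here I would combine the spectral radius formula $r(a) = \lim_n \|a^n\|^{1/n}$ with the $C^*$-identity $\|a^* a\| = \|a\|^2$. On a commutative unital $C^*$-algebra the range of $\hat a$ coincides with $\sigma(a)$, so $\|\hat a\|_\infty = r(a)$. For self-adjoint $a$ iteration of the $C^*$-identity gives $r(a) = \|a\|$, and the general case reduces to it via $\|a\|^2 = \|a^* a\| = \|\widehat{a^* a}\|_\infty = \bigl\| |\hat a|^2 \bigr\|_\infty = \|\hat a\|_\infty^2$. Isometry at once yields injectivity and closedness of $\Gamma(A)$ inside $C(X_A)$.

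To conclude, I would apply Stone--Weierstrass to the image. $\Gamma(A)$ is a closed sub-$*$-algebra of $C(X_A)$ that contains the constants (as the image of the unit) and separates points of $X_A$ tautologically: if $x \neq y$, then some $a \in A$ satisfies $\hat a(x) = x(a) \neq y(a) = \hat a(y)$. The Stone--Weierstrass theorem then forces $\Gamma(A) = C(X_A)$, so $\Gamma$ is a surjective isometric $*$-isomorphism and, in particular, a homeomorphism. Canonicity is manifest since $\Gamma$ is defined purely in terms of the algebraic data of $A$.

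The main obstacle is establishing the isometry: this is the only step that genuinely uses the $C^*$-axiom beyond what a commutative Banach $*$-algebra would furnish, and it is where the spectral-theoretic content is concentrated. A secondary subtlety is the non-unital case, which I would dispatch by passing to the unitization $A^+$, identifying $X_{A^+}$ with the one-point compactification of $X_A$, and recovering the conclusion in the form $A \cong C_0(X_A)$.
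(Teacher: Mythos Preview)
Your argument is correct and is the standard proof of the commutative Gelfand--Naimark theorem. However, the paper does not supply its own proof of this statement: it is quoted as a classical result with a citation to \cite{GelfandNaimark1943} (and related material in \cite{Bondia2001}), explicitly noting ``In the interest of space, we provide references in place of proofs.'' There is therefore nothing in the paper to compare against; your sketch simply fills in what the paper leaves to the literature, and it does so along the same lines as the original source and standard textbook treatments (Gelfand transform, isometry via the spectral radius and the $C^*$-identity, surjectivity via Stone--Weierstrass).
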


The result of Theorem \ref{thm:GN2} is that all commutative $C^{*}$-algebras are effectively represented by Example \ref{ex:function algebra}.
Historically, Theorems \ref{thm:GN1} and \ref{thm:GN2} have been valued because they turn abstract $C^{*}$-algebras
(as described by the Definition \ref{def:algebra}) into Examples \ref{ex:function algebra} and \ref{ex:nc algebra}.
In this paper, we go the opposite direction.
We start with an evolution equation, \eqref{eq:function pde}, on the space $C(M)$ for a compact manifold $M$ and we transform it into an equation on a commutative sub-algebra of $B(\mathcal{H})$
for a suitably chosen Hilbert space, $\mathcal{H}$ by finding an embedding from $C(M)$ into $B(\mathcal{H})$.
That we seemingly transform ``concrete'' objects into ``abstract'' objects is one possible reason that the algorithms in this paper were not constructed earlier.
However, ``abstract'' does not necessarily imply difficult, with respect to numerics.
In fact, as this paper shows, it is easier to represent advection equations in this operator-theoretic form.
In essence, this is related to a corollary to Theorem \ref{thm:GN2}:

\begin{corollary}[follows from Corollary 1.7 of ~\cite{Bondia2001}]
	Let $M$ be a manifold.
	If $T: C_{0}(M) \to C_{0}(M)$ is $*$-automorphism then there is a unique homeomorphism $\Phi_{T} \in \Diff_{0}(M)$ such that $T[a](x) = a( \Phi_{T}(x) )$.
	That is, $\Diff_{0}(X_{A}) \equiv \operatorname{Aut}(A)$ as a topological group.
	Moreover, a linear evolution equation on $C^{0}(M)$ given by $\partial_{t} f + D[f] = 0$ for some differential operator, $D$,
	preserves  the algebra of $C(M)$ if and only if $D[f] = X^{i} \pder{f}{x^{i}}$ for some vector-field $X$.
	The dual operator is then necessarily of the form $D^{*}[\rho] = \pder{}{x^{i}}( \rho X^{i})$.
\end{corollary}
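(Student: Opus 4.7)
The corollary decomposes into three statements, each of which can be derived from Gelfand--Naimark duality (Theorem \ref{thm:GN2}) together with the classical characterization of derivations on a smooth manifold. The plan is to first transport the $*$-automorphism $T$ through the Gelfand transform to obtain a homeomorphism of the character space and identify it with a diffeomorphism via Theorem \ref{thm:GN2}; second, to pass to the infinitesimal generator to extract a Leibniz rule and conclude that $D$ must be a vector field; and finally, to integrate by parts to identify the dual operator $D^{*}$.

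For the automorphism--diffeomorphism correspondence, given $T \in \operatorname{Aut}(A)$ with $A = C_{0}(M)$, I would define $\Phi_{T} : X_{A} \to X_{A}$ by $\Phi_{T}(x)(a) := x(T(a))$. Multiplicativity of $T$ ensures that $\Phi_{T}(x)$ is itself a character, and $\Phi_{T}$ is a homeomorphism with inverse $\Phi_{T^{-1}}$. Under the canonical identification $X_{A} \equiv M$ provided by Theorem \ref{thm:GN2}, $\Phi_{T}$ becomes a homeomorphism of $M$, and unrolling the Gelfand transform yields $\widehat{T[a]}(x) = x(T(a)) = \Phi_{T}(x)(a) = \hat{a}(\Phi_{T}(x))$, i.e.\ $T[a] = a \circ \Phi_{T}$. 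Uniqueness of $\Phi_{T}$ is immediate because $C_{0}(M)$ separates points of $M$. The assignment $T \mapsto \Phi_{T}$ is a group anti-homomorphism by a direct check (one either passes to $T^{-1}$ or interprets composition contravariantly), and equipping each side with the compact-open and weak topologies gives the claimed topological group identification.

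For the characterization of algebra-preserving evolutions, suppose the semigroup $T_{t} := e^{-tD}$ generated by a differential operator $D$ lies in $\operatorname{Aut}(A)$. Differentiating the identity $T_{t}(fg) = T_{t}(f)\, T_{t}(g)$ at $t = 0$ against smooth $f,g$ yields the Leibniz rule $D[fg] = f\, D[g] + g\, D[f]$, so $D$ is a derivation on $C^{\infty}(M)$. The standard differential-geometric fact that every derivation of $C^{\infty}(M)$ is a smooth vector field then forces $D[f] = X^{i}\partial_{i} f$, and the converse follows by applying the first step along the flow of $X$. Finally, integration by parts on the compact boundaryless manifold $M$ gives
\begin{align}
\langle D[f], \rho \rangle \;=\; \int X^{i}(\partial_{i} f)\, \rho \;=\; -\int f\, \partial_{i}(\rho X^{i}),
\end{align}
so requiring $\tfrac{d}{dt}\langle f, \rho\rangle = 0$ along joint solutions of $\partial_{t} f + D[f] = 0$ and $\partial_{t}\rho + D^{*}[\rho] = 0$ forces $D^{*}[\rho] = \partial_{i}(\rho X^{i})$. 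The main obstacle is the passage from the abstract Leibniz rule to the pointwise local form $X^{i}\partial_{i}$: it relies on the classical local argument showing that a derivation annihilates smooth functions flat to second order at a point, which is substantially less formal than the remaining manipulations, both of which only unwind the definitions.
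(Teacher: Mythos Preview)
The paper does not supply its own proof of this corollary; it is stated with the attribution ``follows from Corollary 1.7 of~\cite{Bondia2001}'' and the text immediately resumes with commentary. Your argument is correct and is precisely the standard route one extracts from Gelfand--Naimark duality: pull the automorphism back to a self-map of the character space, differentiate multiplicativity of the semigroup to obtain the Leibniz rule, invoke the identification of derivations with vector fields, and integrate by parts for the dual. There is nothing to compare against here beyond the cited reference, and your sketch matches what that reference does.

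One small remark: you correctly flag the passage from the Leibniz rule to the local form $X^{i}\partial_{i}$ as the only nontrivial step. Since the corollary already hypothesizes that $D$ is a differential operator, you could shortcut the general derivation theorem by testing the Leibniz identity on products of coordinate functions to kill all higher-order and zeroth-order coefficients directly; this keeps the argument entirely elementary and avoids the flatness lemma you mention.
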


Said more plainly, \eqref{eq:function pde} and \eqref{eq:density pde} are the generators of all algebra-preserving automorphisms of $C(M)$.
Thus, conservation of sums of products is more than just a fundamental property of \eqref{eq:function pde} and \eqref{eq:density pde}.
\emph{Conservation of sums and products is the defining property of  \eqref{eq:function pde} and \eqref{eq:density pde}.}
As a result, it is natural for this to be reflected in a discretization\footnote{Note: By aiming for qualitative accuracy without sacrificing spectral convergence, we reduce the coefficient of convergence.
Therefore this pursuit makes sense from the standpoint of numerical accuracy as well.}.


\section{Half densities and other spaces}
\label{sec:half densities}

At the core of any Galerkin scheme, including spectral Galerkin, is the use of a Hilbert space upon which everything can be approximated via least squares projections.
The methods we present are no exception.
In this section, we define a canonical $L^{2}$-space associated to a compact manifold $M$, denoted by $L^{2}(M)$ for later use in a spectral discretization\footnote{
We urge the reader familiar with the space $L^{2}_{\mu}(M)$, with respect to some measure $\mu$, not read this section nonetheless.
The $L^{2}$-space we use is slightly different, and this fact permeates the entire article.}.
We also define the Sobolev spaces $H^{s}(M ; g)$ which arise from equipping $M$ with a Riemannian metric, $g$.

For a smooth compact $n$-manifold, $M$, let $\Dens(M)$ denote the space of smooth densities, which we view as anti-symmetric multilinear functions on $\bigoplus^n TM$.
\begin{definition}\label{def:half density}
	A half-density is a smooth complex-valued function $\psi : \bigoplus^n TM \to \mathbb{C}$
	such that $| \psi |^{2} \in \Dens(M)$.
	The space of half densities is denoted by $\sqrt{\Dens(M)}$.
\end{definition}

The following proposition immediately follows from this definition.
\begin{proposition}[see Appendix A ~\cite{BatesWeinstein1997}] \label{prop:half densities}
	If $\psi_{1}, \psi_{2} \in \sqrt{\Dens(M)}$ then the scalar product $\psi_{1} \cdot \psi_{2} : \bigoplus^{n} TM \to \mathbb{C}$ is a complex valued density. 
\end{proposition}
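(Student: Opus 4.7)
The plan is a direct fiberwise calculation, once the transformation law implicit in Definition \ref{def:half density} is made explicit. I would first recall that a smooth complex density on $M$ is characterized (at each point $x\in M$, and on each fiber $\bigoplus^n T_xM$) by the transformation rule $\rho_x(Av_1,\dots,Av_n) = |\det A|\,\rho_x(v_1,\dots,v_n)$ for every $A \in GL(n,\mathbb{R})$, combined with smoothness in $(v_1,\dots,v_n)$. The goal is then to extract an analogous transformation rule with exponent $1/2$ from the condition $|\psi|^2 \in \Dens(M)$, and then multiply.

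First I would unpack the hypothesis. Since $|\psi|^2$ is a density we have $|\psi(Av)|^2 = |\det A|\,|\psi(v)|^2$, hence $|\psi(Av)| = |\det A|^{1/2}\,|\psi(v)|$ after taking the non-negative real square root. To promote this from an identity on magnitudes to an identity on $\psi$ itself, I would use smoothness: on the open set where $\psi$ is nonzero, the phase $\psi/|\psi|$ is a smooth $S^1$-valued function, and $GL(n,\mathbb{R})^+$ is connected, so a continuous argument along a path from $I$ to $A$ (combined with the convention implicit in calling $\psi$ a ``half-density'', cf.\ Appendix~A of~\cite{BatesWeinstein1997}) fixes the phase so that $\psi(Av) = |\det A|^{1/2}\,\psi(v)$; on the zero set the identity is trivial, and the two cases patch by continuity of $\psi$.

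Once this pointwise transformation law is in place, the proof is a one-line computation. For any $x \in M$, basis $(v_1,\dots,v_n)$ of $T_xM$, and $A \in GL(n,\mathbb{R})$,
\begin{equation*}
(\psi_1 \cdot \psi_2)(Av_1,\dots,Av_n) = |\det A|^{1/2}\psi_1(v_1,\dots,v_n)\cdot|\det A|^{1/2}\psi_2(v_1,\dots,v_n) = |\det A|\,(\psi_1\cdot\psi_2)(v_1,\dots,v_n),
\end{equation*}
which is exactly the transformation rule for a density. Smoothness of $\psi_1\cdot\psi_2$ on $\bigoplus^n TM$ follows from smoothness of each factor and the fact that pointwise multiplication of smooth $\mathbb{C}$-valued functions is smooth. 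Therefore $\psi_1\cdot\psi_2 \in \Dens(M)$.

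The only non-routine part, and hence where I would place the main care, is the passage from $|\psi|^2 \in \Dens(M)$ to the sharp transformation law on $\psi$ itself rather than just on $|\psi|$; this is essentially a choice of square root of a line bundle, and is what makes $\sqrt{\Dens(M)}$ a well-defined object. Everything after that step is bookkeeping. If one prefers to sidestep the phase issue, one can instead take as the working definition the equivalent characterization of half-densities as smooth sections of the half-density line bundle $|\Lambda|^{1/2}(T^*M)$, in which case the proposition reduces to the canonical isomorphism $|\Lambda|^{1/2}\otimes|\Lambda|^{1/2}\cong|\Lambda|$ of line bundles, and the product $\psi_1\cdot\psi_2$ is a section of the density bundle by construction.
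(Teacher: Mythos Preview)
The paper gives no proof of this proposition; it only remarks that the statement ``immediately follows from this definition'' and cites \cite{BatesWeinstein1997}. Your write-up is therefore strictly more detailed than the paper's treatment, and your core computation---multiplying the two $|\det A|^{1/2}$ factors to obtain the $|\det A|$ transformation law of a density---is exactly the intended one-line argument.

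There is, however, a genuine soft spot in your first route. The step where you promote $|\psi(Av)| = |\det A|^{1/2}|\psi(v)|$ to $\psi(Av) = |\det A|^{1/2}\psi(v)$ by ``connectedness of $GL(n,\mathbb{R})^{+}$ plus smoothness'' does not go through as stated. Connectedness tells you the phase map $A \mapsto \psi(Av)/\bigl(|\det A|^{1/2}\psi(v)\bigr)$ is a continuous $S^{1}$-valued function equal to $1$ at $A=I$, but nothing forces it to be constant: on the frame bundle one can write $\psi(v) = |\det v|^{1/2}e^{i\theta(v)}$ for an arbitrary smooth phase $\theta$, and then $|\psi|^{2}$ is a density while $\psi$ fails the half-density transformation law (and $\psi_{1}\psi_{2}$ need not be a density). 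In other words, Definition~\ref{def:half density} taken at face value is weaker than the standard transformation-law definition, and the proposition is not literally a consequence of it without the additional convention you allude to.

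You already supply the clean fix: take as the working definition that half-densities are sections of the line bundle $|\Lambda|^{1/2}$, so that the product is a section of $|\Lambda|^{1/2}\otimes|\Lambda|^{1/2}\cong|\Lambda|$. This is precisely the setup in \cite{BatesWeinstein1997}, and the paper itself adopts the resulting transformation law \eqref{eq:transformation law}--\eqref{eq:local transformation law} immediately after the proposition. I would simply drop the connectedness argument and lead with the line-bundle formulation; the rest of your proof then stands without change.
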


This definition is an equivalent reformulation of the half densities defined in the context of geometric quantization (see~\cite[Chapter 4]{GuilleminSternberg1970} or ~\cite[Appendix A]{BatesWeinstein1997}).
In physical terms, half densities are a geometric manifestation of the wave functions used in quantum mechanics.
It is unfortunate that physicists call these ``wave-functions'' given that they are \emph{not} functions.
To test this assertion, observe how elements of $\sqrt{\Dens(M)}$ transform.
Under a $C^{1}$-automorphism, $\Phi: M \to M$, a half density $\psi \in \sqrt{ \Dens(M) }$ transform to a new half density $\Phi_{*}\psi$ according to the formula
\begin{align}
	(\Phi_{*}\psi)_{x}(v_{1},\dots,v_{n}) := \psi_{ \Phi^{-1}(x) } ( D\Phi^{-1}(x) \cdot v_{1} , \dots , D\Phi^{-1}(x) \cdot v_{n} ) \label{eq:transformation law}
\end{align}
for any $x \in M$ and any $v_{1}, \dots , v_{n} \in T_{x}M$.
This transformation law is inferred by substituting the transformation law for a density into the definition of a half-density.
In other words, this is the unique transformation law such that squaring both sides yields the transformation law for a density.
Notably, this is in contrast to the transformation law for functions, which sends $f \in C^{1}(M)$ to the function $f \circ \Phi^{-1}$.

In local coordinates, $x^{1},\dots,x^{n}$, on an open set $U \subset M$, it is common to write a (complex) density $\rho: \bigoplus^{n}TM \to \mathbb{C}$ as function ``$\rho(x)$'' for $x \in U \subset M$.
This convention is permissible as long as one realizes that what is really meant is that $\rho(x) = f_{\rho}(x) | dx^{1} \wedge \cdots \wedge dx^{n} |$ for some complex valued function $f_{\rho}:U \to \mathbb{C}$.
Therefore, when one writes  ``$\rho(x)$ transforms like $\rho( \Phi^{-1}(x)) \left| \det( D \Phi^{-1}(x) ) \right|$'', what they are really describing is how $f_{\rho}$ is transformed.
The same notational convention can be used to represent half-densities locally as ``functions'' with a different transformation law.
In this case the transformation law for half-densities is locally given by:
\begin{align}
	\tilde{\psi}(x) = \left| \det \left( D\Phi^{-1} (x) \right) \right|^{1/2} \psi \left( \Phi^{-1}(x) \right). \label{eq:local transformation law}
\end{align}

As $|\psi|^{2} \in \Dens(M)$ for any $\psi \in \sqrt{\Dens(M)}$, $|\psi|^{2}$ can be integrated and we observe that half densities are naturally equipped with the norm: $$\| \psi \|_2 :=  \left( \int_M |\psi|^2 \right)^{1/2}$$ which we call the \emph{$2$-norm}.
\begin{definition}
	$L^{2}(M)$ is defined as the completion of $\sqrt{ \Dens(M)}$ with respect to the $2$-norm.
	The space $L^{2}(M)$ is equipped with a complex inner-product given by
	\begin{align}
		\langle \psi \mid \phi \rangle = \int_{M} \bar \psi \phi
	\end{align}
	through polar decomposition, and so $L^{2}(M)$ is a Hilbert space.
\end{definition}

Lastly, given the transformation law for half-densities, \eqref{eq:transformation law} and \eqref{eq:local transformation law}, one can describe how half-densities are transported by the flow, $\Phi_{X}^{t}$, of the vector field, $X$. 
The Lie derivative of a half-density with respect to $X$ is defined as $\pounds_{X}[\psi] = - \left. \frac{d}{dt} \right|_{t=0} (\Phi_{X}^{t})_{*} \psi$ and is given in local coordinates by:
\begin{align}
	\pounds_{X}[\psi] = \frac{1}{2} X^{i} \pder{\psi}{x^{i}} + \frac{1}{2} \pder{}{x^{i}} \left( \psi X^{i} \right). \label{eq:representation}
\end{align}
The advection equation can then be written as:
\begin{align}
	\partial_{t} \psi + \pounds_{X}[\psi] = 0. \label{eq:half density pde}
\end{align}
Despite the Lie derivative being unbounded, a unique solution is defined for all time:
\begin{proposition}[Stone's Theorem ~\cite{Conway1990,Rudin1991}] \label{prop:stone}
	The unique solution to \eqref{eq:half density pde} is of the form $\psi(t) = U(t) \cdot \psi(0)$ where $U(t)$ is the one-parameter semigroup generated by the operator $\pounds_{X}$.
	Explicitly, $U(t)$ is the operator ``$(\Phi_{X}^{t})_{*}$'' in the sense that the solution to \eqref{eq:half density pde} is $\psi(t) = (\Phi_{X}^{t})_{*} \psi(0)$ where $\Phi_{X}^{t}$ is the time flow map of $X$ at time $t$.
\end{proposition}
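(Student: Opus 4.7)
The plan is to produce the unitary group $U(t)$ explicitly from the flow of $X$ and then invoke Stone's theorem to establish uniqueness. Compactness of $M$ and smoothness of $X$ guarantee that $\Phi_X^t \in \Diff(M)$ exists for all $t \in \mathbb{R}$, so the transformation law \eqref{eq:transformation law} gives a well-defined family of linear maps $U(t) : \sqrt{\Dens(M)} \to \sqrt{\Dens(M)}$ via $U(t)\psi := (\Phi_X^t)_* \psi$. The group law $U(t+s) = U(t) U(s)$ is inherited from $\Phi_X^{t+s} = \Phi_X^t \circ \Phi_X^s$ together with the functoriality of the pushforward. The first substantive step is to check that each $U(t)$ is a linear isometry of $\sqrt{\Dens(M)}$ with respect to $\|\cdot\|_2$: locally, \eqref{eq:local transformation law} together with the ordinary change-of-variables formula shows
\begin{align}
\|U(t) \psi\|_2^2 = \int_M |{\det D(\Phi_X^t)^{-1}}| \; |\psi \circ (\Phi_X^t)^{-1}|^2 = \int_M |\psi|^2 = \|\psi\|_2^2 .
\end{align}
Consequently $U(t)$ extends uniquely by continuity to a unitary operator on $L^2(M)$.

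Next I would establish strong continuity. For $\psi \in \sqrt{\Dens(M)}$ the map $t \mapsto U(t)\psi$ is continuous in the sup-norm on half-densities, because $(t,x) \mapsto \Phi_X^t(x)$ and its Jacobian are jointly continuous on a compact manifold; since $M$ is compact this implies continuity in $\|\cdot\|_2$ as well. Uniform boundedness $\|U(t)\| = 1$ plus density of $\sqrt{\Dens(M)} \subset L^2(M)$ then yield strong continuity of $t \mapsto U(t)$ on all of $L^2(M)$. Hence $U(t)$ is a strongly continuous one-parameter unitary group, and by Stone's theorem it has a unique skew-adjoint generator $A$ with dense domain $D(A)$.

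It remains to identify $A$ with (the closure of) $-\pounds_X$ and conclude uniqueness. The paper's definition $\pounds_X[\psi] = -\frac{d}{dt}\big|_{t=0} (\Phi_X^t)_* \psi$ is, a priori, a pointwise statement for smooth $\psi$; I would upgrade it to an $L^2$-limit by writing the difference quotient in local coordinates via \eqref{eq:local transformation law}, Taylor-expanding the Jacobian and the argument $(\Phi_X^{-t}(x))$ uniformly in $x$ (using compactness of $M$ and a finite atlas), and confirming convergence to the local formula \eqref{eq:representation} in $\|\cdot\|_2$. This shows $\sqrt{\Dens(M)} \subset D(A)$ and $A \psi = -\pounds_X \psi$ there. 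Because $\sqrt{\Dens(M)}$ is a dense subspace of $L^2(M)$ invariant under $U(t)$, a standard result (Nelson's theorem, or the core criterion for generators of $C_0$-groups) implies that $-\pounds_X$ restricted to $\sqrt{\Dens(M)}$ is essentially skew-adjoint with closure $A$. Therefore every $L^2$ solution of \eqref{eq:half density pde} coincides with $U(t)\psi(0)$, proving the claim.

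The main obstacle is precisely the $L^2$-differentiability statement in the last paragraph: turning the coordinate-wise formula for $\pounds_X$ into the statement that the difference quotient $t^{-1}(U(t)\psi - \psi)$ converges in $L^2$-norm to $-\pounds_X \psi$. Everything else reduces to the classical change-of-variables formula, functoriality of the pushforward, and the standard invocation of Stone's theorem.
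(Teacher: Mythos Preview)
Your proposal is correct and follows the same overall strategy as the paper: build the unitary group from the flow $(\Phi_X^t)_*$, verify it solves \eqref{eq:half density pde}, and appeal to Stone's theorem. The paper's version runs the argument in the other order---it first differentiates the invariance of $\int \bar{\psi}_1 \psi_2$ under pushforward to show $\pounds_X$ is skew-symmetric on $\sqrt{\Dens(M)}$, then exhibits $(\Phi_X^t)_*\psi(0)$ as the solution and invokes Stone's theorem for uniqueness. Your route is more careful on the one point the paper glosses over: the paper asserts a ``one-to-one correspondence between densely defined anti-Hermitian operators and one-parameter unitary groups,'' which as stated requires essential skew-adjointness, not mere skew-symmetry on a dense domain. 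You close this gap explicitly by noting that $\sqrt{\Dens(M)}$ is a dense $U(t)$-invariant domain and invoking the core criterion (Nelson's theorem), which is the right way to justify that $-\pounds_X$ really is the Stone generator. So the two proofs are morally the same, but yours is more rigorous on the functional-analytic point.
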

\begin{proof}
	By inspection we can observe that
	\begin{align*}
		 (\Phi_{X}^{t})_{*} ( \bar{\psi}_{1} \psi_{2}) ) = \overline{ (\Phi_{X}^{t})_{*} \psi_{1} } (\Phi_{X}^{t})_{*} \psi_{2}.
	\end{align*}
	By proposition \ref{prop:half densities}, $\bar{\psi}_{1} \psi_{2}$ is a density, and so we can integrate it.
	The integral of a density is invariant under $C^{1}$ transformations ~\cite[Proposition 16.42]{Lee2006} and we find
	\begin{align*}
		0 = \left. \frac{d}{dt} \right|_{t=0} \int_{M} (\Phi_{X}^{t})_{*} (\bar{\psi}_{1} \psi_{2}) = \left. \frac{d}{dt} \right|_{t=0} \int_{M} \left( \pounds_{X}[\bar{\psi}_{1} ] \psi_{2})  + \bar{\psi}_{1} \pounds_{X}[\psi_{2}] \right) \\
		= \langle \pounds_{X}[\psi_{1}]  \mid \psi_{2} \rangle + \langle \psi_{1} \mid \pounds_{X}[\psi_{2}] \rangle.
	\end{align*}
	Therefore, the operator, $\pounds_{X}$ is anti-Hermitian.
	We can see that $\pounds_{X}$ is densely defined, as it is well defined on $\sqrt{\Dens(M)}$, which is dense in $L^{2}(M)$ by construction.
	Stone's theorem implies that there is a one-to-one correspondence between densely defined anti-Hermitian operators on $L^{2}(M)$
	and one-parameter groups $U(t)$ consisting of unitary operators on $L^{2}(M)$.
	Observe that $\psi(t) = (\Phi_{X}^{t})_{*} \psi(0)$ solve \eqref{eq:half density pde} directly, by taking its time-derivative.
	Thus $U(t) =  (\Phi_{X}^{t})_{*} $ is the unique one-parameter subgroup we are looking for.
\end{proof}

\subsection{The relationship with classical $L^{2}$ spaces}
\label{sec:classical_Lebesgue}
To understand the relationship to classical Lebesgue spaces, recall that for any manifold $M$ (possibly non-orientable) one can assert the existence of a smooth non-negative reference density $\mu$~\cite[Chapter 16]{Lee2006}.
Upon choosing such a $\mu \in \Dens(M)$, the $2$-norm of a continuous complex function $f:M \to \mathbb{C}$ with respect to $\mu$ is
\begin{align}
	\| f \|_{\mu,2} =  \left( \int_M |f|^2 \mu \right)^{1/2}.
\end{align}
and $L^2(M ; \mu)$ is the completion of the space of continuous functions with respect to this norm.
The relationship between $L^{2}(M)$ and $L^{2}(M;\mu)$ is that they are equivalent as topological vector-spaces:
\begin{proposition} \label{prop:non canonical}
	Choose a non-vanishing positive density $\mu : \bigoplus^{n}TM \to \mathbb{R}^{+}$.
	Let $\sqrt{\mu}$ denote the square root of $\mu$\footnote{Explicitly, if $\sqrt{\cdot}: \mathbb{R}^{+} \to \mathbb{R}^{+}$ is the standard square-root function.
	Then $\sqrt{\mu} := \sqrt{\cdot} \circ \mu : \bigoplus^{n} TM \to \mathbb{R}^{+} \subset \mathbb{C}$ is the half-density which we are considering.}.
	For any $\psi \in L^2(M)$ there exists a unique $f \in L^2(M ; \mu)$ such that $\psi = f \cdot  \sqrt{\mu}$.
	This yields an isometry between $L^2(M)$ and $L^2(M ; \mu)$.
\end{proposition}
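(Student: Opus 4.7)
The plan is to construct the map explicitly on dense subspaces, verify it is an isometry there, and extend by continuity, with the nontriviality concentrated in the fact that the ratio of two half-densities is genuinely a function on $M$.

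First, I would define a map $\Psi : C(M) \to \sqrt{\Dens(M)}$ by $\Psi(f) := f \cdot \sqrt{\mu}$. The pointwise product of a continuous complex function with the positive half-density $\sqrt{\mu}$ is a map $\bigoplus^n TM \to \mathbb{C}$; its modulus squared equals $|f|^2 \mu$, which is a (non-negative complex) density since $|f|^2$ is a continuous non-negative function and $\mu \in \Dens(M)$. Hence $\Psi(f) \in \sqrt{\Dens(M)}$ by Definition~\ref{def:half density}. The key computation is then the isometry identity
\begin{align*}
\| \Psi(f) \|_2^2 \;=\; \int_M |f \cdot \sqrt{\mu}|^2 \;=\; \int_M |f|^2 \, \mu \;=\; \| f \|_{\mu,2}^2,
\end{align*}
so $\Psi$ is a norm-preserving linear map from $C(M)$ (with the $\mu$-weighted $2$-norm) into $\sqrt{\Dens(M)} \subset L^2(M)$.

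Next, I would extend $\Psi$ to a linear isometry $\bar\Psi : L^2(M;\mu) \to L^2(M)$ by density. Since $C(M)$ is dense in $L^2(M;\mu)$ and $\Psi$ is an isometry into the Hilbert space $L^2(M)$, the standard BLT (bounded linear transformation) argument yields a unique continuous isometric extension. What remains is surjectivity. For this, I would first observe that on $\sqrt{\Dens(M)}$ one may divide: given $\psi \in \sqrt{\Dens(M)}$, define $f := \psi / \sqrt{\mu}$ pointwise. Because $\mu$ is strictly positive, $\sqrt{\mu}$ is a non-vanishing real half-density, so this ratio is well-defined.

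The main obstacle — and the only genuinely geometric step — is verifying that $f = \psi/\sqrt{\mu}$ is an honest complex function on $M$, rather than a half-density-like object. I would check this using the local transformation law \eqref{eq:local transformation law}: under a change of coordinates $\Phi$, both $\psi$ and $\sqrt{\mu}$ pick up the factor $|\det D\Phi^{-1}(x)|^{1/2}$, and these factors cancel in the quotient, so $\tilde f(x) = f(\Phi^{-1}(x))$, which is exactly the transformation law for a scalar function. Continuity of $f$ follows from continuity of $\psi$ and non-vanishing continuity of $\sqrt{\mu}$ on the compact manifold $M$. Then $\Psi(f) = f \cdot \sqrt{\mu} = \psi$, showing $\sqrt{\Dens(M)} \subset \operatorname{Ran}(\Psi)$. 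Since $\sqrt{\Dens(M)}$ is dense in $L^2(M)$ by construction and $\operatorname{Ran}(\bar\Psi)$ is closed (being the image of a complete space under an isometry), $\bar\Psi$ is surjective. Uniqueness of the representation $\psi = f \cdot \sqrt{\mu}$ follows immediately from injectivity of an isometry. This establishes both claims of the proposition.
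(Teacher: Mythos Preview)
Your proof is correct and follows essentially the same approach as the paper: construct the bijection $f \leftrightarrow f\sqrt{\mu}$ on the dense smooth/continuous level, verify the norm identity $\|f\sqrt{\mu}\|_2 = \|f\|_{\mu,2}$, and pass to the completions. The only cosmetic difference is that the paper recovers $f$ from $\psi$ by first writing $\psi^2 = g\mu$ and then taking a square root, whereas you divide directly and use the transformation law \eqref{eq:local transformation law} to confirm that $\psi/\sqrt{\mu}$ is a genuine scalar function; your route is slightly cleaner since it sidesteps the ambiguity of square roots of complex functions.
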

\begin{proof}
	It suffices to prove that $\sqrt{\Dens(M)}$ is isomorphic to the space of square integrable (w.r.t. $\mu$) continuous functions, because
	the later space is dense in $L^{2}(M;\mu)$.
	Let $\psi \in \sqrt{\Dens(M)}$.  Then $\psi^2$ is a continuous density and there exists a unique function $g \in C^{0}(M)$ such that $\psi^{2} = g \cdot \mu$.
	By taking the square root of both sides we can obtain a unique function $f \in C^{0}$ such that $\psi = f\, \sqrt{\mu}$.
	The function $f$ is unique with respect to $\psi$ and
	the map $\psi \in \sqrt{\Dens(M)} \mapsto f \in C^0(M ; \mathbb{C} )$ sends $\| \cdot \|_{2}$ to $\| \cdot \|_{\mu,2}$ by construction.
	Thus the map is continuous.
	The inverse of the map is given by $f \in C^{0}(M;\mathbb{C}) \mapsto f \, \sqrt{\mu} \in \sqrt{\Dens(M)}$.
\end{proof}

If the spaces are nearly identical the reader may wonder why $L^2(M)$ matters.
In fact, the pair are not identical in all aspects.
As described earlier, under change of coordinates or advection, the elements of each space transform differently.
More importantly, $L^{2}(M)$ is \emph{not} canonically contained within the space of square integrable functions, and functions and densities are \emph{not} contained in $L^{2}(M)$.
Such an embedding may only be obtained by choosing a non-canonical ``reference density'', as in Proposition \ref{prop:non canonical}.
This has numerous consequences in terms of what we can and can not do.
For example, an operator with domain on $L^{2}(M)$ can not generally be applied to objects in $L^{1}(M)$ in the same way.
These limitations can be helpful, since they permit vector fields to act differently on objects in $L^{1}(M)$ than on objects in $L^{2}(M)$.
These prohibitions serve as safety mechanisms, analogous to the use of overloaded functions in object oriented programs, which due to their argument type distinctions, effectively banish certain bugs from arising.


\subsection{Sobolev spaces}
\label{sec:Sobolev spaces}


While the ``canonicalism" of $L^{2}(M)$ is useful for this discussion, the \emph{canonical} Sobolev spaces are not.
Since the algorithms proposed in this paper are proven to converge in a Sobolev space, we must still choose a norm and we rely upon traditional metric dependent definitions.  
To begin, equip $M$ with a Riemannian metric $g:TM \oplus TM \to \mathbb{R}$.
The metric, $g$, induces a positive density $\mu_g$, known as the \emph{metric density} and an inner-product on $C^\infty(M)$
given by:
\begin{align}
	\langle f_1 , f_1 \rangle_{g} = \int \overline{f_1} \cdot f_2 \mu_g.
\end{align}
The metric also induces an elliptic operator, known as the Laplace-Beltrami operator $\Delta: C^{\infty}(M) \to C^{\infty}(M)$, which is negative-semidefinite (i.e. $\int f \, \Delta f \, \mu_{g} \leq 0$ for all $f \in C^{2}(M)$).
If $M$ is compact, then $L^2(M ; \mu_g) \cong L^2(M)$ is a separable Hilbert space and the Helmholtz operator, $1 - \Delta$, is a positive definite operator with a discrete spectrum~\cite{Taylor1974}.
For any $s \geq 0$ we may define the \emph{Sobolev norm}:
\begin{align}
	\| \psi \|_{s,2} =  \left( \langle f , (1-\Delta)^s \cdot  f \rangle_{g} \right)^{1/2}
\end{align}
where $f$ and $\psi$ are related by $\psi = f \sqrt{\mu_{g}}$.
Then  we define $H^s(M ;g)$ as the completion of $\sqrt{\Dens(M)}$ with respect to the $\| \cdot \|_{s,2}$ norm.  
Such a definition is isomorphic, in the category of topological vector-spaces, to the one provided in~\cite{Hebey1999}.
In order to prove this claim, observe that it holds for bounded sets in $\mathbb{R}^{n}$, and then apply a partition of unity argument to obtain the desired equivalence on manifolds.
In particular, note that $H^0(M;g) = L^2(M)$.  It is notable that $H^{s}(M;g)$ as a topological vector-space is actually not metric dependent ~\cite[Proposition 2.2]{Hebey1999}.
However, the norm $\| \cdot \|_{s,2}$ is metric dependent.

\begin{proposition}[Sobelev Embedding Theorem ~\cite{Taylor1974}] \label{prop:compact_embedding}
	Let $(M,g)$ be a compact Riemmanian manifold.  If $s > t \geq 0$ then $H^s(M;g)$ is compactly embedded within $H^t(M,g)$.
\end{proposition}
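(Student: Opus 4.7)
The plan is to exploit the spectral decomposition of the Helmholtz operator $1-\Delta$, which the paper already invokes: since $(M,g)$ is compact, $1-\Delta$ is a positive self-adjoint elliptic operator on $L^{2}(M)$ with a discrete spectrum $0 < \lambda_{1} \le \lambda_{2} \le \cdots$ and $\lambda_{k} \to \infty$, together with an $L^{2}$-orthonormal basis $\{\phi_{k}\}$ of eigenhalfdensities, $(1-\Delta)\phi_{k} = \lambda_{k}\phi_{k}$. In this basis, the Sobolev norms defined in the excerpt take the diagonal form
\begin{align*}
\|\psi\|_{s,2}^{2} = \sum_{k} \lambda_{k}^{s} |c_{k}|^{2}, \qquad c_{k} = \langle \phi_{k} \mid \psi \rangle,
\end{align*}
so that $H^{s}(M;g)$ is unitarily isomorphic to the weighted $\ell^{2}$-space $\ell^{2}(\lambda_{k}^{s})$ and the inclusion $H^{s} \hookrightarrow H^{t}$ becomes the natural embedding $\ell^{2}(\lambda_{k}^{s}) \hookrightarrow \ell^{2}(\lambda_{k}^{t})$.

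First I would verify that the inclusion is bounded: since $\lambda_{k} \ge \lambda_{1} > 0$ and $s > t$, one has $\lambda_{k}^{t} \le \lambda_{1}^{t-s} \lambda_{k}^{s}$, giving $\|\psi\|_{t,2} \le \lambda_{1}^{(t-s)/2}\|\psi\|_{s,2}$. Next, for each integer $N \ge 1$ let $P_{N}:H^{s}\to H^{t}$ denote the orthogonal projection onto $\operatorname{span}\{\phi_{1},\dots,\phi_{N}\}$ (the same finite dimensional subspace in both norms). Each $P_{N}$ is finite rank, hence compact as an operator $H^{s}\to H^{t}$. I would then estimate the remainder: for any $\psi = \sum_{k}c_{k}\phi_{k}\in H^{s}(M;g)$,
\begin{align*}
\|(I-P_{N})\psi\|_{t,2}^{2} = \sum_{k>N} \lambda_{k}^{t}|c_{k}|^{2} = \sum_{k>N} \lambda_{k}^{t-s}\lambda_{k}^{s}|c_{k}|^{2} \le \lambda_{N+1}^{t-s}\|\psi\|_{s,2}^{2}.
\end{align*}
Because $\lambda_{N+1}\to \infty$ and $t-s<0$, the operator norm satisfies $\|I-P_{N}\|_{H^{s}\to H^{t}} \le \lambda_{N+1}^{(t-s)/2} \to 0$.

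Finally I would conclude by invoking the standard fact that the set of compact operators between Banach spaces is norm-closed in the space of bounded operators (see, e.g., Conway). Since $P_{N}$ is compact for every $N$ and $P_{N}\to I$ in the operator norm of $B(H^{s},H^{t})$, the inclusion $I:H^{s}(M;g)\hookrightarrow H^{t}(M;g)$ is compact, which is exactly the claim. The main potential obstacle is the justification that $1-\Delta$ really does admit such a basis on $\sqrt{\Dens(M)}$ after the half-density/function identification of Proposition \ref{prop:non canonical}; this reduces to the classical spectral theorem for the Laplace–Beltrami operator on the compact Riemannian manifold $(M,g)$, which is cited in the excerpt via \cite{Taylor1974}, so no new analytic ingredient is needed.
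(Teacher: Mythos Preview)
Your proposal is correct and follows essentially the same approach as the paper: both diagonalize the inclusion $H^{s}\hookrightarrow H^{t}$ via the eigenbasis of the Laplace--Beltrami (Helmholtz) operator and conclude compactness from the decay $(1+\lambda_{k})^{-(s-t)}\to 0$. The only cosmetic difference is that the paper cites a result from Conway characterizing compact diagonal operators, whereas you unpack that citation by explicitly exhibiting the finite-rank approximants $P_{N}$ and showing $\|I-P_{N}\|\to 0$.
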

\begin{proof}
	Let $e_0, e_1,\dots$ be the Hilbert basis for $L^2(M;\mu_g)$ which diagonalizes $\Delta$
	in the sense that $\Delta e_i = \lambda_i e_i$ for a sequence $0 = \lambda_0 \leq \lambda_1 \leq \lambda_2 \leq \cdots$.
	The operator $(1+\Delta)^s$ is given by
	\begin{align}
		(1+\Delta)^s \cdot f =  \sum_{i} e_i (1+\lambda_i)^s \langle e_i , f \rangle_g,
	\end{align}
	and so $\{ (1+ \lambda_i)^{-s} e_i \}_{i=1}^{\infty}$ is a Hilbert basis for $H^s(M;g)$.
	
	Let us call $e_i^{(s)} = (1+ \lambda_i)^{-s} e_i$.
	The embedding of $H^s(M;g)$ into $H^t(M,g)$
	is then given in terms of the respective basis elements by $e_i^{(s)} \mapsto (1+\lambda_i)^{-(s-t)}e_i^{(t)}$.
	As $s > t$ and $\lambda_i \to +\infty$, we see that 
	this embedding is a compact operator ~\cite[Proposition 4.6]{Conway1990}.
\end{proof}


\section{Quantization} \label{sec:quantization}
In physics, ``quantization'' refers to the process of substituting certain physically relevant functions with operators on a Hilbert space, while attempting to preserve the symmetries and conservation laws of the classical theory~\cite{BatesWeinstein1997,Dirac2013,GuilleminSternberg1970}.
In this section, we quantize \eqref{eq:function pde} and \eqref{eq:density pde} by replacing functions and densities with bounded and trace-class operators on $L^{2}(M)$.
This is useful in Section \ref{sec:discretization} when we discretize.

To begin, let us quantize the space of continuous real-valued functions $C(M)$.
For each $f \in C(M)$, there is a unique bounded Hermitian operator, $H_{f} : L^{2}(M) \to L^{2}(M)$ given by scalar multiplication.
That is to say $(H_{f} \cdot \psi) (x) = f(x) \psi(x)$ for any $\psi \in L^{2}(M)$.
By inspection one can observe that the map ``$f \mapsto H_{f}$'' is injective and preserves the algebra of $C(M)$ because $H_{f\cdot g + h} = H_{f} \cdot H_{g} + H_{h}$ and $\| H_{f} \|_{op} = \| f \|_{\infty}$.

Similarly, (and in the opposite direction) for any trace class operator $A$ there is a unique distribution $\rho_{A} \in C(M)'$ such that: 
\begin{align}
	 \int f \, \rho_{A} = \Tr ( H_{f}^{\dagger} \cdot A )
\end{align}
for any $f \in C(M)$.
More generally, for any $A$ in the dual-space $B( L^{2}(M) )^{*}$, there is a $\rho_{A} \in C(M)'$ such that $\langle f , \rho_{A} \rangle = \langle H_{f} , A \rangle$.
The map ``$A \mapsto \rho_{A}$'' is merely the adjoint of the injection ``$f \mapsto H_{f}$''. Therefore ``$A \mapsto \rho_{A}$'' is surjective.

We can now convert the evolution PDEs \eqref{eq:function pde} and \eqref{eq:density pde} into ODEs of operators on $L^{2}(M)$.

\begin{theorem} \label{thm:quantize}
	Let $X(t) \in \mathfrak{X}(M)$ be a time-dependent vector-field.
	Then $f$ satisfies \eqref{eq:function pde}
	if and only if $H_{f}$ satisfies
	\begin{align}
		\frac{d H_{f} }{dt} + [ H_{f} , \pounds_{X} ] = 0. \label{eq:quantum observable ode}
	\end{align}
	If $A$ is trace-class and satisfies
	\begin{align}
		\frac{dA}{dt} + [ A , \pounds_{X} ] = 0, \label{eq:quantum density ode}
	\end{align}
	then $\rho_{A}$ satisfies \eqref{eq:density pde}.
	Finally, if $\psi$ satisfies \eqref{eq:half density pde}, then $\rho_{A} = \psi^{2}$ satisfies \eqref{eq:density pde} and $\psi \otimes \psi^{\dagger}$ satisfies \eqref{eq:quantum density ode}.
\end{theorem}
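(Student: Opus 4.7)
The plan is to treat the three assertions in sequence, each reducing to a single algebraic identity followed by a short direct computation. The key input, underlying Parts 1 and 2, is a Leibniz-type identity for $\pounds_{X}$ acting on a function-times-half-density product, which I would derive by pointwise expansion and then translate into an operator commutator identity on $L^{2}(M)$.

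\textbf{Parts 1 and 2.} The main computation is that $\pounds_{X}(f\psi) = (X^{i}\partial_{i} f)\psi + f\pounds_{X}\psi$ for any $f \in C^{1}(M)$ and $\psi \in \sqrt{\Dens(M)}$; this follows by substituting into the local representation \eqref{eq:representation} and collecting terms. Interpreted on the dense domain $\sqrt{\Dens(M)} \subset L^{2}(M)$, it reads $[\pounds_{X}, H_{f}] = H_{X^{i}\partial_{i} f}$, which is the algebraic content we need. For Part 1, linearity of $f \mapsto H_{f}$ gives $\tfrac{d}{dt}H_{f} = H_{\partial_{t} f}$, so that \eqref{eq:quantum observable ode} reduces to $H_{\partial_{t} f + X^{i}\partial_{i} f} = 0$; the injectivity of $f \mapsto H_{f}$ then yields the equivalence with \eqref{eq:function pde}. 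For Part 2, pair both sides of \eqref{eq:quantum density ode} against $H_{g}$ for an arbitrary real $g \in C(M)$, using $\langle g, \rho_{A}\rangle = \Tr(H_{g} A)$. Differentiating in time, substituting the operator ODE, and applying trace cyclicity together with the commutator identity yields $\tfrac{d}{dt}\langle g, \rho_{A}\rangle = \langle X^{i}\partial_{i} g, \rho_{A}\rangle$, which is precisely the weak form of \eqref{eq:density pde} one obtains by testing against $g$ and integrating by parts. Hence $\rho_{A}$ satisfies the continuum equation distributionally.

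\textbf{Part 3.} That $\pounds_{X}$ is a derivation on the pointwise product of half-densities, i.e.\ $\pounds_{X}(\psi_{1}\psi_{2}) = (\pounds_{X}\psi_{1})\psi_{2} + \psi_{1}\pounds_{X}\psi_{2}$, follows from the multiplicativity of the pushforward noted in Proposition~\ref{prop:half densities}. If $\psi$ solves \eqref{eq:half density pde}, then $\partial_{t}\psi^{2} = 2\psi\partial_{t}\psi = -2\psi\pounds_{X}\psi = -\pounds_{X}(\psi^{2})$; since the Lie derivative of a density is $\pounds_{X}\rho = \partial_{i}(\rho X^{i})$, the density $\psi^{2}$ satisfies \eqref{eq:density pde}. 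For the rank-one operator $A = \psi \otimes \psi^{\dagger}$, direct differentiation gives $\tfrac{d}{dt}A = \dot\psi \otimes \psi^{\dagger} + \psi \otimes \dot\psi^{\dagger}$; using $\dot\psi = -\pounds_{X}\psi$ together with the anti-Hermiticity of $\pounds_{X}$ (so that $\dot\psi^{\dagger} = \psi^{\dagger}\pounds_{X}$ as a linear functional) one obtains $\tfrac{d}{dt}A = -\pounds_{X}A + A\pounds_{X} = [A, \pounds_{X}]$, matching \eqref{eq:quantum density ode}.

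\textbf{Main obstacle.} The principal technical subtlety is that $\pounds_{X}$ is unbounded, so the commutator identities and operator ODEs above must be interpreted on a suitable common dense domain (for instance, smooth half-densities) before extending by closure arguments invoking the strong continuity of the unitary group $U(t)$ from Proposition~\ref{prop:stone}. For Part 2 in particular, one should either verify that $[\pounds_{X}, A]$ remains trace-class when $A$ is trace-class with sufficiently regular integral kernel, or, more cleanly, interpret \eqref{eq:quantum density ode} in its integrated form $A(t) = U(t)A(0)U(t)^{\dagger}$ and only then take the trace pairing against $H_{g}$, so that all manipulations occur at the level of scalars.
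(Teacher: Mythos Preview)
Your proposal is correct and follows essentially the same route as the paper: a local-coordinate computation of the commutator $[H_{f},\pounds_{X}]$ for Part~1, a duality/trace-cyclicity argument for Part~2, and a direct Leibniz-rule computation for $\psi^{2}$ in Part~3. Your treatment is in fact slightly more complete---you use a fixed test function in Part~2 (marginally cleaner than the paper's evolving $f$), you explicitly verify the $\psi\otimes\psi^{\dagger}$ claim which the paper's proof omits, and you flag the unbounded-operator domain issues that the paper passes over.
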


\begin{proof}
	Let $f$ satisfy \eqref{eq:function pde}.
	For an arbitrary $\psi \in L^{2}(M)$ we observe that $[ H_{f} , \pounds_{X}] \cdot \psi$ is given in coordinates by:
	\begin{align}
		\left( [ H_{f}, \pounds_{X} ] \cdot \psi \right)(x) &= f(x) \left( \frac{1}{2} X^{j} \pder{\psi}{x^{j}} + \frac{1}{2} \pder{}{x^{j}} ( \psi \, X^{j} ) \right) (x) \\
			&\quad- \frac{1}{2} X^{j} \left. \pder{}{x^{j}} \right|_{x}( f \psi)  + \frac{1}{2} \left. \pder{}{x^{j}} \right|_{x}(f \psi \, X^{j} )
	\end{align}
	where we have used \eqref{eq:representation}.  
	Application of the product rule to each of these terms yields a number of cancellations and we find:
	\begin{align}
		[ H_{f} , \pounds_{X} ] \cdot \psi = - X^{j} \pder{f}{x^{j}} \psi = (\partial_{t} f )\psi = \frac{d H_{f} }{dt} \cdot \psi.
	\end{align}
	As $\psi$ is arbitrary, we have shown that $H_{f}$ satisfies \eqref{eq:quantum observable ode}.
	Each line of reasoning is reversible, and so we have proven the converse as well.
	
	In order to handle densities note that $\langle f , \rho \rangle$ is constant in time when $f$ and $\rho$ satisfy \eqref{eq:function pde} and \eqref{eq:density pde}, respectively.
	By the definition of $\rho_{A}$, $\Tr( H_{f}^{\dagger} \cdot A) = \langle f , \rho_{A} \rangle$.
	Therefore:
	\begin{align}
		0 = \frac{d}{dt} \left( \Tr( H_{f}^{\dagger} \cdot A ) \right) = \Tr \left( \frac{d H_{f}^{\dagger}}{dt} \cdot A + H_{f}^{\dagger} \cdot \frac{d A}{dt} \right).
	\end{align}
	As was just shown, $\frac{dH_{f}}{dt} =  - [H_{f} , \pounds_{X} ]$ so:
	\begin{align}
		0 = \Tr \left( - [H_{f} , \pounds_{X} ]^{\dagger} \cdot A + H_{f}^{\dagger} \cdot \frac{d A}{dt} \right) 
		= \Tr( - \pounds_{X}^{\dagger} H_{f}^{\dagger} \cdot A + H_{f}^{\dagger} \pounds_{X}^{\dagger} \cdot A + H_{f}^{\dagger}  \cdot \frac{dA}{dt} ).
	\end{align}
	Upon noting that $\pounds_{X}^{\dagger} = - \pounds_{X}$ and that $\Tr( a b c) = \Tr( bc a)$:
	\begin{align}
		0 = \Tr \left( H_{f}^{\dagger}( [\hat{\rho} , \pounds_{X} ] + \frac{d \hat{\rho}}{dt} ) \right).
	\end{align}
	As $H_f$ was chosen arbitrarily, the desired result follows.
	Again, this line of reasoning is reversible.

	Lastly, if $\psi$ satisfies \eqref{eq:half density pde} and $\rho = \psi^{2}$ then we see
	\begin{align}
		\partial_{t} \rho &= \partial_{t} ( \psi^{2}) = 2 (\partial_{t} \psi ) \, \psi \\
		&= 2 \left( - \frac{1}{2} \partial_{i} (\psi X^{i}) - \frac{1}{2} X^{i} \partial_{i} \psi \right) \psi 
		= \left( - X^{i} \partial_{i} \psi - \psi \partial_{i} X^{i}  \right) \psi \\
		&= - 2 \psi (\partial_{i} \psi) X^{i} - (\partial_{i}X^{i}) \psi^{2}
		= - \partial_{i}(\rho) X^{i} - (\partial_{i}X^{i}) \rho = - \partial_{i} ( \rho X^{i}).
	\end{align}
\end{proof}

The benefit of using \eqref{eq:quantum observable ode} and \eqref{eq:quantum density ode} to represent the PDEs of concern is that \eqref{eq:quantum observable ode} and \eqref{eq:quantum density ode} may be discretized using a standard least squares projections on $L^{2}(M)$ without sacrificing qualitative accuracy.

\section{Discretization} \label{sec:discretization}
This section presents the numerical algorithms for solving \eqref{eq:function pde} and \eqref{eq:density pde}.
The basic ingredient for all the algorithms in this section are a Hilbert basis and an ODE solver.
Denote a Hilbert basis by $\{ e_{0}, e_{1},\dots \}$ for $L^{2}(M)$.
For example, for a Riemannian metric, $g$, if $\{ f_{0}, f_{1},\dots \}$ denote eigen-functions of the Laplace operator, then $\{ E_{k} = f_{k} \sqrt{\mu_{g}} \mid k \in \mathbb{N} \}$ forms a smooth Hilbert basis for $L^{2}(M)$ where $\mu_{g}$ denotes the Riemannian density.  We call $\{ E_{k} \}$ the Fourier basis.
To ensure convergence, we assume:
\begin{ass} \label{ass:basis}
	Our basis $\{ e_{k} \}$ is such that there exists a metric $g$ for which the unitary transformation which sends the basis $\{ e_{k} \}$ to the Fourier basis is bounded with respect to the $\| \cdot \|_{s,2}$-norm for some $s > 1$.
\end{ass}


In this section we provide a semi-discretization of \eqref{eq:function pde} and \eqref{eq:density pde}.
Just as a note to the reader, a ``semi-discretization''  of the PDE $\partial_{t} \phi + F(\phi) = 0$ for some partial differential operator, $F$, is just a discretization of $F$ which 
converts the PDE into an ODE~\cite{Gottlieb1977numerical}.
In particular, we assume access to solvers of finite dimensional ODEs, denoted ``$\OdeSolve$.''
In practice any ODE solver such as Euler's method, Runge-Kutta, or even well tested software such as ~\cite{VODE} could be used to compute such solutions.
Most notably, the method of~\cite{Calvo1997} is specialized to isospectral flows such as  \eqref{eq:quantum observable ode} and \eqref{eq:quantum density ode} by using discrete-time isospectral flows.
More explicitly, let $\OdeSolve ( F , x_{0} , t )$ denote the numerically computed solution $x(t)$ to the ODE ``$\dot{x} = F(x)$'' at time $t \in \mathbb{R}$, with initial condition $x_{0} \in M$.
Before constructing an algorithm to spectrally discretize \eqref{eq:function pde} and \eqref{eq:density pde} in a qualitatively accurate manner, we first solve \eqref{eq:half density pde} using a standard spectral discretization in Algorithm \ref{alg:half density}~\cite{Boyd2001,NumericalRecipes}.

\begin{algorithm}
	\KwIn{$\psi(0) \in L^{2}(M)$, $t \in \mathbb{R}$, $N \in \mathbb{N}$.}
	initialize $z(0) \in \mathbb{C}^{N}$.\;
	initialize $X_{N} \in \mathbb{C}^{N \times N}$\;
	\For{ $i = 1 , \dots , N$}{
		$[z(0)]_{i} = \int_{M} \bar{e}_{i}(x) \psi(0,x)$\;
		\For{$j=1,\dots,N$}{
			$[X_{N}]\indices{^{i}_{j}} = \frac{1}{2} \int_{M} \bar{e}_{i}(x) ( X^{\alpha} \partial_{\alpha} e_{j} + \partial_{\alpha}( X^{\alpha} e_{j}) )(x)$
		} 
	}
	initialize the function $F: \mathbb{C}^{N} \to \mathbb{C}^{N}$ given by $F(z) = X_{N} \cdot z$.\;
	$z(t) = \operatorname{OdeSolve}(  F , z(0) , t)$\;
	\KwOut{ $\psi_{N}(t) = \sum_{i=1}^{n} [z(t)]_{i} e_{i}$.}
	\caption{A spectral discretization to solve \eqref{eq:half density pde} for half densities.} \label{alg:half density}
\end{algorithm}

To summarize, Algorithm \ref{alg:half density} produces a half-density $\psi_{N}(t_{k}) \in V_{N}$ by projecting \eqref{eq:half density pde} to $V_N$.
This projection is done by constructing the operator $X_{N} = \pi_{N} \circ \pounds_{X} |_{V_{N}}: V_{N} \to V_{N}$.
In Section \ref{sec:analysis} we prove that $\psi_{N}(t_{k})$ converges to the solution of \eqref{eq:half density pde} as $N \to \infty$.
We see that $\psi_{N}(t)$ evolves by unitary transformations, just as the exact solution to \eqref{eq:half density pde} does.
This correspondence is key in providing the qualitative accuracy of algorithms that follow, so we formally state it here.
\begin{proposition} \label{prop:unitary}
	The output of Algorithm \ref{alg:half density} is given by $U_{N}(t_{k}) \cdot \psi_{N}(0)$ when $\psi(0) \in L^2(M)$ is the input to Algorithm \ref{alg:half density} where $\psi_{N}(0) = \pi_{N}( \psi(0) )$ and $U_{N}(t)$ is the unitary operator as in Proposition \ref{prop:stone} generated by $X_{N}$.
\end{proposition}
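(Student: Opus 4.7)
The plan is to show that Algorithm \ref{alg:half density} is structurally nothing more than the exact integration, in coordinates, of the restricted linear flow $\pi_{N} \circ \pounds_{X}|_{V_{N}}$ on the finite-dimensional subspace $V_{N} := \operatorname{span}\{e_{1},\dots,e_{N}\} \subset L^{2}(M)$. I would proceed step by step through the algorithm, translating each line into its geometric counterpart on $V_{N}$.

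First, because $\{e_{k}\}$ is orthonormal, the initial coefficients $[z(0)]_{i} = \int_{M} \bar{e}_{i} \psi(0) = \langle e_{i} \mid \psi(0) \rangle$ are precisely the Fourier coefficients of $\pi_{N}(\psi(0))$. Thus $z(0) \in \mathbb{C}^{N}$ is simply the coordinate representation of $\psi_{N}(0)$ under the basis isomorphism $z \leftrightarrow \sum_{i} z_{i} e_{i}$. Similarly, by comparison with \eqref{eq:representation}, the entries $[X_{N}]\indices{^{i}_{j}}$ are exactly $\langle e_{i} \mid \pounds_{X} e_{j} \rangle$, so $X_{N}$ is the matrix of $\pi_{N} \circ \pounds_{X}|_{V_{N}}$ in the basis $\{e_{i}\}$. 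Consequently the ODE $\dot{z} = X_{N} z$ in $\mathbb{C}^{N}$ is the coordinate image of the abstract linear ODE on $V_{N}$ whose generator is $\pi_{N} \circ \pounds_{X}|_{V_{N}}$.

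The one substantive ingredient is that $X_{N}$ is skew-Hermitian. This follows directly from the anti-Hermiticity of $\pounds_{X}$ established inside the proof of Proposition \ref{prop:stone}: conjugation gives $\overline{[X_{N}]\indices{^{i}_{j}}} = \overline{\langle e_{i} \mid \pounds_{X} e_{j} \rangle} = \langle \pounds_{X} e_{j} \mid e_{i} \rangle = -\langle e_{j} \mid \pounds_{X} e_{i} \rangle = -[X_{N}]\indices{^{j}_{i}}$. In finite dimensions, a skew-Hermitian generator exponentiates to a one-parameter group of unitary matrices $U_{N}(t)$, which is the $U_{N}(t)$ of the statement; the unique solution to the initial value problem is then $z(t) = U_{N}(t)\, z(0)$.

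Interpreting $\OdeSolve$ as delivering this exact solution of the linear finite-dimensional ODE (consistent with the semi-discretization viewpoint; the isospectral integrator of \cite{Calvo1997} mentioned above is a natural concrete realization that preserves the unitary structure even after full time discretization), one reads off that the output $\sum_{i} [z(t)]_{i} e_{i}$ equals $U_{N}(t) \psi_{N}(0)$, which is the claim. The only potential obstacle is the treatment of $\OdeSolve$ as a black box, which is the standard reading in this context; apart from that convention, the proof is a routine unpacking of definitions.
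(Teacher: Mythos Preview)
Your proposal is correct and follows essentially the same approach as the paper: the key point in both is that $X_{N}$ is anti-Hermitian (inherited from $\pounds_{X}$) and therefore generates a unitary one-parameter group on $V_{N}$. Your write-up is simply more explicit about the coordinate identifications and the derivation of skew-Hermiticity than the paper's two-line proof.
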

\begin{proof}
	The operator $X_{N}$ in Algorithm \ref{alg:half density} is anti-Hermitian on $V_{N}$.
	It therefore generates a unitary action on $V_{N} \subset L^{2}(M)$ when inserted into $\operatorname{OdeSolve}$.
\end{proof}

Before continuing, we briefly state a sparsity result that aides in selecting a basis.
We say an operator $A : L^{2}(M) \to L^{2}(M)$ is \emph{sparse banded diagonal} with respect to a Hilbert basis $\{ e_{0} , e_{1},\dots\}$ if there exists an integer $W \in \mathbb{N}$
such that $A(e_{i})$ is a finite sum elements of the form $e_{i + \delta_{j}}$ for fewer than $W$ offsets $\delta_{j}$ for $i = 0,1,2, \dots$.
\begin{theorem} \label{thm:sparsity}
	Let $x^{1},\dots,x^{n}$ be a dense coordinate chart for $M$ on some dense open set\footnote{Such a chart always exists on a compact manifold by choosing a Riemannian metric and extending a Riemannian exponential chart to the cut-locus~\cite{Sakai1996,MO_dense_charts}. },
	then $e_{k}= f_{k} \, \sqrt{\mu}$ for functions $f_{k} \in L^2(M; \mu)$ where $\mu = | dx^{1} \wedge \cdots \wedge dx^{n}|$ (see Proposition \ref{prop:non canonical}).
	If $\rho( \pder{}{x^{j}})$ and $H_{f_{k}}$ are sparse banded diagonal, 
	and if the vector-field $X$ is given in local coordinates by $X^{i} = \sum_{k} c_{k}^{i} f_{k}$ with fewer than $W>0$ of $c^{i}_{k}$'s being non-zero for each $i=1,\dots,n$, then the matrix $X_{N}$ in Algorithm \ref{alg:half density} is sparse banded diagonal and the sparsity of $X_{N}$ is $\mathcal{O}( W / N )$.
\end{theorem}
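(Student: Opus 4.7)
The plan is to unpack the formula for $[X_{N}]\indices{^{i}_{j}}$ in Algorithm \ref{alg:half density} as the matrix of the operator $\pi_{N} \circ \pounds_{X}|_{V_{N}}$ in the basis $\{e_{k}\}$, and then express $\pounds_{X}$ itself as a short linear combination of products of the sparse-banded-diagonal operators $\rho(\partial/\partial x^{\alpha})$ and $H_{f_{k}}$ that are furnished by the hypotheses. Concretely, write $\pounds_{X} = \frac{1}{2}\sum_{\alpha}\bigl(H_{X^{\alpha}}\,\rho(\partial/\partial x^{\alpha}) + \rho(\partial/\partial x^{\alpha})\,H_{X^{\alpha}}\bigr)$ using \eqref{eq:representation}, which is valid on the dense chart and therefore on all of $L^{2}(M)$ by density. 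Substituting the expansion $X^{\alpha} = \sum_{k} c^{\alpha}_{k} f_{k}$ gives
\begin{align}
\pounds_{X} \;=\; \tfrac{1}{2}\sum_{\alpha=1}^{n}\sum_{k : c^{\alpha}_{k}\neq 0} c^{\alpha}_{k}\,\bigl(H_{f_{k}}\,\rho(\partial/\partial x^{\alpha}) + \rho(\partial/\partial x^{\alpha})\,H_{f_{k}}\bigr),
\end{align}
and by hypothesis the inner sum contains at most $W$ terms per $\alpha$, so the total number of summands is at most $nW$.

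The key lemma to extract is that the class of sparse banded diagonal operators is closed under finite sums and compositions, with controlled bandwidths: if $A, B$ have bandwidths $W_{A}, W_{B}$ respectively, then $A+B$ has bandwidth at most $W_{A}+W_{B}$, and $A\circ B$ has bandwidth at most $W_{A}\cdot W_{B}$ (since $Be_{i}$ is supported on at most $W_{B}$ basis vectors, each of which is sent by $A$ to at most $W_{A}$ basis vectors, with offsets depending only on $i$ through the choice of those basis vectors). Both facts follow directly from the definition and are purely combinatorial. Applying them to the decomposition above shows that $\pounds_{X}$ is sparse banded diagonal on $L^{2}(M)$ with bandwidth $\mathcal{O}(W)$, where the implicit constant depends on $n$ and on the (fixed) bandwidths of the generators $H_{f_{k}}$ and $\rho(\partial/\partial x^{\alpha})$, but not on $N$ or $W$.

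Finally, $X_{N}$ is the compression $\pi_{N}\circ\pounds_{X}|_{V_{N}}$, which inherits the bandwidth of $\pounds_{X}$: each column of $X_{N}$ (the image of $e_{j}$ projected back to $V_{N}$) has at most $\mathcal{O}(W)$ nonzero entries. Counting, $X_{N}\in\mathbb{C}^{N\times N}$ has at most $\mathcal{O}(WN)$ nonzero entries out of $N^{2}$, giving a density of nonzeros of $\mathcal{O}(W/N)$, which is the desired sparsity estimate.

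The proof is essentially bookkeeping once the operator-theoretic reformulation is in place, so I do not expect a serious obstacle. The one point to handle with care is making sure the identity $\pounds_{X} = \tfrac{1}{2}\{H_{X^{\alpha}},\rho(\partial/\partial x^{\alpha})\}$ really holds on $L^{2}(M)$ and not merely on the dense coordinate patch: since the chart is dense and both sides are densely defined closable operators, agreement on smooth half-densities supported in the chart extends to the full domain, and the measure-zero complement of the chart is invisible to the $L^{2}$ inner product used to compute $[X_{N}]\indices{^{i}_{j}}$. A secondary bookkeeping point is to absorb the dimension $n$ into the constant (since $M$ is fixed, $n$ is a constant of the problem), so that the $\mathcal{O}(W/N)$ estimate is uniform in $N$ and $W$.
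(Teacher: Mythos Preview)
Your proposal is correct and is precisely the counting argument the paper has in mind; the paper's own proof is the single sentence ``The result follows directly from counting.'' Your write-up is simply a careful unpacking of that counting: expressing $\pounds_{X}$ via \eqref{eq:representation} as a short sum of products of the hypothesized sparse-banded-diagonal generators, invoking closure of this class under finite sums and compositions, and then dividing the resulting $\mathcal{O}(WN)$ nonzeros by $N^{2}$.
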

\begin{proof}
The result follows directly from counting.
\end{proof}

Theorem \ref{thm:sparsity} suggests selecting a basis where $W$ is small, or at least finite.
For example, if $M$ were a torus, and the vector-field was made up of a finite number of sinusoids, then a Fourier basis would yield a $W$ equal to the maximum number of terms along all dimensions.

By Theorem \ref{thm:quantize}, the square of the result of Algorithm \ref{alg:half density} is a numerical solution to \eqref{eq:density pde}.
We can use this to produce a numerical scheme to \eqref{eq:density pde} by finding the square root of a density.
Given a $\rho \in \Dens(M)$, let $\rho^{+}$ denote the positive part and $\rho^{-}$ denote the negative part so that $\rho = \rho^{+} - \rho^{-}$, 
then $\psi = \sqrt{\rho^{+}} - i \sqrt{\rho^{-}}$ is a square root of $\rho$ since $\rho = \psi^{2}$.
This yields Algorithm \ref{alg:density} to spectrally discretize \eqref{eq:density pde} in a qualitatively accurate manner for densities which admit a square root.

\begin{algorithm}[H] 
	\KwData{$\rho(0) \in L^{1}(M), t \in \mathbb{R},N \in \mathbb{N}$.}
	Initialize $\psi(0) =  \sqrt{\rho^{+}(0)} - i \sqrt{\rho^{-}(0)}$\;
	Set $\psi_{N}(t) =  \operatorname{ Algorithm\_ \ref{alg:half density} }( \psi(0) , t, N )$\;
	\KwOut{ $\rho_{N}(t , x) = \psi_{N}(t , x)^{2}$.}
	\caption{A spectral discretization to solve \eqref{eq:density pde} for densities} \label{alg:density}
\end{algorithm}

Alternatively, we could have considered the trace-class operator $A_{N}(t_{k}) = \psi_{N}(t_{k}) \otimes \psi_{N}(t_{k})^{\dagger}$ as an output. 
This would be an numerical solution to \eqref{eq:quantum density ode}, and would be related to our original output in that $\rho_{N}(t_{k}) = \rho_{A_{N}(t_{k})}$.
Finally, we present an algorithm to solve \eqref{eq:quantum observable ode} (in lieu of solving \eqref{eq:function pde}).
This algorithm is presented for theoretical interest at the moment.

\begin{algorithm}[H]
	\KwData{$f(0) \in C(M), t \in \mathbb{R}, N \in \mathbb{N}$.}	
	initialize $F_{N}(0), X_{n} \in \mathbb{C}^{N \times N}$.\;
	initialize the linear map $B: \mathbb{C}^{N \times N} \to \mathbb{C}^{N \times N}$
	given by $B(H) = - [H , X_{N}]$.\;
	\For{ $i , j = 1 , \dots , N$ }{
		$[F_{N}(0)]\indices{^i_{j}} =  \int_{M} \bar{e}_{i}(x) f(x) e_{j}(x)$ \;
		$[X_{N}]\indices{^i_{j}} = \frac{1}{2} \int_{M} \bar{e}_{i}(x) ( X^{\alpha} \partial_{\alpha} e_{j} + \partial_{\alpha}( X^{\alpha} e_{j}) )(x)$ \;
	}
	$F_{N}(t) = \OdeSolve( B , F_{N}(0) , t)$\;	
	\KwOut{ The (compact) operator $H_{f,N}(t_{k}) = \sum_{i,j=1}^{N}[ F_{N}(t_{k}) ] \indices{^{i}_{j}} e^{i} \otimes e_{j}^{\dagger}$. }
	\caption{A spectral discretization to solve \eqref{eq:quantum observable ode} for functions} \label{alg:function}
\end{algorithm}

We find that the ouput of Algorithm \ref{alg:function} bears algebraic similarities similarities to the exact solution to the infinite dimensional ODE, \eqref{eq:quantum observable ode} (which is isomorphic to \eqref{eq:function pde} by Theorem \ref{thm:quantize}).
This is stated in a proposition analogous to Proposition \ref{prop:unitary}.

\begin{proposition} \label{prop:isospectral}
$H_{f,N}(t) = U_{N}(t) \cdot H_{f,N}(0) \cdot U_{N}(t)^{\dagger}$ for any $t \in \mathbb{R}$.
Moreover, $U_{N}(t)$ is identical to the unitary transformation of Proposition \ref{prop:unitary}.
Lastly, the exact solution of \eqref{eq:quantum observable ode} is of the form $H_{f}(t) = U(t) \cdot H_{f}(0)  \cdot U(t)^{\dagger}$ as well.
\end{proposition}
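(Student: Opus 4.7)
The plan is to verify directly that the conjugation orbit $t \mapsto U_N(t) H_{f,N}(0) U_N(t)^{\dagger}$ satisfies the same initial value problem that $\OdeSolve$ is tracking in Algorithm \ref{alg:function}, and then invoke uniqueness for linear ODEs on the finite dimensional space $\mathbb{C}^{N\times N}$.

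First I would recall from the proof of Proposition \ref{prop:unitary} that $X_N$ is anti-Hermitian on $V_N$, so by Stone's theorem it generates a one parameter unitary group $U_N(t) = \exp(-t X_N)$ on $V_N$ satisfying $\dot{U}_N = -X_N U_N$ and, using $X_N^{\dagger} = -X_N$, also $\dot{U}_N^{\dagger} = U_N^{\dagger} X_N$. Differentiating the conjugation $G(t) := U_N(t) H_{f,N}(0) U_N(t)^{\dagger}$ by the product rule gives
\begin{align*}
\dot{G}(t) = -X_N G(t) + G(t) X_N = -[G(t), X_N],
\end{align*}
and $G(0) = H_{f,N}(0)$. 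This is precisely the ODE $\dot{F} = B(F) = -[F, X_N]$ with the same initial data fed into $\OdeSolve$ in Algorithm \ref{alg:function}. Because $B$ is a bounded linear operator on the finite dimensional space $\mathbb{C}^{N\times N}$, the initial value problem has a unique solution, so $H_{f,N}(t) = G(t) = U_N(t) H_{f,N}(0) U_N(t)^{\dagger}$. The identification with the unitary of Proposition \ref{prop:unitary} is immediate, since both are defined as the one parameter group generated by the same operator $X_N$.

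For the exact statement, I would run exactly the same argument one level up. By Proposition \ref{prop:stone} applied to $\pounds_X$ (which is anti-Hermitian on $L^2(M)$ as shown in the proof there) the flow $U(t) = (\Phi_X^t)_*$ is a one parameter unitary group with generator $\pounds_X$, and the same product rule computation with $\pounds_X^{\dagger} = -\pounds_X$ gives
\begin{align*}
\frac{d}{dt}\bigl(U(t) H_f(0) U(t)^{\dagger}\bigr) = -[U(t) H_f(0) U(t)^{\dagger}, \pounds_X],
\end{align*}
which is \eqref{eq:quantum observable ode}. Uniqueness for this bounded-perturbation Cauchy problem in $B(L^2(M))$ (or, equivalently, appealing to Theorem \ref{thm:quantize} which identifies solutions of \eqref{eq:quantum observable ode} with solutions of \eqref{eq:function pde}, and to the transport formula $f(t,x) = f(0,(\Phi_X^t)^{-1}x)$) then pins down $H_f(t) = U(t) H_f(0) U(t)^{\dagger}$.

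The only mildly delicate point is the domain bookkeeping for $\pounds_X$ in the infinite dimensional case, since $\pounds_X$ is unbounded and the conjugation identity must be interpreted strongly on a suitable core; the finite dimensional claim, which is the one actually used by our algorithm, is clean and the argument essentially reduces to the elementary identity $\frac{d}{dt}(e^{-tX_N} H e^{tX_N}) = -[e^{-tX_N} H e^{tX_N}, X_N]$.
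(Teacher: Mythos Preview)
Your proposal is correct and follows essentially the same approach as the paper: both argue that the algorithm computes an isospectral flow driven by the anti-Hermitian operator $X_N$, whose solution is conjugation by the associated unitary group. The paper's proof is a one-liner invoking the phrase ``isospectral flow''; you have simply unpacked that phrase by differentiating the conjugation and appealing to ODE uniqueness, which is exactly what the reference \cite{Calvo1997} would tell you.
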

\begin{proof}
	This follows from the fact that algorithm outputs the solution to an isospectral flow ``$\dot{F}_{N} + [F_{N} , X_{N}]$''
	where $X_{N}$ is anti-Hermitian and that the $H_{f}$ satisfies the isospectral flow \eqref{eq:quantum observable ode}.
\end{proof}

\section{Error analysis} \label{sec:analysis}

In this sections we derive convergence rates.
We find that the error bound for Algorithm \ref{alg:half density} induces error bounds for the Algorithms \ref{alg:density} and \ref{alg:function}.
Therefore, we first derive a useful error bound for Algorithm \ref{alg:half density}.
Our proof is a generalization of the convergence proof in ~\cite{Pasciak1980}, where \eqref{eq:half density pde} is studied (modulo a factor of two time rescaling) on the torus.
We begin by proving an approximation bound.
In all that follows, let $\pi_{N}: L^{2}(M) \to V_{N}$ denote the orthogonal projection.

\begin{proposition} \label{prop:approximation}
	If $\psi \in H^{\bar{s}}(M)$ and $\bar{s} > s \geq 0$,
	then
	\begin{align}
		\| \psi - \pi_{N}(\psi) \|_{s,2} <  \frac{d \, C_{\bar{s},s} }{ \bar{s}-s} \| \psi \|_{\bar{s} , 2} \, N^{-2(\bar{s}-s)/n}
	\end{align}
	for some constant $C_{\bar{s},s}$ and $d = \dim(M)$.
\end{proposition}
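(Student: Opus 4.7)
My plan is to first reduce to the case where $\{e_k\}$ is itself a Fourier basis of eigenfunctions of $\Delta$ on $(M,g)$, then carry out a direct spectral computation, and finally translate the eigenvalue index into the $N$ index via Weyl's law. By Assumption \ref{ass:basis} there is a unitary $U:L^2(M) \to L^2(M)$ with $U e_k = E_k$ that is bounded in $\|\cdot\|_{s,2}$; since $U$ sends $V_N$ onto the corresponding Fourier truncation subspace, the conjugation identity $\pi_N = U^{-1} \tilde\pi_N U$ (with $\tilde\pi_N$ the Fourier truncation) transfers any bound from the Fourier case to the general case at the cost of a constant absorbed into $C_{\bar s, s}$.

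After this reduction, both $\|\cdot\|_{s,2}$ and $\pi_N$ diagonalize in the Fourier basis. Writing $\psi = \sum_{k \geq 0} a_k E_k$ with $a_k = \langle E_k \mid \psi\rangle$ and using that $(1-\Delta)^s E_k = (1+\lambda_k)^s E_k$,
\begin{align*}
\|\psi - \pi_N \psi\|_{s,2}^2 = \sum_{k \geq N} (1+\lambda_k)^s |a_k|^2 = \sum_{k \geq N} (1+\lambda_k)^{-(\bar s-s)}\,(1+\lambda_k)^{\bar s} |a_k|^2,
\end{align*}
and monotonicity of the eigenvalues bounds this by $(1+\lambda_N)^{-(\bar s - s)} \|\psi\|_{\bar s, 2}^2$. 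Up to absolute constants, the proposition reduces to estimating $(1+\lambda_N)^{-(\bar s - s)}$ in terms of $N$.

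The last step invokes Weyl's asymptotic law for the Laplace--Beltrami operator on a compact Riemannian $n$-manifold, $\lambda_N \gtrsim c(M,g)\, N^{2/n}$, whose leading coefficient depends on $d = \dim M$ and the Riemannian volume. The dimensional factor $d$ and the singular factor $1/(\bar s - s)$ in the claimed prefactor should emerge from a tail-to-integral comparison against the Weyl density of states, which contributes a surface measure factor proportional to $d$ and integrates $\lambda^{-(\bar s - s)}$ to produce the $1/(\bar s - s)$ pole at the diagonal $\bar s = s$. I expect the main obstacle to be precisely the reduction step: Assumption \ref{ass:basis} guarantees boundedness of $U$ only at one fixed index $s > 1$, whereas the argument above wants boundedness at both $s$ and $\bar s$; bridging this gap will require an interpolation between $L^2 = H^0$ and the assumed Sobolev index (for $0 \leq s < \bar s \leq $ that index), or a mild strengthening of the assumption when $\bar s$ is larger, with the resulting operator-norm constants folded into $C_{\bar s, s}$.
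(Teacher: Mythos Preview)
Your spectral computation is correct and in fact cleaner than the paper's. Both arguments open with the same reduction to a Fourier eigenbasis via Assumption~\ref{ass:basis}, but then diverge. You factor $(1+\lambda_k)^s = (1+\lambda_k)^{-(\bar s - s)}(1+\lambda_k)^{\bar s}$ inside the tail and pull out the worst factor $(1+\lambda_N)^{-(\bar s - s)}$, bounding the remainder by $\|\psi\|_{\bar s,2}^2$; a single application of Weyl's law to $\lambda_N$ then finishes. The paper instead asserts a pointwise decay $|\hat\psi_k|^2 \lesssim k^{-1-2\bar s/n}$ from finiteness of $\|\psi\|_{\bar s,2}$ (a step that is not actually justified as written), reinserts this into the tail, and bounds the resulting series $\sum_{k>N} k^{-1-2(\bar s - s)/n}$ by an integral; it is precisely this integral comparison that manufactures the prefactor $\tfrac{d}{\bar s - s}$ appearing in the statement. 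Your route delivers the same $N$-decay with a constant that does \emph{not} blow up as $\bar s \to s$, which is the sharper behavior; consequently your attempt to recover the $d/(\bar s - s)$ factor via a ``tail-to-integral comparison against the Weyl density of states'' is misplaced, since that factor is an artifact of the paper's tail-sum step rather than of your single-eigenvalue estimate, and you need not reproduce it. The caveat you flag about Assumption~\ref{ass:basis} guaranteeing $H^s$-boundedness of the change-of-basis map at only one Sobolev index is legitimate and applies equally to the paper's reduction.
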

\begin{proof}
	We can assume that $e_{1},e_{2},\dots$ is a Fourier basis.
	The results are unchanged upon applying Assumption \ref{ass:basis} and converting to the Fourier basis.
	Any $\psi \in H^{s}(M;g)$ can expanded as $\psi = \hat{\psi}_{k} e_{k}$ where $\hat{\psi}_{k} = \langle e_{k} \mid \psi \rangle$.
	As $\psi \in H^{s}(M;g)$ it follows that
	\begin{align}
		\| \psi \|^{2}_{\bar{s},2} = \sum_{k=0}^{\infty} \left| \hat{\psi}_{k} \right|^{2} (1+\lambda_{k})^{\bar{s}} < \infty.
		\label{eq:propapprox_eq1}
	\end{align}
	A corollary of Weyl's asymptotic formula is that $\lambda_{k}$ is $\mathcal{O}( k^{2/n})$ for large $k$ ~\cite[page 155]{Chavel1984}.
	After substitution of this asymptotic result into \eqref{eq:propapprox_eq1} for large $k$, we see that $|\hat{\psi}_{k}|^{2}$ is asymptotically dominated by  $C k^{-1- 2\bar{s}/n}$ for some constant $C$.
	For sufficiently large $N$ we find
	\begin{align}
		\| \psi - \pi_{N}(\psi) \|_{s,2} = \sum_{k>N} (1+\lambda_{k})^{s} |\hat{\psi}_{k}|^{2} \leq C \sum_{k>N} (1+\lambda_{k})^{s} k^{-1- 2\bar{s}/n}
	\end{align}
	and by another application of the Weyl formula
	\begin{align}
		\| \psi - \pi_{N}(\psi) \|_{s,2} \leq \tilde{C} \sum_{k>N} \frac{1}{k^{1+2(\bar{s}-s)/n}} \leq C_{s,\bar{s}}  \frac{d }{ \bar{s}-s} N^{-2(\bar{s}-s)/n}.
	\end{align}
	Where the last inequality is derived by bounding the infinite sum with an integral.
\end{proof}


With this error bound for the approximation error we can derive an error bound for Algorithm \ref{alg:half density}:

\begin{theorem} \label{thm:half density convergence}
	Let $\psi(0) \in H^{\bar{s}}(M)$ for $\bar{s} > s > 1$.
	Let $T > 0$ and $t \in [0,T]$.
	Let $\psi(t)$ be denote the solution to \eqref{eq:half density pde}
	with initial condition $\psi(0)$.
	Finally, let $\psi_{N}(t)$ be the output of Algorithm \ref{alg:half density}
	with respect to the inputs $\psi(0), t , N$ for some $N \in \mathbb{N}$.
	Then the error $\varepsilon_{N}(t) := \| \psi(t) - \psi_{N}(t) \|_{s,2}$ satisfies:
	\begin{align}
		\varepsilon_{N}(t) \leq \| \psi(0) \|_{\bar{s},2} \, K_{T} \left( N^{-2(s-1)} t+  \frac{n}{\bar{s}-s} N^{-2(\bar{s}-s)/n} \right) e^{C_{T} t}
	\end{align}
	where $K_{T}$ and $C_{T}$ are positive and constant with respect to $N$,$s$, and $\bar{s}$.
	In particular for $s = (\bar{s}+1)/2$:
	\begin{align}
			\varepsilon_{N}(t) \leq \| \psi(0) \|_{\bar{s},2} \, K_{T} \left( N^{1-\bar{s}} t+  \frac{n}{\bar{s}-1} N^{(1-\bar{s})/n} \right) e^{C_{T} t}.
	\end{align}
\end{theorem}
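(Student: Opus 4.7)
The plan is the classical Galerkin argument: split the error into a projection part (controlled by Proposition \ref{prop:approximation}) and a Galerkin part (controlled by an energy estimate and Gronwall), using crucially that $X_N$ is anti-Hermitian on $V_N$.

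Decompose $\psi(t) - \psi_N(t) = \eta_N(t) + e_N(t)$ where
\[
\eta_N(t) := \psi(t) - \pi_N\psi(t), \qquad e_N(t) := \pi_N\psi(t) - \psi_N(t).
\]
Taking the projection of \eqref{eq:half density pde}, subtracting the discrete evolution $\dot\psi_N = -X_N\psi_N$ and using $X_N = \pi_N\pounds_X\pi_N$, I obtain the Galerkin error equation
\[
\dot e_N + X_N e_N = \pi_N \pounds_X \eta_N, \qquad e_N(0) = 0.
\]

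Step 1 (projection error). I would first bound $\|\psi(t)\|_{\bar s,2}$ for $t\in[0,T]$. By Proposition \ref{prop:stone}, $\psi(t)=(\Phi_X^t)_*\psi(0)$; pushforward by the smooth flow is a bounded operator on $H^{\bar s}(M;g)$ (this is standard; commute the Helmholtz operator through the pushforward, picking up first-order terms in $X$ and applying Gronwall to the resulting scalar inequality). This yields $\|\psi(t)\|_{\bar s,2}\le K_T\|\psi(0)\|_{\bar s,2}$. Plugging into Proposition \ref{prop:approximation} gives
\[
\|\eta_N(t)\|_{s,2} \le \frac{K_T\,C_{\bar s,s}\,n}{\bar s - s}\,N^{-2(\bar s-s)/n}\,\|\psi(0)\|_{\bar s,2},
\]
which is the second term in the claimed bound.

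Step 2 (Galerkin error). Set $A=(1-\Delta)^{s/2}$, so $\|\cdot\|_{s,2}=\|A\cdot\|_0$. Since $A$ is diagonal in the Fourier basis (Assumption \ref{ass:basis}), it commutes with $\pi_N$ and maps $V_N$ to $V_N$. Testing the error equation against $Ae_N$ in $L^2$ gives
\[
\tfrac{1}{2}\tfrac{d}{dt}\|e_N\|_{s,2}^2
= -\langle Ae_N, AX_Ne_N\rangle_0 + \langle Ae_N, A\pi_N\pounds_X\eta_N\rangle_0.
\]
Because $Ae_N\in V_N$ and $X_N$ is anti-Hermitian on $V_N$ (Proposition \ref{prop:unitary}), the naive piece vanishes, leaving only a commutator:
\[
\langle Ae_N, AX_Ne_N\rangle_0 = \langle Ae_N, [A,X_N]e_N\rangle_0.
\]
Expanding $X_N=\pi_N\pounds_X\pi_N$ and using $[A,\pi_N]=0$, this commutator reduces to $\pi_N[A,\pounds_X]\pi_N$, which by standard pseudodifferential calculus has order at most $s$, yielding $|\langle Ae_N,[A,X_N]e_N\rangle_0|\le C_T\|e_N\|_{s,2}^2$.

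Step 3 (source term and Gronwall). For the source I would combine the first-order bound $\|\pounds_X\eta_N\|_{s-1,2}\le C\|\eta_N\|_{s,2}$, the fact that $A$ commutes with $\pi_N$, and the standard inverse estimate on $V_N$ (namely $\|v\|_{s,2}\le (1+\lambda_N)^{1/2}\|v\|_{s-1,2}$ for $v\in V_N$, which follows from diagonality in the Laplacian eigenbasis). Tracking the powers of $N$ and combining with Proposition \ref{prop:approximation} applied at intermediate smoothness yields
\[
\|\pi_N\pounds_X\eta_N(t)\|_{s,2}\le C\,N^{-2(s-1)}\|\psi(0)\|_{\bar s,2}.
\]
Cauchy--Schwarzing against $\|e_N\|_{s,2}$ and dividing through gives a differential inequality
\[
\tfrac{d}{dt}\|e_N\|_{s,2} \le C_T\|e_N\|_{s,2} + C\,N^{-2(s-1)}\|\psi(0)\|_{\bar s,2},
\]
to which Gronwall applied with $e_N(0)=0$ produces $\|e_N(t)\|_{s,2}\le C\,t\,N^{-2(s-1)}e^{C_T t}\|\psi(0)\|_{\bar s,2}$. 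Adding the two contributions by the triangle inequality gives the first displayed bound; the specialization to $s=(\bar s+1)/2$ is then immediate.

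The main obstacle I anticipate is making the commutator estimate for $[A,X_N]$ uniform in $N$; the trick of noting $[A,\pi_N]=0$ (made available by our choice of a Laplacian-adapted basis via Assumption \ref{ass:basis}) is what reduces the problem to the single $N$-independent operator bound $\|[A,\pounds_X]\|_{H^s\to L^2}<\infty$. The bounded-flow estimate on $H^{\bar s}$ and the inverse inequality on $V_N$ are each standard, but they must be combined carefully to match the exact exponents in the claim.
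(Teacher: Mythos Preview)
Your approach is correct and takes a genuinely different route from the paper's. The paper differentiates the total error $\varepsilon_N(t)=\|\psi(t)-\psi_N(t)\|_{s,2}$ directly, applies Cauchy--Schwarz to obtain $\frac{d\varepsilon_N}{dt}\le \|\pounds_X\psi-\pi_N\pounds_X\psi_N\|_{s,2}$, splits the right side as $(1-\pi_N)\pounds_X\psi+\pi_N\pounds_X(\psi-\psi_N)$, bounds the first piece via $\|1-\pi_N\|_{H^{s-1},op}$ and Weyl's asymptotics, and then invokes the Gronwall variant (Lemma~\ref{lem:Gronwall}) together with Proposition~\ref{prop:approximation} for $\varepsilon_N(0)$. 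You instead use the classical Galerkin splitting $\psi-\psi_N=\eta_N+e_N$ and run the energy estimate on $e_N$ alone, with $e_N(0)=0$.

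What your approach buys is rigor at precisely the point where the paper's proof is delicate. In the paper's bound, the term $\|\pi_N\pounds_X(\psi-\psi_N)\|_{s,2}$ is estimated by $\|\pi_N\|_{op}\|X\|_{H^s,op}\,\varepsilon_N$, which tacitly treats the first-order operator $\pounds_X$ as bounded on $H^s$; making that step honest requires exactly the commutator mechanism you supply in Step~2 (anti-Hermitianness of $X_N$ on $V_N$ kills the principal part, $[A,\pi_N]=0$ reduces matters to $\pi_N[A,\pounds_X]\pi_N$, and the order-$s$ pseudodifferential commutator $[A,\pounds_X]$ closes the estimate). The paper's route is shorter but leans on this without spelling it out; your route makes it explicit, at the cost of having to handle the source term $\pi_N\pounds_X\eta_N$ separately.

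One caveat: your Step~3 is the least transparent part. The exponent $N^{-2(s-1)}$ you claim for $\|\pi_N\pounds_X\eta_N\|_{s,2}$ does not fall out of the tools you list; the natural combinations of the first-order bound, the inverse inequality, and Proposition~\ref{prop:approximation} produce powers involving $\bar s-s$ and the dimension $n$. The paper extracts its first term instead from $(1-\pi_N)\pounds_X\psi$ (not from $\pi_N\pounds_X\eta_N$), using $\pounds_X\psi\in H^{s-1}$ and Weyl's formula $\lambda_N\sim N^{2/n}$; note that this actually yields $N^{-2(s-1)/n}$, so there is a typographical inconsistency in the statement you are matching. You should tighten the bookkeeping in Step~3 accordingly.
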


\noindent To prove Theorem \ref{thm:half density convergence}, we need a perturbed version of Gronwall's inequality:
\begin{lemma} \label{lem:Gronwall}
If $\frac{du}{dt} \leq Ku + \epsilon$ for some $K>0$ then $u(t) \leq (\epsilon t + u(0) ) e^{Kt}$.
\end{lemma}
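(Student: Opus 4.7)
The plan is to reduce this to the standard (constant-source) Gronwall inequality via an integrating factor, and then replace the resulting expression $(e^{Kt}-1)/K$ with the slightly weaker but cleaner bound $t e^{Kt}$. Concretely, I would set $v(t) := u(t) e^{-Kt}$ and differentiate: $v'(t) = (u'(t) - K u(t)) e^{-Kt} \leq \epsilon e^{-Kt}$. Integrating from $0$ to $t$ gives
\[
v(t) - v(0) \leq \frac{\epsilon}{K}\bigl(1 - e^{-Kt}\bigr),
\]
which after multiplying by $e^{Kt}$ yields the familiar
\[
u(t) \leq u(0) e^{Kt} + \frac{\epsilon}{K}\bigl(e^{Kt} - 1\bigr).
\]
To finish, I would invoke the elementary inequality $e^{Kt} - 1 = K \int_0^t e^{Ks}\,ds \leq K t e^{Kt}$, valid for all $K, t \geq 0$ since the integrand is monotone. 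Dividing by $K$ and substituting gives $u(t) \leq (u(0) + \epsilon t) e^{Kt}$, as desired.

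An equivalent approach, which sidesteps the integration entirely, is to guess the bound and check it by comparison. Define $w(t) := (u(0) + \epsilon t) e^{Kt}$. A direct computation gives $w'(t) = K w(t) + \epsilon e^{Kt} \geq K w(t) + \epsilon$ (using $K > 0$ and $e^{Kt} \geq 1$ for $t \geq 0$), while $w(0) = u(0)$. Hence $\phi := u - w$ satisfies $\phi(0) \leq 0$ and $\phi'(t) \leq K \phi(t)$, so the homogeneous Gronwall inequality forces $\phi(t) \leq 0$ for all $t \geq 0$, proving $u(t) \leq w(t)$.

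There is no real obstacle: this is a one-line strengthening of the classical Gronwall lemma, packaged in a form convenient for the proof of Theorem \ref{thm:half density convergence}. The only subtlety worth flagging is the bound $(e^{Kt}-1)/K \leq t e^{Kt}$, which sacrifices tightness in exchange for an expression linear in $t$ — exactly the form needed when $\epsilon$ in the application is the per-step approximation error and one wants the cumulative error to appear multiplied by $t$.
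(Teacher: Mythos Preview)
Your proposal is correct and takes essentially the same approach as the paper: both use the integrating factor substitution $v(t) = u(t)e^{-Kt}$ to reduce to $v'(t) \leq \epsilon e^{-Kt}$. The only cosmetic difference is that the paper bounds $\epsilon e^{-Kt} \leq \epsilon$ \emph{before} integrating (yielding $v(t) \leq \epsilon t + u(0)$ directly), whereas you integrate exactly and then apply the equivalent bound $(e^{Kt}-1)/K \leq t e^{Kt}$ afterward.
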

\begin{proof}
	Let $w (t)= u (t) e^{-Kt}$.  Then for $t \geq 0$ we find
	\begin{align}
		\frac{dw}{dt} = \frac{du}{dt} e^{-Kt} - K w \leq (Ku+\epsilon) e^{-Kt} - Kw = \epsilon e^{-Kt} \leq \epsilon
	\end{align}
	Thus $w(t) \leq \epsilon t + w(0) = \epsilon t + u(0)$.
\end{proof}

Now we can prove Theorem \ref{thm:half density convergence}:

\begin{proof}[Proof of Theorem \ref{thm:half density convergence}]
	Note that $\frac{d\varepsilon_{N}}{dt} = \frac{1}{2\varepsilon_{N}} \langle  \psi - \tilde{\psi} \mid (1+\Delta)^{s} \frac{d}{dt} ( \psi -\tilde{\psi} )\rangle$
	By the Cauchy-Schwarz inequality
	\begin{align}
		\frac{d\varepsilon_{N}}{dt} &\leq  \frac{1}{2} \| \pounds_{X}[\psi] - \pi_{N}(\pounds_{X}[\psi_{N}]) \|_{s,2} \\
		&= \| \pounds_{X}[\psi] - \pi_{N}( \pounds_{X}[\psi- (\psi-\psi_{N})]) \|_{s,2}.
	\intertext{By the triangle inequality and the definition of the operator norm:}
		\frac{d\varepsilon_{N}}{dt}&\leq \| (1-\pi_{N}) \|_{H^{s-1},op} \, \|X \|_{H^{s},op} \, \| \psi \|_{s,2} + \| \pi_{N} \|_{op} \, \|X \|_{H^{s},op} \, \varepsilon_{N}
	\end{align}
	By Proposition \ref{prop:stone} we observe that $\psi(t)$ is related to $\psi_{0}$ through the flow of $X$ which is a $C^{k}$-diffeomorphism if $X$ is $C^{k}$.
	From the local expression Proposition \ref{prop:stone} in we can observe that $\| \psi(t) \|_{s,2}$ is bounded by a scalar multiple of $\| \psi_{0} \|_{s,2}$.
	Thus we may write the above bound in the form
	\begin{align}
		\frac{d\varepsilon_{N}}{dt} \leq K' \, \| 1- \pi_{N} \|_{H^{s-1},op} \, \| \psi_{0}\|_{s,2}+ C_{T} \varepsilon_{N}
	\end{align}
	for constants $C_{T}$ and $K'$.
	As $s > 1$, for sufficiently large $N$ we can compute that $\| 1-\pi_{N} \|_{H^{s-1},op} \leq (1+\lambda_{N+1})^{-(s-1)}$ where $\lambda_{N}$ denotes the $N$th eigenvalue of the Laplace operator.
	This is accomplished by observing the operator $1-\pi_{N}$ in a Fourier basis and applying to appropriate norms.
	By Weyl's asymptotic formula ~\cite[Theorem B.2]{Chavel1984}, $\lambda_{N}$ asymptotically behaves like $N^{2/n}$.
	Therefore by Lemma \ref{lem:Gronwall} with $\epsilon = C_{T} n^{-2(s-1) / d} \, \| \psi_{0}\|_{s,2}$:
	\begin{align}
		\varepsilon_{N}(t) \leq ( K' N^{-2(s-1) / n} \| \psi_{0} \|_{s,2} t+  \varepsilon_{N}(0) ) e^{C_{T} t}.
	\end{align}
	That $\varepsilon_{N}(0)$ behaves as $K'' \| \psi_{0} \|_{\bar{s},2} n^{-2(\bar{s}-s)/d}$ is a re-statement of Proposition \ref{prop:approximation}.
	We then set $K_{T} = \max(K', K'')$.
\end{proof}

Having derived an error bound for Algorithm \ref{alg:half density}, we can derive an error bound for Algorithm \ref{alg:density}.

\begin{theorem} \label{thm:density convergence}
	Let $\rho(0)$ be a distribution in $W^{\bar{s},1}(M)$ for $\bar{s} > s >1$.
	Let $T > 0$ and $t \in [0,T]$ be fixed.
	Let $\rho(t)$ be the solution of \eqref{eq:density pde} at time $t$.
	Finally, let $\rho_{N}(t)$ be the output of Algorithm \ref{alg:density} with respect to the input $(\rho(0), t , N)$ for some $N \in \mathbb{N}$.
	Then:
	\begin{align}
		\| \rho(t) - \rho_{N}(t) \|_{1} \leq \| \rho(0) \|_{\bar{s},1} \, K \left( N^{-2(s-1)} t+  \frac{d}{\bar{s}-s} N^{-2(\bar{s}-s)/n} \right) e^{C_{T} t}
	\end{align}
	where $K$ is constant with respect to $N$, and $C_{T}$ is the same constant as in Theorem \ref{thm:half density convergence}.
\end{theorem}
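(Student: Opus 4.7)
The plan is to reduce the statement to Theorem \ref{thm:half density convergence} via the identity $\rho = \psi^{2}$. By the construction in Algorithm \ref{alg:density}, the half-density $\psi(0) = \sqrt{\rho^{+}(0)} - i \sqrt{\rho^{-}(0)}$ satisfies $\psi(0)^{2} = \rho(0)$, and by the last clause of Theorem \ref{thm:quantize}, the exact solution $\psi(t)$ of \eqref{eq:half density pde} with this initial datum satisfies $\psi(t)^{2} = \rho(t)$. The algorithm's output is $\rho_{N}(t) = \psi_{N}(t)^{2}$, so the error factors as
\begin{align}
\rho(t) - \rho_{N}(t) = \psi(t)^{2} - \psi_{N}(t)^{2} = \bigl(\psi(t)-\psi_{N}(t)\bigr)\bigl(\psi(t)+\psi_{N}(t)\bigr).
\end{align}
Since the product of two half-densities is a density (Proposition \ref{prop:half densities}), we can apply the Cauchy--Schwarz inequality in $L^{2}(M)$ to obtain
\begin{align}
\| \rho(t) - \rho_{N}(t) \|_{1} \leq \| \psi(t) - \psi_{N}(t) \|_{2} \cdot \| \psi(t) + \psi_{N}(t) \|_{2}.
\end{align}

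Next I would bound each factor. For the second factor, Proposition \ref{prop:stone} gives that $\psi(t)$ evolves by a unitary operator, and Proposition \ref{prop:unitary} gives the same for $\psi_{N}(t)$, so
\begin{align}
\| \psi(t) + \psi_{N}(t) \|_{2} \leq \| \psi(t) \|_{2} + \| \psi_{N}(t) \|_{2} \leq 2 \| \psi(0) \|_{2} = 2 \sqrt{\|\rho(0)\|_{1}},
\end{align}
which is bounded uniformly in $t$ and $N$. For the first factor, I would use the continuous embedding $H^{s}(M;g) \hookrightarrow L^{2}(M)$ for $s \geq 0$ (Proposition \ref{prop:compact_embedding}) to write $\| \psi(t) - \psi_{N}(t) \|_{2} \leq \| \psi(t) - \psi_{N}(t) \|_{s,2}$, and then invoke Theorem \ref{thm:half density convergence} directly, giving
\begin{align}
\| \psi(t) - \psi_{N}(t) \|_{s,2} \leq \| \psi(0) \|_{\bar{s},2} \, K_{T} \left( N^{-2(s-1)} t + \frac{n}{\bar{s}-s} N^{-2(\bar{s}-s)/n} \right) e^{C_{T} t}.
\end{align}

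The main obstacle is the translation between the half-density Sobolev norm $\| \psi(0) \|_{\bar{s},2}$ and the density Sobolev norm $\| \rho(0) \|_{\bar{s},1}$ appearing in the theorem statement. Formally one has $\psi(0) = \sqrt{|\rho(0)|} \cdot e^{i\theta(x)/2}$, where $\theta$ is the phase of $\rho(0)$, and one must argue that $\bar{s}$ weak derivatives in $L^{1}$ for $\rho(0)$ yield $\bar{s}$ weak derivatives in $L^{2}$ for its complex square root, up to the constant $K$ absorbing the embedding. This is essentially a Gagliardo--Nirenberg type composition estimate for the map $z \mapsto \sqrt{z}$, and it is where care is needed: the square root is not smooth at the zero set of $\rho(0)$, so the estimate requires either a mild nondegeneracy assumption on $\rho(0)$ or a regularization argument that lets us absorb the failure of smoothness into the constant $K$. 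Once this bound $\| \psi(0) \|_{\bar{s},2} \leq K_{0} \| \rho(0) \|_{\bar{s},1}$ is in place, combining it with the two bounds above and redefining $K := 2 K_{0} K_{T} \sqrt{\| \rho(0) \|_{1}}$ (or absorbing $\sqrt{\|\rho(0)\|_{1}}$ into $\|\rho(0)\|_{\bar{s},1}$ up to a multiplicative constant) yields the stated inequality.
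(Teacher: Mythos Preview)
Your proposal is correct and follows essentially the same route as the paper: factor $\psi^{2}-\psi_{N}^{2}$, apply H\"older/Cauchy--Schwarz on $L^{2}(M)$, and invoke Theorem~\ref{thm:half density convergence} for the $\|\psi-\psi_{N}\|$ factor. The only cosmetic difference is that the paper expands $\psi^{2}-(\psi-\phi_{N})^{2}=2\psi\phi_{N}-\phi_{N}^{2}$ rather than using the $(a-b)(a+b)$ factorization, and the paper simply asserts ``$\psi\in H^{s}(M)$'' without the careful discussion of square-root regularity that you (rightly) flag as the delicate point.
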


\begin{proof}
	Without loss of generality, assume that $\rho$ is non-negative (otherwise split it into its non-negative and non-positive components).
	Let $\psi \in L^{2}(M)$ be such that $\rho =  \psi ^{2}$, as described in Algorithm \ref{alg:density}.
	It follows that $\psi \in H^{s}(M)$ and we compute
	\begin{align}
		\| \rho(t) - \rho_{N}(t) \|_{1} = \int_{M} | \rho(t) - \rho_{N}(t)| = \int_{M} | \psi^{2} - \psi_{N}^{2} |
	\end{align}
	If we let $\phi_{N} = \psi - \psi_{N}$ then we can re-write the above as
	\begin{align}
		\| \rho(t) - \rho_{n}(t) \|_{1}  &= \int_{M} | \psi^{2} - (\psi - \phi_{N})^{2} | = \int_{M} | 2 \psi \phi_{N} - \phi_{N}^{2} | \\
			&\leq 2 \| \psi \|_{2} \| \phi_{N}\|_{2} + \| \phi_{N} \|_{2}^{2} = 2 \| \rho \|_{1}^{1/2} \cdot \| \phi_{N} \|_{2} + \| \phi_{N} \|_{2}^{2}
	\end{align}
	Above we have applied Holder's inequality to $L^{2}(M)$, which still holds upon using the isometry in Proposition \ref{prop:non canonical}.
	Theorem \ref{thm:half density convergence} provides a bound for $\| \phi_{N} \|$.
	Substitution of this bound into the above inequality yields the theorem.
\end{proof}

Finally, we prove that Algorithm \ref{alg:function} converges to a solution of \eqref{eq:quantum observable ode}, which is equivalent to a solution of \eqref{eq:function pde} courtesy of Theorem \ref{thm:quantize}:

\begin{proposition} \label{prop:function approximation}
	Let $f \in C^{k}(M)$ and let $H_{f,N} = \pi_{N} \circ H_{f} \circ \pi_{N}$.  Then
	\begin{align}
		\| H_{f} - H_{f,N} \|_{H^{s},op} \leq D \frac{n}{s} N^{-2s/n} \| \hat{f} \|_{op}
	\end{align}
	where $s > k \geq 1$, and $D$ is constant.
\end{proposition}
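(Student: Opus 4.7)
The plan is to split the error into two pieces that each involve the high-frequency projector $(I-\pi_N)$, whose operator norm between Sobolev spaces of differing regularity is controlled by Weyl's asymptotic formula. Specifically, I would write
\begin{align*}
H_f - H_{f,N} \;=\; H_f - \pi_N H_f \pi_N \;=\; (I-\pi_N) H_f \;+\; \pi_N H_f (I - \pi_N),
\end{align*}
and bound each summand in the $H^s$-operator norm separately. This ``add-subtract'' trick is the standard way to reduce an error estimate for a Galerkin-type truncation to two approximation estimates for $\pi_N$.

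For the second summand, note that $\pi_N$ is a contraction in $H^s$, so it suffices to bound $\|H_f(I-\pi_N)\|_{H^s,\mathrm{op}}$. Since $f\in C^k(M)$ with $k\ge 1$, pointwise multiplication by $f$ is a bounded operator on $H^s$ with norm controlled by a constant multiple of $\|H_f\|_{\mathrm{op}} = \|f\|_\infty$, so the whole summand is controlled by $\|H_f\|_{\mathrm{op}}$ times $\|(I-\pi_N)\|_{H^s \to L^2}$ (or an appropriate intermediate Sobolev norm). For the first summand, I would use that $H_f$ preserves the relevant Sobolev regularity, so the analysis reduces to bounding $\|(I-\pi_N)\|_{H^s,\mathrm{op}}$ applied to $H_f \psi$ for $\psi\in H^s$. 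In both cases the decay of $(I-\pi_N)$ is the fundamental input.

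To extract the explicit rate $N^{-2s/n}$, I would mimic the proof of Proposition \ref{prop:approximation} and the corresponding step in Theorem \ref{thm:half density convergence}: expand in the eigenbasis of the Helmholtz operator, express $(I-\pi_N)$ through the tail of this expansion, and use Weyl's asymptotic $\lambda_k = \mathcal{O}(k^{2/n})$ to bound the resulting tail. Summing/integrating the tail in the style of Proposition \ref{prop:approximation} produces the factor $\frac{n}{s} N^{-2s/n}$, while the boundedness of $H_f$ contributes the factor $\|\hat f\|_{\mathrm{op}}$ (here essentially $\|f\|_\infty$, since the multiplication operator $H_f$ has operator norm $\|f\|_\infty$, and Assumption \ref{ass:basis} allows one to convert between the chosen basis and the Fourier basis without loss).

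The main obstacle I anticipate is keeping the regularity bookkeeping straight: the hypothesis $s > k \geq 1$ forces us to use $C^k$-regularity of $f$ to obtain enough smoothness in $H_f\psi$ to apply the tail bound, and the constant $D$ must absorb both the norm of the Fourier-to-basis transformation (from Assumption \ref{ass:basis}) and the multiplicative constants produced by Weyl's law. Once these constants are collected, substituting into the decomposition above and invoking the triangle inequality yields the stated bound.
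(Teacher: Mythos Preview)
Your overall strategy---reduce everything to the decay of $(I-\pi_N)$ from $H^s$ to $L^2$ via Weyl's law---is exactly what the paper does, and your algebraic decomposition $H_f-\pi_N H_f\pi_N=(I-\pi_N)H_f+\pi_N H_f(I-\pi_N)$ is correct and equivalent to the paper's expansion once you pair against $\psi$.

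The difference is in how the ``$H^s$ operator norm'' is handled. The paper does \emph{not} try to bound $\|H_f-H_{f,N}\|$ as a map $H^s\to H^s$; instead it takes $\|A\|_{H^s,\mathrm{op}}=\sup_{\|\psi\|_{s,2}=1}\langle\psi\mid A\mid\psi\rangle$, i.e.\ the $L^2$ quadratic form over the $H^s$ unit sphere (valid since the operator is Hermitian). Expanding $\pi_N\psi=\psi-\pi_N^\perp\psi$ inside the quadratic form produces exactly your two cross terms, and each is bounded by Cauchy--Schwarz in $L^2$ together with $\|\pi_N^\perp\psi\|_2\le \tfrac{n}{s}N^{-2s/n}\|\psi\|_{s,2}$ from Proposition~\ref{prop:approximation}. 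Only the $L^2$ operator norm $\|H_f\|_{\mathrm{op}}=\|f\|_\infty$ is ever used.

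This is where your write-up has a gap: you assert that ``pointwise multiplication by $f$ is a bounded operator on $H^s$ with norm controlled by a constant multiple of $\|f\|_\infty$.'' Under the stated hypothesis $s>k\ge 1$ with $f\in C^k$, multiplication by $f$ need not preserve $H^s$ at all, and even when it does the operator norm depends on derivatives of $f$, not just $\|f\|_\infty$. If you pursue your route as a genuine $H^s\to H^s$ estimate, the term $(I-\pi_N)H_f\psi$ gives you no decay (the projector $I-\pi_N$ has $H^s\to H^s$ norm~$1$), and the term $\pi_N H_f(I-\pi_N)\psi$ would require lifting an $L^2$ bound back to $H^s$ on $V_N$, which costs powers of $N$. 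Switching to the quadratic-form interpretation removes all of these issues and collapses your argument into the paper's.
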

\begin{proof}
	Let $\pi_{N}^{\perp} = 1 - \pi_{N}$.  By Proposition \ref{prop:approximation} we know that
	\begin{align}
		\| \pi_{N}^{\perp}(\psi) \|_{2} \leq \frac{n}{s} N^{-2s/n} \| \psi \|_{s,2} \label{eq:hot inequality}
	\end{align}
	for $s>0$, then:
	\begin{align*}
		\| H_{f} - H_{f,N} \|_{H^{s},op} &= \sup_{\| \psi \|_{s,2}=1} \langle \psi \mid H_{f} - H_{f,N} \mid \psi \rangle \\
			&= \sup_{\| \psi \|_{s,2}=1} \left( \langle \psi \mid H_{f}  \mid \psi \rangle - \langle \psi - \pi_{N}^{\perp}(\psi) \mid H_{f} \mid \psi - \pi_{N}^{\perp}(\psi) \rangle \right) \\
			&= \sup_{\| \psi \|_{s,2}=1} \left( 2 \Re \langle \pi_{N}^{\perp}(\psi) \mid H_{f} \mid \psi \rangle - \langle \pi_{N}^{\perp}(\psi) \mid H_{f} \mid \pi_{N}^{\perp}(\psi) \rangle \right) \\
			&\leq \sup_{\| \psi \|_{s,2}=1}  ( \| \pi_{N}^{\perp}(\psi) \|_{2}- \| \pi_{N}^{\perp}(\psi) \|_{2}^{2} ) \| H_{f} \|_{op} 
	\end{align*}
	By \eqref{eq:hot inequality} the result follows.
\end{proof}

\begin{theorem} \label{thm:function convergence}
	Let $T > 0$ and $t \in [0,T]$ be fixed.
	Let $f(t)$ denote the solution to \eqref{eq:function pde} at time $t$ with initial condition $f(0) \in C^{k}(M)$.
	Let $H_{f,N}(t)$ denote the output of Algorithm \ref{alg:function} with respect to the inputs $(f(0) , t  , N)$ for some $N \in \mathbb{N}$.
	Then:
	\begin{align}
		\| H_{f(t)} - H_{f,N}(t) \|_{H^{s},op} &\leq D \frac{n}{s} N^{-2s/n} \| H_{f(t)} \|_{op} \\
			&\quad +  K_{T} \| H_{f,N}(t) \|_{op} \left( N^{1-s} + \frac{2n}{s -1} N^{(1-s)/n} \right) e^{C_{T}t} \nonumber
	\end{align}
	for the same constant $D$ as in Proposition \ref{prop:function approximation} and the same constants $C_{T},K_{T}$ as in Theorem \ref{thm:half density convergence}.
\end{theorem}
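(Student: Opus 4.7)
The plan is to mirror the proof of Theorem \ref{thm:density convergence} by splitting the error into an instantaneous approximation piece controlled by Proposition \ref{prop:function approximation} and a propagation piece controlled by Theorem \ref{thm:half density convergence}, with the two linked through the isospectral identities of Proposition \ref{prop:isospectral}. Begin with the triangle inequality
\begin{align*}
\|H_{f(t)} - H_{f,N}(t)\|_{H^s,op} \leq \|H_{f(t)} - \pi_N H_{f(t)}\pi_N\|_{H^s,op} + \|\pi_N H_{f(t)}\pi_N - H_{f,N}(t)\|_{H^s,op}.
\end{align*}
Applying Proposition \ref{prop:function approximation} to $f(t)$ in place of $f$ disposes of the first summand and yields the contribution $D\,\tfrac{n}{s} N^{-2s/n} \|H_{f(t)}\|_{op}$ of the stated bound.

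For the second summand I would rewrite both operators using Proposition \ref{prop:isospectral}: $H_{f(t)} = U(t) H_{f(0)} U(t)^\dagger$ and $H_{f,N}(t) = U_N(t)\,\pi_N H_{f(0)}\pi_N\,U_N(t)^\dagger$. Testing against a unit vector $\psi \in H^s(M;g)$, using that $H_{f,N}(t)$ factors through $\pi_N$ on both sides, and setting $\phi := U(-t)\pi_N\psi$ and $\phi_N := U_N(-t)\pi_N\psi \in V_N$, the quadratic form collapses to $\langle \phi, H_{f(0)} \phi\rangle - \langle \phi_N, H_{f(0)}\phi_N\rangle$. Polarising and invoking Cauchy--Schwarz with the $L^2$-unitarity of $U(-t)$ and the $V_N$-unitarity of $U_N(-t)$, this is bounded by $2\|H_{f(0)}\|_{op}\,\|\phi - \phi_N\|_{L^2}$, which equals $2\|H_{f(t)}\|_{op}\,\|\phi - \phi_N\|_{L^2}$ by sup-norm conservation and dominates $2\|H_{f,N}(t)\|_{op}\,\|\phi - \phi_N\|_{L^2}$ via the isospectral identity $\|H_{f,N}(t)\|_{op} = \|H_{f,N}(0)\|_{op} \leq \|H_{f(0)}\|_{op}$.

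The remaining step is to control $\|\phi - \phi_N\|_{L^2}$. Since $\pi_N\psi \in V_N$, the initial projection error $\varepsilon_N(0)$ in Theorem \ref{thm:half density convergence} vanishes for this initial condition, and applying that theorem to the backward flow generated by $-X$, with the Sobolev exponents there taken respectively as $(s+1)/2$ and $s$, gives
\begin{align*}
\|\phi - \phi_N\|_{L^2} \leq \|\phi - \phi_N\|_{(s+1)/2,2} \leq K_T \bigl(N^{1-s} t + \tfrac{2n}{s-1} N^{(1-s)/n}\bigr) e^{C_T t}\,\|\pi_N\psi\|_{s,2},
\end{align*}
where $\|\pi_N\psi\|_{s,2} \leq 1$ because $\pi_N$ is a contraction in $H^s(M;g)$. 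Absorbing the factor of $2$ and the bounded factor $t\leq T$ into $K_T$ produces the second summand in the theorem. The main technical point is the backward-in-time use of Theorem \ref{thm:half density convergence}; this is legitimate because $-X$ generates the inverse one-parameter group $U(-t)$ and Algorithm \ref{alg:half density} fed with input $-X$ numerically produces $U_N(-t)$, so the convergence rate is symmetric under time reversal. Everything else is bookkeeping on constants and the balanced choice of the intermediate exponent $(s+1)/2$, which is precisely what produces the coefficient $\tfrac{2n}{s-1}$ and the exponents $N^{1-s}$ and $N^{(1-s)/n}$ appearing in the theorem.
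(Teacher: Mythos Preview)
Your approach is essentially the paper's: both arguments split the error into an approximation piece handled by Proposition~\ref{prop:function approximation} and a propagation piece in which the isospectral identities of Proposition~\ref{prop:isospectral} reduce matters to a backward-in-time half-density error bounded by Theorem~\ref{thm:half density convergence} with the intermediate exponent $(s+1)/2$. The only organisational difference is that you insert $\pi_N H_{f(t)}\pi_N$ via the triangle inequality and flow $\pi_N\psi$ backward, whereas the paper works inside a single quadratic form and flows $\psi$ itself backward; your route is a bit cleaner and sidesteps a cross-term slip in the paper's quadratic-form algebra.

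One small logical point: your polarisation yields the upper bound $2\|H_{f(0)}\|_{op}\,\|\phi-\phi_N\|_{L^2} = 2\|H_{f(t)}\|_{op}\,\|\phi-\phi_N\|_{L^2}$, and your remark that this \emph{dominates} $2\|H_{f,N}(t)\|_{op}\,\|\phi-\phi_N\|_{L^2}$ points the inequality the wrong way for concluding the theorem exactly as stated. What you have actually established is the bound with $\|H_{f(t)}\|_{op}$ in the second term rather than $\|H_{f,N}(t)\|_{op}$; this is a slightly larger right-hand side but equally informative. The paper lands on $\|H_{f,N}(t)\|_{op}$ only because its decomposition routes the propagation piece through $H_{f,N}(0)$ instead of $H_{f(0)}$; with your splitting the constant $\|H_{f(t)}\|_{op}$ is the natural one, and nothing of substance is lost.
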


\begin{proof}
	We find
	\begin{align*}
		\| H_{f(t)} - H_{f,N}(t) \|_{H^{s},op} &= \sup_{\| \psi \|_{s,2} =1} \langle \psi \mid H_{f(t)} - H_{f,N}(t) \mid \psi \rangle
	\end{align*}
	In light of Proposition \ref{prop:isospectral} we find
	\begin{align*}
			\quad = \sup_{ \| \psi \|_{s,2} =1} \langle \psi \mid U(t) \cdot H_{f(0)} \cdot U(t)^{\dagger} - U_{N}(t) \cdot H_{f,N}(0) \cdot U_{N}(t)^{\dagger} \mid \psi \rangle.
	\end{align*}
	The output of Algorithm \ref{alg:function} indicates that $H_{f,N}(0) = \pi_{N}^{\perp} \circ H_{f(0)} \circ \pi_{N}^{\perp}$.
	Therefore, the above inline equation becomes
	\begin{align*}
			= \sup_{ \| \psi \|_{s,2} =1 } \langle U(t)^{\dagger} \psi \mid H_{f(0)} \mid U(t)^{\dagger} \psi \rangle
				- \langle U_{N}(t)^{\dagger} \psi \mid \pi_{N}^{\perp} \circ H_{f(0)} \circ \pi_{N}^{\perp} \mid U_{N}(t)^{\dagger} \psi \rangle.
	\end{align*}
	and finally
	\begin{align}
			&= \sup_{ \| \psi \|_{s,2} = 1} \langle U(t)^{\dagger} \psi \mid H_{f(0)} - H_{f,N}(0) \mid U(t)^{\dagger} \psi \rangle 
			- \langle  \phi(t) \mid H_{f,N}(0) \mid  \phi(t) \rangle \label{eq:final line}
	\end{align}
	where $\phi(t) = U_{N}(t)^{\dagger} \psi - U(t)^{\dagger} \psi$.
	
	The first term is bounded by Proposition \ref{prop:function approximation}.
	To bound the second term we must bound $\phi$.
	As $U_{N}(t)^{\dagger} \psi$ is the backwards time numerical solution to \eqref{eq:half density pde} and $U(t)^{\dagger}\psi$ is the exact backward time solution to \eqref{eq:half density pde},
	Theorem \ref{thm:half density convergence} prescribes the existence of constants $K$ and $C$ such that:
	\begin{align*}
		\| \phi \|_{\underline{s},2} = \| U_{N}(t)^{\dagger} \psi - U(t)^{\dagger} \psi \|_{\underline{s},2}  \leq K \| \psi \|_{s,2} \left(  N^{-2(\underline{s}-1)} + \frac{n}{s - \underline{s}} N^{-2(s-\underline{s})/n} \right) e^{Ct} 
	\end{align*}
	for any $\underline{s} <s$.
	This expression can be simplified by noting that $\| \psi \|_{s,2} = 1$, setting $\underline{s} = (1+s)/2$, and noting that the $H^{\underline{s}}$ norm is stronger than the $L^{2}$-norm to get:
	\begin{align*}
		 \| \phi \|_{2}  \leq  K \left(  N^{1-s} + \frac{2n}{s -1} N^{(1-s)/n} \right) e^{Ct}.
	\end{align*}
	By applying the Cauchy-Schwarz inequality to \eqref{eq:final line} and our derived bound on $\phi$:
	\begin{align*}
		\| H_{f(t)} - H_{f,N}(t) \|_{H^{s},op} &\leq \| H_{f(0)} - H_{f,N}(0) \|_{H^{s},op} \\
			&\quad+  K \| H_{f,N} \|_{op} \left( N^{1-s} + \frac{2n}{s -1} N^{(1-s)/n} \right) e^{Ct}
	\end{align*}
	Upon invoking Proposition \ref{prop:function approximation} we get the desired result.
\end{proof}

\section{Qualitative Accuracy} \label{sec:qualitative}
In this section, we prove that our numerical schemes are qualitatively accurate.
We begin by illustrating the preservation of appropriate norms.
Throughout this section let $\psi_{N}(t)$, $\rho_{N}(t)$, and $H_{f,N}(t)$ denote the sequence of outputs of Algorithms \ref{alg:half density}, \ref{alg:density}, and \ref{alg:function} with respect to initial conditions $\psi(0) \in H^{s}(M;g), \rho(0) \in W^{s,1}$ and $f(0) \in C^{s}(M)$ for $N = 1,2,\dots$.
	
\begin{theorem} \label{thm:norms}
	Let $\psi,\rho,f$ denote solutions to \eqref{eq:half density pde}, \eqref{eq:density pde}, and \eqref{eq:function pde} respectively.
	Let $\psi_{N}(t),\rho_{N}(t)$, and $H_{f,N}(t)$, denote outputs from algorithms \ref{alg:half density}, \ref{alg:density}, and \ref{alg:function} respectively for a time $t < \infty$.
	Then $\| \psi_{N} (t)\|_{2}, \|\rho_{N}(t)\|_{1},$ and $\| H_{f,N}(t) \|_{op}$ are constant with respect to $t$ for arbitrary $N \in \mathbb{N}$.
	Moreover,
	\begin{align*}
		\lim_{N \to \infty} \| \psi_{N} (t) \|_{2} = \| \psi(\cdot; t) \|_{2}, \\
		\lim_{N \to \infty} \| \rho_{N}(t) \|_{nuc} = \| \rho(\cdot; t) \|_{1}, \\
		\lim_{N \to \infty} \| H_{f,N}(t) \|_{op} = \| f( \cdot ;t) \|_{\sup}.
	\end{align*}
\end{theorem}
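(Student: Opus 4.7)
The plan is to separate the theorem into two halves: conservation of each norm at fixed $N$, and convergence of each numerical norm to the corresponding exact norm as $N\to\infty$. The conservation statements all fall out of the unitarity/isospectrality already proven in Propositions \ref{prop:unitary} and \ref{prop:isospectral}, while the convergence statements reduce to the case $t=0$ because the exact PDEs themselves preserve the corresponding norms. The only genuinely delicate step is the operator-norm limit for $H_{f,N}$, since operator norms behave poorly under merely strong convergence.

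For the conservation part, Proposition \ref{prop:unitary} writes $\psi_{N}(t) = U_{N}(t)\,\psi_{N}(0)$ for a unitary $U_{N}(t)$ on $V_{N}$, so $\|\psi_{N}(t)\|_{2}$ is immediately constant. Since Algorithm \ref{alg:density} defines $\rho_{N}(t,x) = \psi_{N}(t,x)^{2}$ pointwise, we have $|\rho_{N}(t)| = |\psi_{N}(t)|^{2}$, giving $\|\rho_{N}(t)\|_{1} = \int |\psi_{N}(t)|^{2} = \|\psi_{N}(t)\|_{2}^{2}$, constant by the previous sentence. Proposition \ref{prop:isospectral} expresses $H_{f,N}(t) = U_{N}(t)\,H_{f,N}(0)\,U_{N}(t)^{\dagger}$, and operator norms are invariant under unitary conjugation, so $\|H_{f,N}(t)\|_{op}$ is constant as well.

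For the limits, I would first note that each exact object also preserves its norm: Proposition \ref{prop:stone} gives $\|\psi(\cdot;t)\|_{2} = \|\psi(0)\|_{2}$, pushforward preserves the $L^{1}$-norm of a density, and $f(\cdot;t) = f(0)\circ (\Phi_{X}^{t})^{-1}$ shares the range of $f(0)$ so $\|f(\cdot;t)\|_{\sup} = \|f(0)\|_{\sup}$. Combined with conservation inside the schemes, it suffices to verify each limit at $t = 0$. The half-density case is immediate: $\psi_{N}(0) = \pi_{N}\psi(0)$, and Proposition \ref{prop:approximation} (or merely strong convergence $\pi_{N}\to I$ on $L^{2}(M)$) yields $\|\psi_{N}(0)\|_{2} \to \|\psi(0)\|_{2}$. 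For densities, the square-root construction in Algorithm \ref{alg:density} gives $\|\psi(0)\|_{2}^{2} = \int(\rho^{+}(0) + \rho^{-}(0)) = \|\rho(0)\|_{1}$, and then the chain $\|\rho_{N}(0)\|_{1} = \|\psi_{N}(0)\|_{2}^{2} \to \|\psi(0)\|_{2}^{2} = \|\rho(0)\|_{1}$ closes the case.

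The main obstacle, as anticipated, is showing $\|\pi_{N}\,H_{f(0)}\,\pi_{N}\|_{op} \to \|f(0)\|_{\sup}$. The trivial inequality $\|\pi_{N} H_{f} \pi_{N}\|_{op} \leq \|H_{f}\|_{op} = \|f\|_{\sup}$ holds because projections are contractions. For the reverse, pick $x_{0}\in M$ with $|f(x_{0})| = \|f\|_{\sup}$ (which exists since $f$ is continuous on the compact $M$), and for each $\epsilon > 0$ a unit-norm $\phi_{\epsilon} \in \sqrt{\Dens(M)}$ supported in a neighbourhood of $x_{0}$ on which $|f| > \|f\|_{\sup} - \epsilon$; then $\|H_{f}\phi_{\epsilon}\|_{2} \geq \|f\|_{\sup} - \epsilon$. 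Setting $\phi_{\epsilon,N} := \pi_{N}\phi_{\epsilon}/\|\pi_{N}\phi_{\epsilon}\|_{2}$ gives an eventual unit test vector in $V_{N}$, and because $\pi_{N}$ converges strongly to the identity we have $\pi_{N} H_{f}\pi_{N}\,\phi_{\epsilon,N} = \pi_{N} H_{f}\phi_{\epsilon,N} \to H_{f}\phi_{\epsilon}$ in $L^{2}$. Taking norms yields $\liminf_{N} \|\pi_{N} H_{f}\pi_{N}\|_{op} \geq \|f\|_{\sup} - \epsilon$, and sending $\epsilon\to 0$ completes the argument. This is the only step that exploits the approximation rate (in fact only density of $\bigcup_{N} V_{N}$) rather than the mere contractivity of $\pi_{N}$.
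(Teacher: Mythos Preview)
Your proof is correct and, in the operator-norm case, actually more careful than the paper's. The paper's proof invokes the convergence results of Section~\ref{sec:analysis} directly at time $t$: it cites that $H_{f,N}(t)\to H_{f(t)}$ ``in the operator norm'' (Theorem~\ref{thm:function convergence}) and concludes that $\|H_{f,N}(t)\|_{op}\to\|H_{f(t)}\|_{op}=\|f(\cdot;t)\|_{\sup}$, then says the other two cases are handled ``identically'' via Theorems~\ref{thm:half density convergence} and \ref{thm:density convergence}. Your route differs: you first reduce to $t=0$ by combining conservation inside the scheme with conservation along the exact flow, and then handle $t=0$ by elementary arguments that require only the strong convergence $\pi_{N}\to I$ on $L^{2}(M)$ rather than the quantitative estimates of Section~\ref{sec:analysis}.

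This buys you something real. Theorem~\ref{thm:function convergence} and Proposition~\ref{prop:function approximation} are stated for the $H^{s}$-operator norm $\|\cdot\|_{H^{s},op}$, whereas the quantity in Theorem~\ref{thm:norms} is the $L^{2}$-operator norm; the paper's one-line appeal therefore leaves a small gap that your direct test-vector argument for $\|\pi_{N}H_{f}\pi_{N}\|_{op}\to\|f\|_{\sup}$ avoids entirely. Your identity $\|\rho_{N}(t)\|_{1}=\|\psi_{N}(t)\|_{2}^{2}$ (from $|\psi_{N}^{2}|=|\psi_{N}|^{2}$) also covers the $\|\cdot\|_{nuc}$ variant in the statement, since for the rank-one operator $A_{\rho,N}=\psi_{N}\otimes\psi_{N}^{\dagger}$ one has $\|A_{\rho,N}\|_{nuc}=\|\psi_{N}\|_{2}^{2}$ as well.
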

\begin{proof}
	To prove $\| H_{f,N} \|_{op}$ is conserved note that the evolution is isospectral ~\cite{Calvo1997}.

	We have already shown that $H_{f,N}(t)$ converges to $H_{f}(t)$ in the operator norm.
	Convergence of the norms follows from the fact that $\| H_{f}(t) \|_{op} = \| f \|_{\sup}$.
	An identical approach is able to prove the desired properties for $\rho_{N}(t)$ and $\psi_{N}(t)$ as well.
\end{proof}

Theorem \ref{thm:norms} is valuable because each of the norms is naturally associated to the entity which it bounds, and these quantities are conserved for the PDEs that this paper approximates.
For example, $\| H_{f} \|_{op} = \| f \|_{\sup}$ for a function $f$, and this is constant in time when $f$ is a solution to \eqref{eq:function pde}.
A discretization constructed according to Algorithm \ref{alg:function} according to Theorem \ref{thm:norms} is constant for any $N$, no matter how small.

The full Banach algebra $C(M)$ is conserved by advection too.
This property is encoded in our discretization as well.
\begin{theorem} \label{thm:algebra}
	Let $f(x;t),g(x;t)$, and $h(x;t)$ be solutions of \eqref{eq:function pde} and let $k= f \cdot g + h $.
	Let $H_{f,N}, H_{g,N}$ and $H_{h,N}$ be numerical solutions constructed by Algorithm \ref{alg:function}, then $H_{k,N}(t) = H_{f,N}\cdot H_{g,N}+H_{h,N}$
	satisfies
	\begin{align}
		\frac{d}{dt} H_{k,N} = [ X_{N} , H_{k,N}].
	\end{align}
	Moreover, $H_{k,N}(t)$ strongly converges to $H_{k}$ as $N \to \infty$ in the operator norm on $H^{s}(M)$ when $f,g,h \in C^{s}(M)$ for $s>1$.
\end{theorem}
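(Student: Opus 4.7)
My plan has two components: verifying the commutator ODE and establishing convergence.

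For the ODE, I would use that each of $H_{f,N}$, $H_{g,N}$, $H_{h,N}$ satisfies $\dot A = [X_N, A]$ by construction of Algorithm \ref{alg:function}, together with the Leibniz rule. Since the map $A \mapsto [X_N, A]$ is a derivation on the matrix algebra,
\[
\frac{d}{dt}\bigl(H_{f,N} H_{g,N}\bigr) = [X_N, H_{f,N}]\, H_{g,N} + H_{f,N}\,[X_N, H_{g,N}] = [X_N,\, H_{f,N} H_{g,N}],
\]
and adding $\dot H_{h,N} = [X_N, H_{h,N}]$ gives the claimed ODE for $H_{k,N}$. Equivalently, Proposition \ref{prop:isospectral} gives $H_{f,N}(t) = U_N(t) H_{f,N}(0) U_N(t)^\dagger$ (and likewise for $g,h$), and the cancellation $U_N^\dagger U_N = I$ inserted between the two factors of $H_{f,N}(t) H_{g,N}(t)$ yields $H_{k,N}(t) = U_N(t) H_{k,N}(0) U_N(t)^\dagger$; differentiating recovers the commutator equation.

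For the convergence statement, I would first identify the limit. Since $H_f$ acts as pointwise multiplication by $f$, one has the algebra morphism $H_f H_g = H_{fg}$, and therefore $H_f H_g + H_h = H_{fg+h} = H_k$. The error then splits as
\[
H_{k,N}(t) - H_k(t) = (H_{f,N} - H_f)\,H_g + H_{f,N}\,(H_{g,N} - H_g) + (H_{h,N} - H_h).
\]
Taking the $H^{s}$ operator norm and applying the triangle inequality reduces the convergence of $H_{k,N}$ to the three already-established convergences $H_{f,N} \to H_f$, $H_{g,N} \to H_g$, $H_{h,N} \to H_h$ given by Theorem \ref{thm:function convergence}, provided the coefficients $\|H_g\|_{H^{s},op}$ and $\|H_{f,N}\|_{H^{s},op}$ remain bounded. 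Uniform boundedness of the latter follows from $\|H_{f,N}\|_{H^{s},op} \leq \|H_f\|_{H^{s},op} + \|H_{f,N} - H_f\|_{H^{s},op}$ together with Theorem \ref{thm:function convergence}.

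The main obstacle is therefore showing that $\|H_f\|_{H^{s},op}$ and $\|H_g\|_{H^{s},op}$ are finite, uniformly in $t \in [0,T]$. This requires that multiplication by a $C^s$ function is a bounded operator on $H^s(M)$, which is a Sobolev multiplier estimate on a compact manifold: for $s > n/2$ it is the standard algebra property of $H^s$, and the hypothesis $s > 1$ combined with $f,g,h \in C^s$ is sufficient on a general compact $M$. Uniformity in $t$ then follows because $f(\cdot;t) = f(0) \circ (\Phi_X^t)^{-1}$ stays in $C^s$ with $C^s$-norm controlled on $[0,T]$ by the smoothness of the flow of $X$.
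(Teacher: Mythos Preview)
Your argument is correct and matches the paper's approach. The paper establishes the ODE exactly via your ``equivalently'' route---writing $H_{f,N}(t)=U_N(t)H_{f,N}(0)U_N(t)^\dagger$ from Proposition~\ref{prop:isospectral}, inserting $U_N(t)^\dagger U_N(t)=I$ between the factors, and differentiating---and then dispatches convergence in a single sentence by citing Theorem~\ref{thm:function convergence}; your error-splitting and the Sobolev multiplier discussion make explicit what the paper leaves implicit in that citation.
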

\begin{proof}
	By construction, the output of Algorithm \ref{alg:function} is the result of an isospectral flow, and is therefore of the form 
	\begin{align}
		H_{f,N}(t) = U_{N}(t) H_{f,N}(0) U_{N}(t)^{\dagger} \\
		H_{g,N}(t) = U_{N}(t) H_{g,N}(0) U_{N}(t)^{\dagger} \\
		H_{h,N}(t) = U_{N}(t) H_{h,N}(0) U_{N}(t)^{\dagger}.
	\end{align}
	We then observe
	\begin{align}
		H_{k,N}(t) &= U_{N}(t) H_{k,N}(0) U_{N}(t)^{\dagger} = U(t)\left( H_{f,N}(0) H_{g,N}(0) + H_{h,N}(0) \right) U(t)^{\dagger} \\
			&= U(t)H_{f,N}(0) U(t)^{\dagger} U(t) H_{g,N}(0)U(t)^{\dagger} + U(t) H_{h,N}(0) U(t)^{\dagger} \\
			&=H_{f,N}(t)H_{g,N}(t) + H_{h,N}(t).
	\end{align}
	Differentiation in time implies the desired result.
	Convergence follows from Theorem \ref{thm:function convergence}.
\end{proof}

Finally, the duality between functions and densities is preserved by advection.  If $f$ satisfies \eqref{eq:function pde} and $\rho $ satisfies \eqref{eq:density pde} then $\int f \rho$ is conserved in time.
Algorithms \ref{alg:density} and \ref{alg:function} satisfy this same equality:
\begin{theorem}
	For each $N \in \mathbb{N}$, $\Tr( H_{f,N} A_{\rho,N} (t))$ is constant in time where $A_{\rho,N} (t)= \psi_{N}(t) \otimes \psi_{N}(t)^{\dagger}$.
	Moreover, $\Tr( H_{f,N} A_{\rho,N} )$ converges to the constant $\int f\rho $ as $N \to \infty$.
\end{theorem}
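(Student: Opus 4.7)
The plan is to reduce the statement to two ingredients already built into the construction: the isospectral/unitary structure of the two algorithms, and the convergence of the basic objects under $\pi_N$.

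For the time-invariance, first I would invoke Proposition \ref{prop:isospectral} to write $H_{f,N}(t)=U_N(t)\,H_{f,N}(0)\,U_N(t)^{\dagger}$, and Proposition \ref{prop:unitary} to write $\psi_N(t)=U_N(t)\,\psi_N(0)$, so that
\begin{align}
    A_{\rho,N}(t) \;=\; \psi_N(t)\otimes\psi_N(t)^{\dagger} \;=\; U_N(t)\,A_{\rho,N}(0)\,U_N(t)^{\dagger}.
\end{align}
Multiplying these two identities, the unitary $U_N(t)^{\dagger}$ in the middle cancels against $U_N(t)$, and then cyclicity of the trace eliminates the remaining outer pair $U_N(t),\,U_N(t)^{\dagger}$. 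This immediately yields $\Tr(H_{f,N}(t)A_{\rho,N}(t))=\Tr(H_{f,N}(0)A_{\rho,N}(0))$, i.e. time-invariance for every fixed $N$.

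Having established time-invariance, I would evaluate the convergence at $t=0$. By construction $H_{f,N}(0)=\pi_N\circ H_{f(0)}\circ\pi_N$ and $\psi_N(0)=\pi_N\psi(0)$ (where $\psi(0)$ is the square root of $\rho(0)$ introduced in Algorithm \ref{alg:density}). Hence, using $\pi_N^2=\pi_N$ and the rank-one structure of $A_{\rho,N}(0)$,
\begin{align}
    \Tr\!\bigl(H_{f,N}(0)\,A_{\rho,N}(0)\bigr) \;=\; \langle \pi_N\psi(0)\mid H_{f(0)} \mid \pi_N\psi(0)\rangle.
\end{align}
Since $\pi_N\psi(0)\to\psi(0)$ in $L^2(M)$ and $H_{f(0)}$ is a bounded operator on $L^2(M)$ with norm $\|f(0)\|_{\infty}$, the continuous sesquilinear form above converges to $\langle \psi(0)\mid H_{f(0)}\mid\psi(0)\rangle=\int f(0)\,|\psi(0)|^2$. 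For the non-negative density case $\psi(0)=\sqrt{\rho(0)}$ is real, so $|\psi(0)|^2=\rho(0)$ and the limit is $\int f(0)\rho(0)$; the signed case is handled by splitting $\rho=\rho^{+}-\rho^{-}$ and treating the two components separately with the prescription from Algorithm \ref{alg:density}. Combining with the time-invariance already proven, the limit equals $\int f(\cdot;t)\rho(\cdot;t)$ at every $t$.

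The only mildly subtle point is the bookkeeping in the signed case, where the identification of $\Tr(H_{f,N}A_{\rho,N})$ with $\int f\rho$ requires carefully using the decomposition $\psi=\sqrt{\rho^{+}}-i\sqrt{\rho^{-}}$ together with the disjointness of the supports of $\rho^{\pm}$; everything else is an immediate consequence of the unitary/isospectral structure established earlier in the paper and the $L^2$-convergence of the spectral projection $\pi_N$.
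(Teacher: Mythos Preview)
Your proof of time-invariance is exactly the paper's: write $H_{f,N}(t)=U_N(t)H_{f,N}(0)U_N(t)^{\dagger}$ and $A_{\rho,N}(t)=U_N(t)A_{\rho,N}(0)U_N(t)^{\dagger}$, then cancel the inner $U_N^{\dagger}U_N$ and use cyclicity of the trace.

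For the convergence part you take a different, more elementary route. The paper simply appeals to Theorems~\ref{thm:function convergence} and~\ref{thm:density convergence} (the full $H^s$-operator-norm and $L^1$ convergence results for Algorithms~\ref{alg:function} and~\ref{alg:density}). You instead exploit the time-invariance you just proved to evaluate at $t=0$, unwind the trace of the rank-one operator as $\langle \pi_N\psi(0)\mid H_{f(0)}\mid \pi_N\psi(0)\rangle$, and then pass to the limit using only $\pi_N\psi(0)\to\psi(0)$ in $L^2(M)$ together with boundedness of $H_{f(0)}$. This is cleaner and avoids the Sobolev machinery entirely; the paper's citation of the heavy convergence theorems is valid but overkill, since no time-evolved error estimate is actually needed once constancy in $t$ is known. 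You also flag the signed-density bookkeeping (that $\langle\psi\mid H_f\mid\psi\rangle=\int f\,|\psi|^2$, which equals $\int f\rho$ only via the disjoint-support decomposition), a point the paper's proof does not address explicitly.
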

\begin{proof}
	As $H_{f,N}(t) = U_{N}(t) H_{f,N}(0) U_{N}(t)^{\dagger}$ and $\psi_{N}(t) = U_{N} \cdot \psi_{N}(0)$ we observe that
	\begin{align}
		\Tr( H_{f,N}(t) (\psi_{N} (t) \otimes \psi_{N}^{\dagger} (t) ) ) &= \Tr(  U_{N}(t) H_{f,N}(0) U_{N}(t)^{\dagger} U)_{N}(t) (\psi_{N}(0) \otimes \psi_{N}(0)^{\dagger}) U_{N}(t)^{\dagger}) \\
		&= \Tr( H_{f,N}(0) ( \psi_{N}(0) \otimes \psi_{N}(0)^{\dagger} ) )
	\end{align}
	Convergence follows from Theorems \ref{thm:function convergence} and \ref{thm:density convergence}.
\end{proof}

\section{Numerical Experiments} \label{sec:numerics}

This section describes two numerical experiments.  First, a benchmark computation to illustrate the spectral convergence of our method and the conservation properties in the case of a known solution is considered.

\subsection{Benchmark computation}
\label{sec:benchmark}
Consider the vector field $\dot{x} = -\sin(2 x)$ for $x \in S^{1}$.
The flow of this system is given by:
\begin{align}
	\Phi_{X}^{t}(x) = \operatorname{atan} \left( e^{2t} \tan( x) \right).
\end{align}
If the initial density is a uniform distribution, $\rho_{0} = dx$, then the the exact solution of \eqref{eq:density pde} is:
\begin{align}
	\rho(x;t) =  \left( e^{2t} \sin^{2}(x) + e^{-2t} \cos^{2}(x) \right)^{-1}  |dx| \label{eq:exact solution}
\end{align}
Figure \ref{fig:S1} depicts the evolution of $\rho(x;t)$ at $t=1.5$ with an initial condition.
Figure \ref{fig:exact} depicts the exact solution, given by \eqref{eq:exact solution},  Figure \ref{fig:standard spectral} depicts the numerical solution computed from a standard Fourier discretization of \eqref{eq:density pde} with 32 modes, and Figure \ref{fig:gn spectral} depicts the numerical solution solution computed using Algorithm \ref{alg:density} with 32 modes.

\begin{figure}[h!]
	\hspace*{-.5cm}
	\begin{subfigure}{0.36\textwidth}
		\includegraphics[width=0.9\textwidth]{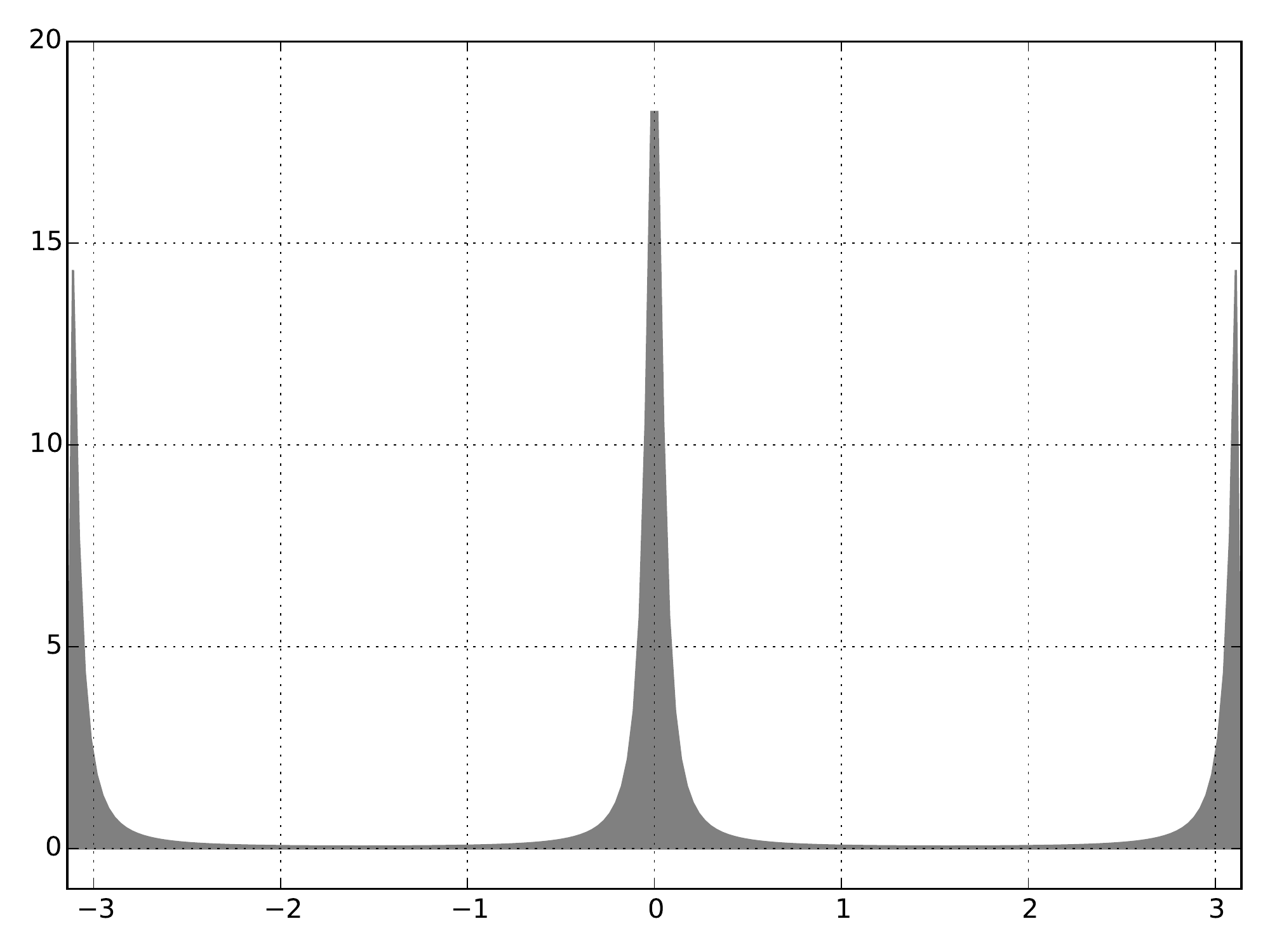}
		\caption{Exact}
		\label{fig:exact}
	\end{subfigure}
	\hspace*{-.65cm}
	\begin{subfigure}{0.36\textwidth}
		\includegraphics[width=0.9\textwidth]{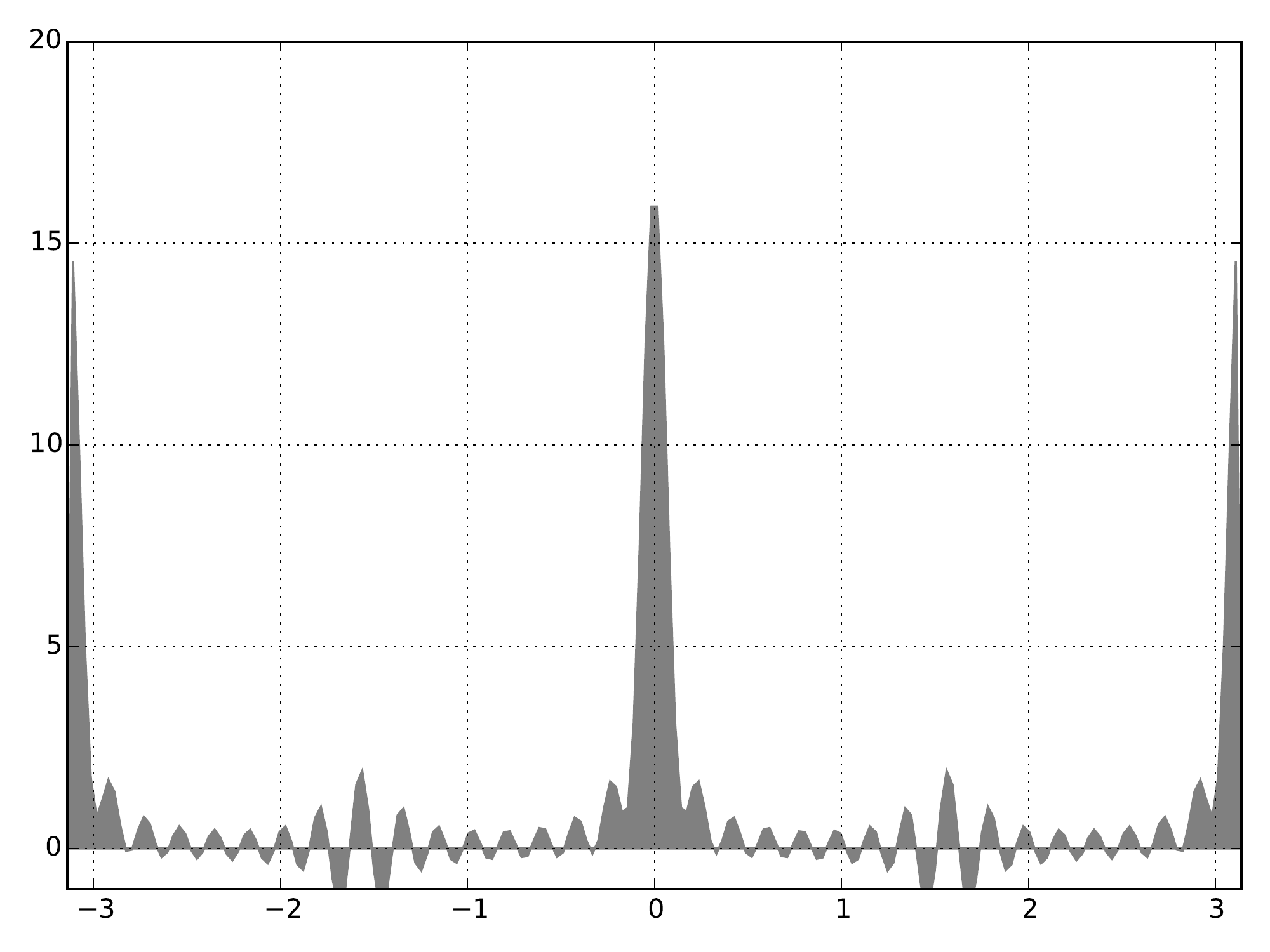}
		\caption{Standard spectral}
		\label{fig:standard spectral}
	\end{subfigure}
	\hspace*{-.65cm}
	\begin{subfigure}{0.36\textwidth}
		\includegraphics[width=0.9\textwidth]{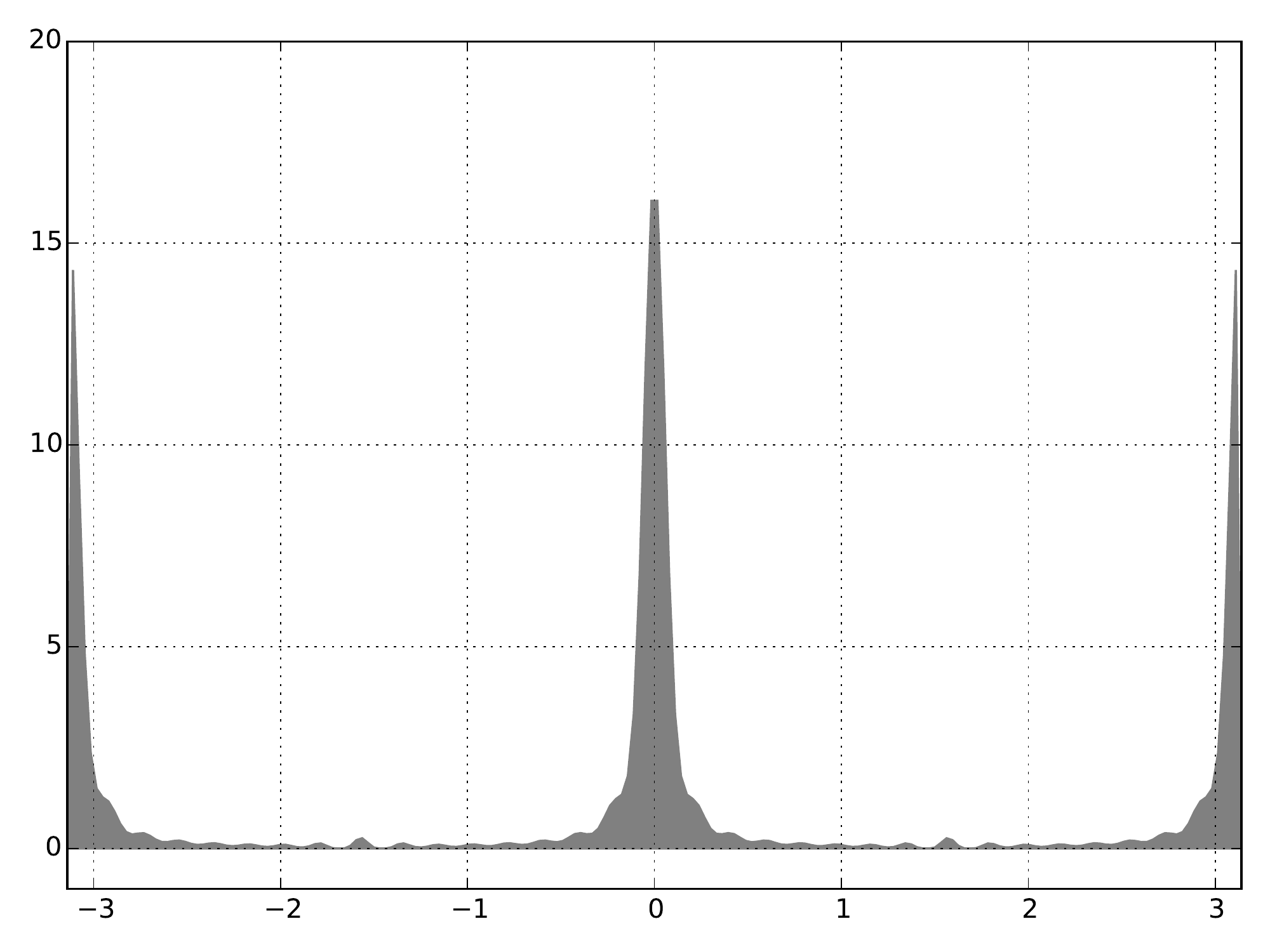}
		\caption{Algorithm \ref{alg:density}}
		\label{fig:gn spectral}
	\end{subfigure}
	\caption{A benchmark illustration of Algorithm \ref{alg:density} on the example described in Section \ref{sec:benchmark}.}
	\label{fig:S1}
\end{figure}

Here we witness how Algorithm \ref{alg:density} has greater qualitative accuracy than a standard spectral discretization, in the ``soft'' sense of qualitative accuracy.
For example, standard spectral discretization exhibits negative mass, which is not achievable in the exact system.
Moreover, the $L^{1}$-norm is not conserved in standard spectral discretization.  
In contrast, Theorem \ref{thm:norms} proves that the $L^{1}$-norm is conserved by Algorithm \ref{alg:density}.
A plot of the $L^{1}$-norm is given in Figure \ref{fig:L1}.
Finally, a convergence plot is depicted in Figure \ref{fig:convergence}.  
Note the spectral convergence of Algorithm \ref{alg:density}.
In terms of numerical accuracy, Algorithm \ref{alg:density} appears to have a lower coefficient of convergence.

\begin{figure}[h!]
	\hspace*{-1.2cm}
	\centering
	\includegraphics[width=0.8\textwidth]{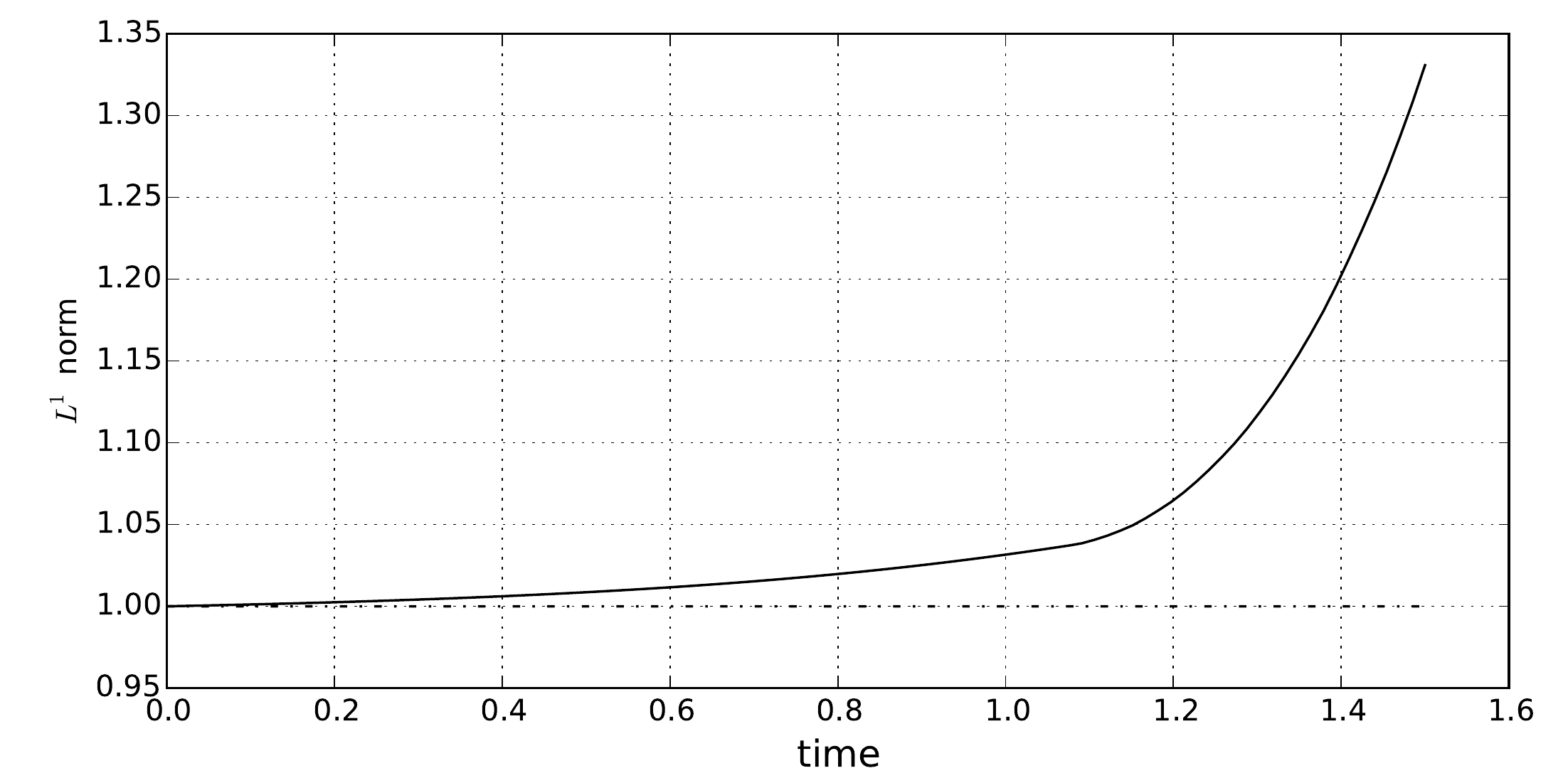}
	\caption{A plot of the $L^{1}$-norm vs time of a standard spectral discretization (solid) and the result of Algorithm \ref{alg:density} (dotted) on the example described in Section \ref{sec:benchmark}.}
	\label{fig:L1}
\end{figure}

\begin{figure}[h!]
	\hspace*{-1.2cm}
	\centering
	\includegraphics[width=0.9\textwidth]{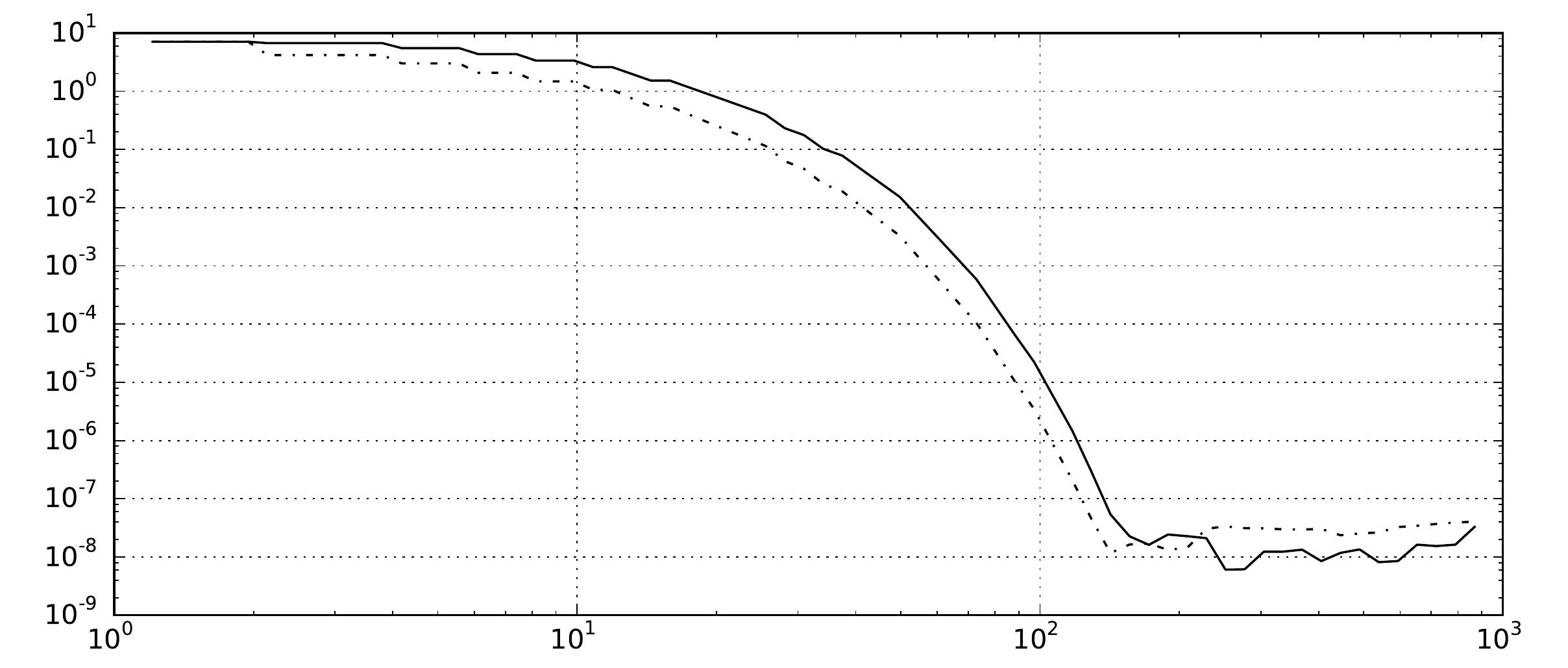}
	\caption{Convergence plot for Algorithm \ref{alg:density} (dotted) and a standard spectral method (solid) in the $L^{1}$-norm.}
	\label{fig:convergence}
\end{figure}

In general, Algorithm \ref{alg:function} is very difficult to work with, as it outputs an operator rather than a classical function.
However, Algorithm \ref{alg:function} is of theoretical value, in that it may inspire new ways of discretization (in particular, if one is only interested in a few level sets).
We do not investigate this potentiality here in the interest of focusing on the qualitative aspects of this discretization.
For example, under the initial conditions $g_{0}(x) = \cos(x)$ and $f_{0} = \sin(x)$ the exact solutions to \eqref{eq:quantum observable ode} are:
\begin{align*}
	g(x,t) &= \cos(x) \left( e^{4t} \sin^{2}(x) + \cos^{2}(x) \right)^{-1/2}\\
	f(x,t) &= \sin(x) \left( \sin^{2}(x) + e^{-4t} \cos^{2}(x) \right)^{-1/2}
\end{align*}
Under the initial condition $h_{0} = f_{0} \cdot g_{0}  = \sin(x) \cos(x)$ the exact solution to \eqref{eq:quantum observable ode} is:
\begin{align*}
	h(x,t) = f(x,t) g(x,t) = \cos(x) \sin(x) \left( \cos^{2}(x) + e^{4t} \sin^{2}(x) \right)^{-1}.
\end{align*}
One can compute $h$ by first multiplying the initial conditions and then using Algorithm \ref{alg:function} to evolve in time, or we may evolve each initial condition in time first, and multiply the outputs.
If one uses Algorithm \ref{alg:function}, then both options, as a result of Theorem \ref{thm:algebra}, yield the same result up to time discretization error (which is obtained with error tolerance $1e-8$ in our code).
In contrast, if one uses a standard spectral discretization, then these options yield different results with a discrepancy.
This discrepancy between the order of operations for both discretization methods is depicted in Figure \ref{fig:discrepancy}.

Finally, the sup-norm is preserved by the solution of \eqref{eq:function pde}.
As shown in Theorem \ref{thm:quantize}, the sup-norm is equivalent to the operator norm when the functions are represented as operators on $L^{2}(M)$.
As proven by Theorem \ref{thm:norms}, the operator-norm is conserved by Algorithm \ref{alg:function}.
In contrast, the sup-norm drifts over time under a standard discretization.  
This is depicted in Figure \ref{fig:norms}

\begin{figure}[h!]
	\hspace*{-1.2cm}
	\includegraphics[width=1.15\textwidth]{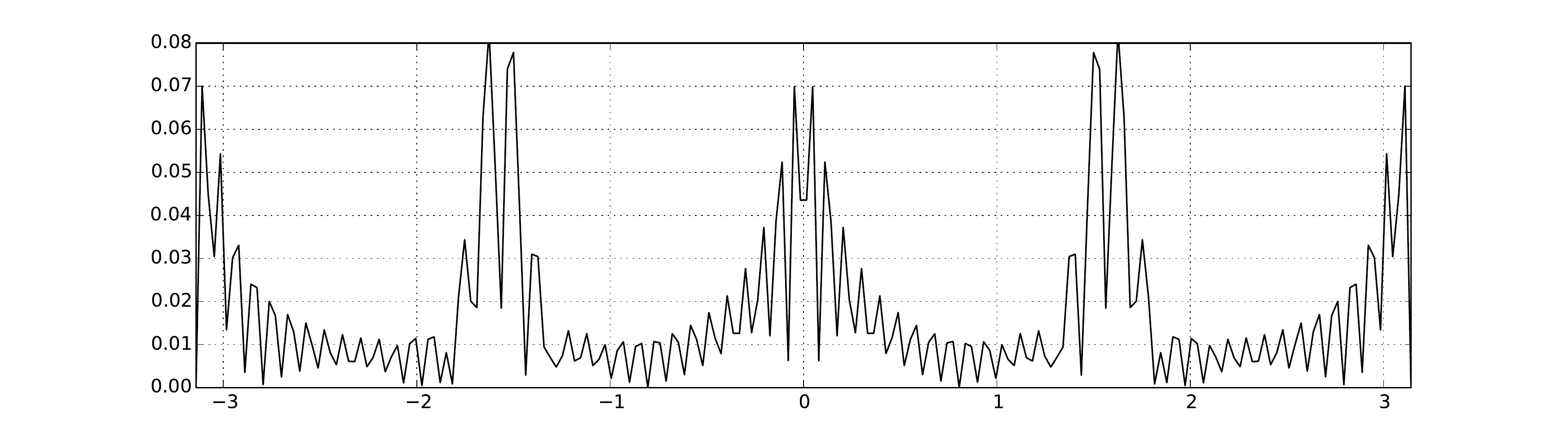}
	\caption{The discrepancy due to non-preservation of scalar products under a standard spectral Galerkin discretization. 
	The discrepancy of Algorithm \ref{alg:function} (not plotted) is attributable to our time-discretization scheme where we only tolerated error of $10^{-8}$ in this instance.}
	\label{fig:discrepancy}
\end{figure}  

\begin{figure}[h!]
	\hspace*{-1.2cm}
	\centering
	\includegraphics[width=0.8\textwidth]{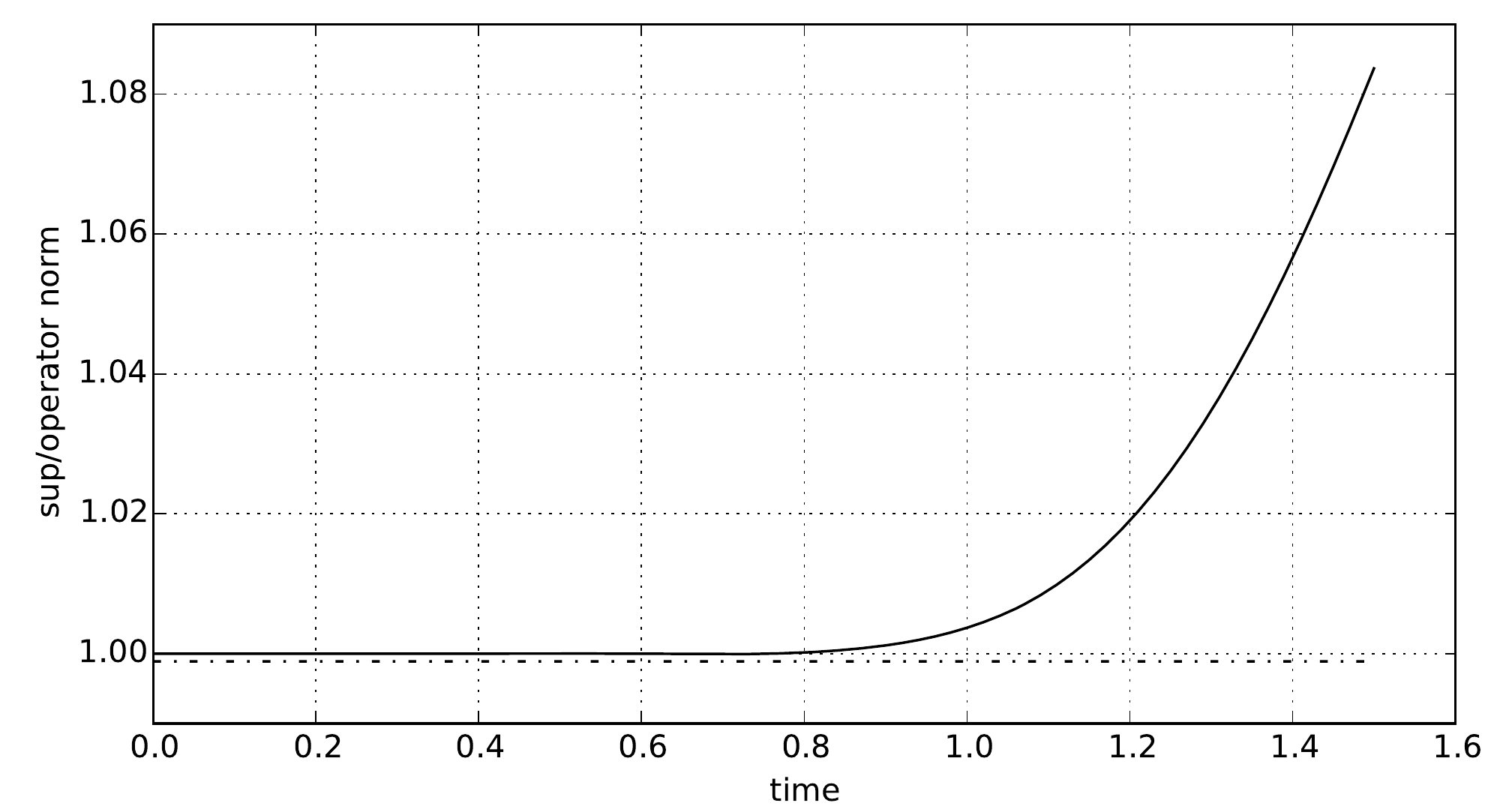}
	\caption{A plot of the sup-norm vs time of a standard spectral discretization (blue) and the result of Algorithm \ref{alg:function} (red) on the example described in Section \ref{sec:benchmark}.}
	\label{fig:norms}
\end{figure}

\subsection{A modified ABC flow}
\label{sec:ABC_flow}

Consider the system
\begin{align}
	\dot{x} = A \sin( 2\pi z) + C \cos(2\pi y)  + D \cos(2\pi x)\\
	\dot{y} = B \sin( 2\pi z) + A \cos(2\pi y)  + D \cos(2\pi y)\\
	\dot{z} = A \sin( 2\pi z) + B \cos(2\pi y)  + D \cos(2\pi z)
\end{align}
on the three-torus for constants $A,B,C,D \in \mathbb{R}$.  
When $D=0$ this system is the well studied volume conserving system known as an Arnold-Beltrami-Childress flow~\cite{ArnoldKhesin1992}.
When $A > B > C > 0$, $D=0$, and $C << 1$, then the solutions to this ODE are chaotic, with a uniform steady state distribution ~\cite{MajdaBertozzi2002}.
When $D=0$ the operator $\pounds_{X}$ of \eqref{eq:half density pde} is identical to the operator $\partial_{\alpha}( \rho X^{\alpha})$ that appears in \eqref{eq:density pde}, and Algorithm \ref{alg:half density} do not differ from a standard spectral discretization.\footnote{This is always the case for a volume conserving system.}
Therefore we consider the case where $D > 0$ to see how our discretization are differs from the standard one.
When $D> 0$ volume is no longer conserved and there is a non-uniform steady-state distribution.

For the following numerical experiment let $A=1.0,B=0.5,C=0.2,$ and $D=0.5$.
As an initial condition consider a wrapped Gaussian distribution with anisotropic variance $\sigma= (0.2, 0.3, 0.3)$ centered at $(0,0,0)$.
Equation \eqref{eq:density pde} is approximately solved using Algorithm \ref{alg:density}, Monte-Carlo, and a standard spectral method.
The results of the $z$-marginal of these densities are illustrated in Figure \ref{fig:ABCD}.
The top row depicts the results from using Algorithm \ref{alg:density} using $33$ modes along each dimension.
The middle row depicts the results from using a Monte-Carlo method with $15^{3} = 3375$ particles as a benchmark computation.
Finally, the bottom row depicts the results from using a standard Fourier based discretization of \eqref{eq:density pde} using 33 modes along each dimension.
Notice that Algorithm \ref{alg:density} performs well when compared to the standard discretization approach.

\begin{figure}[h!]
	\centering
	\includegraphics[width=1\textwidth]{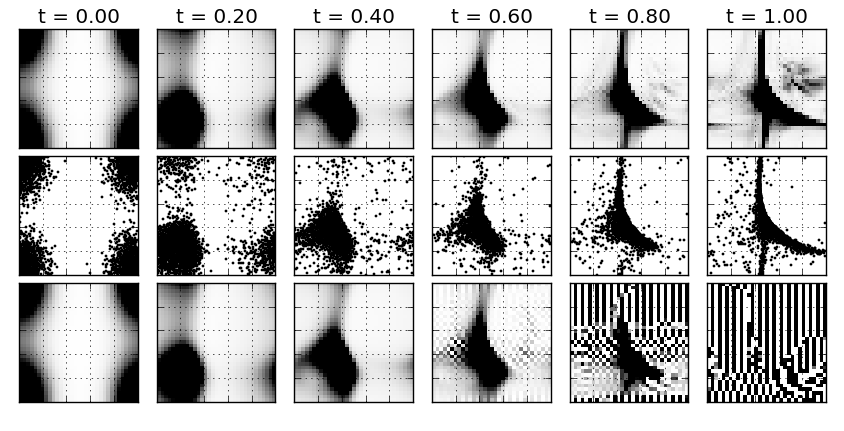}
	\caption{An illustration of the performance of Algorithm \ref{alg:density} (top row), Monte Carlo (middle row) and a standard spectral Galerkin (bottom row) on the example described in Section \ref{sec:ABC_flow}.
	The domain is the $2$-torus.  Here we've consider an initial probability density given by a wrapped Gaussian. Darker regions represent areas of higher-density.}
	\label{fig:ABCD}
\end{figure}

\section{Conclusion}

In this paper we constructed a numerical scheme for \eqref{eq:function pde} and \eqref{eq:density pde} that is spectrally convergent and qualitatively accurate, in the sense that natural invariants are preserved.
The result of obeying such conservation laws is a robustly well-behaved numerical scheme at a variety of resolutions where legacy spectral methods fail.
This claim was verified in a series of numerical experiments which directly compared our algorithms with standard Fourier spectral algorithms.
The importance of these conservation laws was addressed in a short discussion on the Gelfand Transform.
We found that conservation laws completely characterize \eqref{eq:function pde} and \eqref{eq:density pde}, and this explains the benefits of using qualitatively accurate scheme at a more fundamental level.

\subsection{Acknowledgements}
This paper developed over the course of years from discussions with many people whom we would like to thank: Jaap Eldering, Gary Froyland,
 	Darryl Holm, Peter Koltai, Stephen Marsland, Igor Mezic, Peter Michor, Dmitry Pavlov, Tilak Ratnanather, and Stefan Sommer. 
This research was made possible by funding from the University of Michigan.


\begin{thebibliography}{10}

\bibitem{MTA}
{\sc R~Abraham, J~E Marsden, and T~S Ratiu}, {\em Manifolds, Tensor Analysis,
  and Applications}, vol.~75 of Applied Mathematical Sciences, Spinger,
  3rd~ed., 2009.

\bibitem{ArnoldKhesin1992}
{\sc V.~I. Arnold and B.~A. Khesin}, {\em Topological Methods in
  Hydrodynamics}, vol.~24 of Applied Mathematical Sciences, Springer Verlag,
  1992.

\bibitem{Balci2011}
{\sc Nusret Balci, Becca Thomases, Michael Renardy, and Charles~R Doering},
  {\em Symmetric factorization of the conformation tensor in viscoelastic fluid
  models}, Journal of Non-Newtonian Fluid Mechanics, 166 (2011), pp.~546--553.

\bibitem{Batchelor1999}
{\sc G.~K. Batchelor}, {\em An introduction to fluid dynamics}, Cambridge
  Mathematical Library, Cambridge University Press, Cambridge, paperback~ed.,
  1999.

\bibitem{BatesWeinstein1997}
{\sc S.~Bates and A.~Weinstein}, {\em Lectures on the geometry of
  quantization}, vol.~8 of Berkeley Mathematics Lecture Notes, American
  Mathematical Society, Providence, RI, 1997.

\bibitem{BittracherKoltaiJunge2015}
{\sc Andreas Bittracher, P{\'e}ter Koltai, and Oliver Junge}, {\em
  Pseudogenerators of spatial transfer operators}, SIAM Journal on Applied
  Dynamical Systems, 14 (2015), pp.~1478--1517.

\bibitem{Boyd2001}
{\sc John~P. Boyd}, {\em Chebyshev and {F}ourier spectral methods}, Dover
  Publications, Inc., Mineola, NY, second~ed., 2001.

\bibitem{VODE}
{\sc Peter~N. Brown, George~D. Byrne, and Alan~C. Hindmarsh}, {\em V{ODE}: a
  variable-coefficient {ODE} solver}, SIAM J. Sci. Statist. Comput., 10 (1989),
  pp.~1038--1051.

\bibitem{BudisicMohrMezic2012}
{\sc Marko Budi{\v s}i{\'c}, Ryan Mohr, and Igor Mezi{\'c}}, {\em Applied
  koopmanism}, Chaos: An Interdisciplinary Journal of Nonlinear Science, 22
  (2012), pp.~--.

\bibitem{Calvo1997}
{\sc Mari Calvo, Arieh Iserles, and Antonella Zanna}, {\em Numerical solution
  of isospectral flows}, Mathematics of Computation of the American
  Mathematical Society, 66 (1997), pp.~1461--1486.

\bibitem{Chavel1984}
{\sc Isaac Chavel}, {\em Eigenvalues in {R}iemannian geometry}, vol.~115 of
  Pure and Applied Mathematics, Academic Press, Inc., Orlando, FL, 1984.
\newblock Including a chapter by Burton Randol, With an appendix by Jozef
  Dodziuk.

\bibitem{Conway1990}
{\sc John~B. Conway}, {\em A course in functional analysis}, vol.~96 of
  Graduate Texts in Mathematics, Springer-Verlag, New York, second~ed., 1990.

\bibitem{Crane2013}
{\sc Keenan Crane, Ulrich Pinkall, and Peter Schr\"{o}der}, {\em Robust fairing
  via conformal curvature flow}, ACM Trans. Graph., 32 (2013).

\bibitem{Dirac2013}
{\sc Paul~AM Dirac}, {\em Lectures on quantum mechanics}, Courier Corporation,
  2013.

\bibitem{Evans2010}
{\sc Lawrence~C. Evans}, {\em Partial differential equations}, vol.~19 of
  Graduate Studies in Mathematics, American Mathematical Society, Providence,
  RI, second~ed., 2010.

\bibitem{FroylandJungeKoltai2013}
{\sc G.~Froyland, O.~Junge, and P.~Koltai}, {\em Estimating long-term behavior
  of flows without trajectory integration: the infinitesimal generator
  approach}, SIAM J. Numer. Anal., 51 (2013), pp.~223--247.

\bibitem{FroylandPadberg2009}
{\sc G.~Froyland and K.~Padberg}, {\em Almost-invariant sets and invariant
  manifolds---connecting probabilistic and geometric descriptions of coherent
  structures in flows}, Phys. D, 238 (2009), pp.~1507--1523.

\bibitem{GelfandNaimark1943}
{\sc I.~Gelfand and M.~Neumark}, {\em On the imbedding of normed rings into the
  ring of operators in {H}ilbert space}, Rec. Math. [Mat. Sbornik] N.S., 12(54)
  (1943), pp.~197--213.

\bibitem{Gottlieb2001}
{\sc D.~Gottlieb and J.S. Hesthaven}, {\em Spectral methods for hyperbolic
  problems}, Journal of Computational and Applied Mathematics, 128 (2001),
  pp.~83 -- 131.
\newblock Numerical Analysis 2000. Vol. VII: Partial Differential Equations.

\bibitem{Gottlieb1977numerical}
{\sc David Gottlieb and Steven~A Orszag}, {\em Numerical analysis of spectral
  methods: theory and applications}, vol.~26, Siam, 1977.

\bibitem{Bondia2001}
{\sc Jos{\'e}~M. Gracia-Bond{\'{\i}}a, Joseph~C. V{\'a}rilly, and H{\'e}ctor
  Figueroa}, {\em Elements of noncommutative geometry}, Birkh\"auser Advanced
  Texts: Basler Lehrb\"ucher. [Birkh\"auser Advanced Texts: Basel Textbooks],
  Birkh\"auser Boston, Inc., Boston, MA, 2001.

\bibitem{GuilleminSternberg1970}
{\sc V~Guillemin and S~Sternberg}, {\em Geometric Asymptotics}, vol.~14 of
  Mathematical Surveys and Monographs, American Mathematical Society, 1970.

\bibitem{Hebey1999}
{\sc Emmanuel Hebey}, {\em Nonlinear analysis on manifolds: {S}obolev spaces
  and inequalities}, vol.~5 of Courant Lecture Notes in Mathematics, New York
  University, Courant Institute of Mathematical Sciences, New York; American
  Mathematical Society, Providence, RI, 1999.

\bibitem{HenrionKorda2014}
{\sc D~Henrion and M~Korda}, {\em Convex computation of the region of
  attraction of polynomial control systems}, IEEE Transactions on Automatic
  Control, 59 (2014), pp.~297--312.

\bibitem{Hormander2003}
{\sc Lars H{\"o}rmander}, {\em The analysis of linear partial differential
  operators. {I}}, Classics in Mathematics, Springer-Verlag, Berlin, 2003.
\newblock Distribution theory and Fourier analysis, Reprint of the second
  (1990) edition [Springer, Berlin; MR1065993 (91m:35001a)].

\bibitem{MO_dense_charts}
{\sc Richard~Montgomery (http://mathoverflow.net/users/2906/richard
  montgomery)}, {\em Does every compact manifold exhibit an almost global
  chart}.
\newblock MathOverflow.
\newblock URL:http://mathoverflow.net/q/177913 (version: 2014-08-06).

\bibitem{Ismagilov1975}
{\sc R.~S. Ismagilov}, {\em The unitary representations of the group of
  diffeomorphisms of the space {$R^{n},$} {$n\geq 2$}}, Mat. Sb. (N.S.),
  98(104) (1975), pp.~55--71, 157--158.

\bibitem{koltai2011thesis}
{\sc P{\'e}ter Koltai}, {\em Efficient approximation methods for the global
  long-term behavior of dynamical systems: theory, algorithms and examples},
  Logos Verlag Berlin GmbH, 2011.

\bibitem{LasotaMackey1994}
{\sc A.~Lasota and M.~C. Mackey}, {\em Chaos, Fractals, and Noise}, Applied
  Mathematical Sciences, Springer Verlag, 1994.

\bibitem{Lee2006}
{\sc John~M Lee}, {\em Introduction to smooth manifolds}, vol.~218 of Graduate
  Texts in Mathematics, Springer-Verlag, 2nd~ed., 2006.

\bibitem{Leveque1992}
{\sc Randall~J. LeVeque}, {\em Numerical methods for conservation laws},
  Lectures in Mathematics ETH Z\"urich, Birkh\"auser Verlag, Basel, second~ed.,
  1992.

\bibitem{MajdaBertozzi2002}
{\sc Andrew~J. Majda and Andrea~L. Bertozzi}, {\em Vorticity and incompressible
  flow}, vol.~27 of Cambridge Texts in Applied Mathematics, Cambridge
  University Press, Cambridge, 2002.

\bibitem{Meyer1998}
{\sc P.~A. Meyer}, {\em Quantum probability for probabilists}, vol.~1538 of
  Lecture Notes in Mathematics, Springer-Verlag, Berlin, 1993.

\bibitem{Mezic2005}
{\sc I.~Mezi{\'c}}, {\em Spectral properties of dynamical systems, model
  reduction and decompositions}, Nonlinear Dynam., 41 (2005), pp.~309--325.

\bibitem{Oksendal2003}
{\sc Bernt {\O}ksendal}, {\em Stochastic differential equations}, Universitext,
  Springer-Verlag, Berlin, sixth~ed., 2003.
\newblock An introduction with applications.

\bibitem{Parthasarathy2012}
{\sc Kalyanapuram~R Parthasarathy}, {\em An introduction to quantum stochastic
  calculus}, Springer Science \& Business Media, 2012.

\bibitem{Pasciak1980}
{\sc Joseph~E Pasciak}, {\em Spectral and pseudospectral methods for advection
  equations}, Mathematics of Computation, 35 (1980), pp.~1081--1092.

\bibitem{NumericalRecipes}
{\sc William~H. Press, Saul~A. Teukolsky, William~T. Vetterling, and Brian~P.
  Flannery}, {\em Numerical recipes}, Cambridge University Press, Cambridge,
  third~ed., 2007.
\newblock The art of scientific computing.

\bibitem{Rowley2009}
{\sc Clarence~W Rowley, Igor Mezi{\'c}, Shervin Bagheri, Philipp Schlatter, and
  Dan~S Henningson}, {\em Spectral analysis of nonlinear flows}, Journal of
  Fluid Mechanics, 641 (2009), pp.~115--127.

\bibitem{Rudin1991}
{\sc Walter Rudin}, {\em Functional analysis. international series in pure and
  applied mathematics}, 1991.

\bibitem{Sakai1996}
{\sc Takashi Sakai}, {\em Riemannian geometry}, vol.~149 of Translations of
  Mathematical Monographs, American Mathematical Society, Providence, RI, 1996.
\newblock Translated from the 1992 Japanese original by the author.

\bibitem{Schmid2010}
{\sc P~J Schmid}, {\em Dynamic mode decomposition of numerical and experimental
  data}, Journal of Fluid Mechanics, 656 (2010), pp.~5--28.

\bibitem{Stewart2015}
{\sc Ian Stewart}, {\em Galois theory}, CRC Press, Boca Raton, FL, fourth~ed.,
  2015.

\bibitem{Taylor1974}
{\sc Michael Taylor}, {\em Pseudo differential operators}, Lecture Notes in
  Mathematics, Vol. 416, Springer-Verlag, Berlin-New York, 1974.

\bibitem{Truesdell1991}
{\sc C.~Truesdell}, {\em A First Course in Rational Continuum Mechanics:
  General Concepts}, Academic Press, 1991.

\bibitem{Ulam1947}
{\sc Stanislaw~M Ulam and John Von~Neumann}, {\em On combination of stochastic
  and deterministic processes-preliminary report}, Bulletin of the American
  Mathematical Society, 53 (1947), pp.~1120--1120.

\bibitem{VershilGelfandGraev1975}
{\sc A.~M. Vershik, I.~M. Gelfand, and M.~I. Graev}, {\em Representations of
  the group of diffeomorphisms}, Uspehi Mat. Nauk, 30 (1975), pp.~1--50.

\end{thebibliography}
\end{document}